\documentclass[11pt]{article} 
\usepackage[a4paper, margin=2cm]{geometry} 

\usepackage[nolists,nomarkers,tablesfirst]{endfloat} 
 
\usepackage{amssymb,amsmath,amsfonts} 
\usepackage{amsthm} 
\usepackage{esvect,bm} 
\usepackage{graphicx,caption} 
\usepackage[numbers]{natbib} 
\RequirePackage[colorlinks,citecolor=blue,urlcolor=blue]{hyperref}

\newcommand{\e}{{\rm e}} 
\newcommand{\im}{{\rm i}} 
\newcommand{\E}{{\mathbb E}} 
 
\newcommand{\Q}{{\mathbb Q}} 
\newcommand{\C}{{\mathbb C}} 
\newcommand{\R}{{\mathbb R}} 
\newcommand{\N}{{\mathbb N}}

\newcommand{\Ecal}{{\mathcal E}} 
\newcommand{\Fcal}{{\mathcal F}} 
\newcommand{\Gcal}{{\mathcal G}} 
\newcommand{\Hcal}{{\mathcal H}}

\newcommand{\1}{\mathbf{1}} 
 
\newcommand{\dotprod}[2]{(#1,#2)_w} 
 
\DeclareMathOperator{\tr}{Tr} 
\DeclareMathOperator{\diag}{diag} 
\newcommand{\Pol}{{\rm Pol}} 
 
\newtheorem{proposition}{Proposition}[section] 
\newtheorem{lemma}[proposition]{Lemma} 
\newtheorem{theorem}[proposition]{Theorem} 
 
\newtheorem{corollary}[proposition]{Corollary} 
\newtheorem{remark}[proposition]{Remark}

 \def\JELname{{\bfseries JEL Classification}\enspace} 
      \def\JEL#1{\par\addvspace\medskipamount{\rightskip=0pt plus1cm 
      \def\and{\ifhmode\unskip\nobreak\fi\ $\cdot$ 
      }\noindent\JELname\ignorespaces#1\par}} 
 
\title{The Jacobi Stochastic Volatility Model\footnote{We thank the participants at the 2014 Stochastic Analysis in Finance and Insurance Conference in Oberwolfach, the 2015 AMaMeF and Swissquote Conference in Lausanne, the 2016 ICMS Workshop in Edinburgh, and the seminar at Mannheim Mathematics Department, as well as Stefano De Marco, Julien Hugonnier, Wahid Khosrawi-Sardroudi, Martin Larsson, and Peter Tankov for their comments. 
We thank an anonymous referee, an anonymous associate editor, and Chris Rogers (co-editor) for their careful reading of the manuscript and suggestions. 
The research leading to these results has received funding from the European Research Council under the European Union's Seventh Framework Programme (FP/2007-2013) ERC Grant Agreement n.\ 307465-POLYTE. The research of Sergio Pulido benefited from the support of the Chair Markets in Transition (F\'ed\'eration Bancaire Fran\c caise) and the project ANR 11-LABX-0019.}} 
 
\author{Damien Ackerer\footnote{Swissquote Bank, Gland, Switzerland. E-mail:  \href{mailto:damien.ackerer@swissquote.ch}{damien.ackerer@swissquote.ch}}  
\and Damir Filipovi\'c\footnote{EPFL and Swiss Finance Institute, Lausanne, 
Switzerland. E-mail: \href{mailto:damir.filipovic@epfl.ch}{damir.filipovic@epfl.ch}}  
\and Sergio Pulido\footnote{Laboratoire de Math\'ematiques et Mod\'elisation d'\'Evry (LaMME), Universit\'e d'\'Evry-Val-d'Essonne, ENSIIE, Universit\'e Paris-Saclay, \'Evry, France. E-mail: \href{mailto:sergio.pulidonino@ensiie.fr}{sergio.pulidonino@ensiie.fr}} 
} 
 
\date{February 20, 2018} 
 
\begin{document}

\maketitle 
 
\begin{center} 
\large forthcoming in \textit{Finance and Stochastics} 
\end{center} 
 
\bigskip 
 
\begin{abstract} 
We introduce a novel stochastic volatility model where the squared volatility of the asset return follows a Jacobi process. It contains the Heston model as a limit case. We show that the joint density of any finite sequence of log returns admits a Gram--Charlier A expansion with closed-form coefficients. We derive closed-form series representations for option prices whose discounted payoffs are functions of the asset price trajectory at finitely many time points. This includes European call, put, and digital options, forward start options, and can be applied to discretely monitored Asian options. In a numerical analysis we show that option prices can be accurately and efficiently approximated by truncating their series representations.
\end{abstract} 
 
\smallskip 
\noindent{\bf Keywords:} Jacobi process, option pricing, polynomial model, stochastic volatility\\ 
 
\smallskip 
\noindent{\bf MSC (2010):} 91B25, 91B70, 91G20, 91G60\\ 
 
\smallskip 
\noindent{\bf JEL Classification:} C32, G12, G13 

\newpage 
\section{Introduction}
\label{sec:intro}

Stochastic volatility models for asset returns are popular among practitioners and academics because they can generate implied volatility surfaces that match option price data to a great extent. They resolve the shortcomings of the Black--Scholes model~\cite{black1973pricing}, where the return has constant volatility. Among the the most widely used stochastic volatility models is the Heston model~\cite{heston1993closed}, where the squared volatility of the return follows an affine square-root diffusion. European call and put option prices in the Heston model can be computed using Fourier transform techniques, which have their numerical strengths and limitations; see for instance~\citet{carr1999option}, \citet{bakshi2000spanning}, \citet{duffie2003}, \citet{Fang2009}, and \citet{chen2012generalized}.

In this paper we introduce a novel stochastic volatility model, henceforth the Jacobi model, where the squared volatility $V_t$ of the log price $X_t$ follows a Jacobi process with values in some compact interval $[v_{min},v_{max}]$. As a consequence, Black--Scholes implied volatilities are bounded from below and above by $\sqrt{v_{min}}$ and $\sqrt{v_{max}}$. The Jacobi model $(V_t,X_t)$ belongs to the class of polynomial diffusions studied in~\citet{eri_pis_11}, \citet{cuchiero2012polpres}, and~\citet{filipovic2015polpres}. It includes the Black--Scholes model as a special case and converges weakly in the path space to the Heston model for $v_{max}\to\infty$ and $v_{min}=0$.

We show that the log price $X_T$ has a density $g$ that admits a Gram--Charlier A series expansion with respect to any Gaussian density $w$ with sufficiently large variance. More specifically, the likelihood ratio function $\ell =g/w $ lies in the weighted space $L^2_w$ of square-integrable functions with respect to $w$. Hence it can be expanded as a generalized Fourier series with respect to the corresponding orthonormal basis of Hermite polynomials $H_n(X_0), \, n \ge0$. Boundedness of $V_t$ is essential, as the Gram--Charlier A series of $g$ does not converge for the Heston model. 

The Fourier coefficients $\ell_n$ of $\ell$ are given by the Hermite moments of $X_T$, $\ell_n=\E[H_n(X_T)]$. Due to the polynomial property of $(V_t,X_t)$ the Hermite moments admit easy to compute closed-form expressions. This renders the Jacobi model extremely useful for option pricing. Indeed, the price $\pi_f$ of a European option with discounted payoff $f(X_T)$ for some function $f$ in $L^2_w$ is given by the $L^2_w$-scalar product $\pi_f=\dotprod{f}{\ell}=\sum_{n\ge 0} f_n \ell_n$. The Fourier coefficients $f_n$ of $f$ are given in closed-form for many important examples, including European call, put, and digital options. We approximate $\pi_f$ by truncating the price series at some finite order $N$ and derive truncation error bounds.

We extend our approach to price exotic options whose discounted payoff $f(Y)$ depends on a finite sequence of log returns $Y_i=(X_{t_i}-X_{t_{i-1}}), \,1\le i\le d$. As in the univariate case we derive the Gram--Charlier A series expansion of the density $g$ of $Y$ with respect to a properly chosen multivariate Gaussian density $w$. Assuming that $f$ lies in $L^2_w$ the option price $\pi_f$ is obtained as a series representation of the $L^2_w$-scalar product in terms of the Fourier coefficients of $f$ and of the likelihood ratio function $\ell=g/w$ given by the corresponding Hermite moments of $Y$. Due to the polynomial property of $(V_t,X_t)$ the Hermite moments admit closed-form expressions, which can be efficiently computed. The Fourier coefficients of $f$ are given in closed-form for various examples, including forward start options and forward start options on the underlying return.

Consequently, the pricing of these options is extremely efficient and does not require any numerical integration. Even when the Fourier coefficients of the discounted payoff function $f$ are not available in closed-form, e.g.\ for Asian options, prices can be approximated by integrating $f$ with respect to the Gram--Charlier A density approximation of $g$. This boils down to a numerically feasible integration with respect to the underlying Gaussian density $w$. In a numerical analysis we find that the price approximations become accurate within short CPU time. This is in contrast to the Heston model, for which the pricing of exotic options using Fourier transform techniques is cumbersome and creates numerical difficulties as reported in~\citet{kruse2005pricing}, \citet{kahl2005not}, and \citet{albrecher2006little}. In view of this, the Jacobi model also provides a viable alternative to approximate option prices in the Heston model.

The Jacobi process, also known as Wright--Fisher diffusion, was originally used to model gene frequencies;~see for instance~\citet{karlin1981second} and~\citet{ethierkurtz86}. More recently, the Jacobi process has also been used to model financial factors. For example, \citet{delbaen2002interest} model interest rates by the Jacobi process and study moment-based techniques for pricing bonds. In their framework, bond prices admit a series representation in terms of Jacobi polynomials. These polynomials constitute an orthonormal basis of eigenfunctions of the infinitesimal generator and the stationary beta distribution of the Jacobi process; additional properties of the Jacobi process can be found in~\citet{mazet97} and~\citet{demni2009large}. 
The multivariate Jacobi process has been studied in~\citet{gourieroux2006multivariate} where the authors suggest it to model smooth regime shifts and give an example of stochastic volatility model without leverage effect.
The Jacobi process has been also applied recently to model stochastic correlation matrices in~\citet{AhdidaAlfonsi13} and credit default swap indexes in~\citet{bernisscotti16}.

Density series expansion approaches to option pricing were pioneered by \citet{jarrow1982approximate}. They propose expansions of option prices that can be interpreted as corrections to the pricing biases of the Black--Scholes formula. They study density expansions for the law of underlying prices, not the log returns, and express them in terms of cumulants. Evidently, since convergence cannot be guaranteed in general, their study is based on strong assumptions that imply convergence. In subsequent work, \citet{corrado1996skewness} and \citet{Corrado97impliedvolatility} study Gram--Charlier A expansions of 4$^\text{th}$ order for options on the S\&P 500 index. These expansions contain skewness and kurtosis adjustments to option prices and implied volatility with respect to the Black--Scholes formula. The skewness and kurtosis correction terms, which depend on the cumulants of $3^{\text{rd}}$ and $4^{\text{th}}$ order, are estimated from data. Due to the instability of the estimation procedure, higher order expansions are not studied. Similar studies on the biases of the Black--Scholes formula using Gram--Charlier A expansions include \citet{backus2004} and~\citet{limelnikov12}. More recently, \citet{drimus2013closed} and~\citet{necula2015general} study related expansions with Hermite polynomials. In order to guarantee the convergence of the Gram--Charlier A expansion for a general class of diffusions,~\citet{Ait-Sahalia2002} develop a technique based on a suitable change of measure. As pointed out in~\citet{filipovic2013density}, in the affine and polynomial settings this change of measure usually destroys the polynomial property and the ability to calculate moments efficiently. More recently a similar study has been carried out by~\citet{Xiu2014}. Gram--Charlier A expansions, under a change of measure, are also mentioned in the work of~\citet{MadanMilne94}, and the subsequent studies of~\citet{Longstaff95}, \citet{AbkenMadanRamamurtie96} and~\citet{brenner-eom97}, where they use these moment expansions to test the martingale property with financial data and hence the validity of a given model.

Our paper is similar to \citet{filipovic2013density} in that it provides a generic framework to perform density expansions using orthonormal polynomial basis in weighted $L^2$ spaces for affine models. They show that a bilateral Gamma density weight works for the Heston model. However, that expansion is numerically more cumbersome than the Gram--Charlier A expansion because the orthonormal basis of polynomials has to be constructed using Gram--Schmidt orthogonalization. In a related paper \citet{heston2016spanning} study polynomial expansions of prices in the Heston, Hull-White and Variance Gamma models using logistic weight functions.

The remainder of the paper is as follows. In Section~\ref{sec:model} we introduce the Jacobi stochastic volatility model. In Section~\ref{sec:optionprice} we derive European option prices based on the Gram--Charlier A series expansion. In Section~\ref{sec:exotic} we extend this to the multivariate case, which forms the basis for exotic option pricing and contains the European options as special case. In Section~\ref{sec:numeric} we give some numerical examples. In Section~\ref{sec:conclusions} we conclude. In Appendix~\ref{appHm} we explain how to efficiently compute the Hermite moments. All proofs are collected in Appendix~\ref{a:proofs}.


\section{Model specification}
\label{sec:model}

We study a stochastic volatility model where the squared volatility follows a Jacobi process. Fix some real parameters $0\leq v_{min}< v_{max}$, and define the quadratic function
	\[ Q(v) = \frac{(v-v_{min})(v_{max}-v)}{(\sqrt{v_{max}}-\sqrt{v_{min}})^2} .\]
Inspection shows that $v\ge Q(v)$, with equality if and only if $v=\sqrt{v_{min}v_{max}}$, and $Q(v)\ge 0$ for all $v\in [v_{min},v_{max}]$, see Figure~\ref{fig:varcor} for an illustration.

We consider the diffusion process $(V_t,X_t)$ given by
	\begin{equation}\label{sdeXV}
	\begin{aligned}
	dV_t &= \kappa(\theta   - V_t)\,dt + \sigma\sqrt{Q(V_t)}\,dW_{1t}\\
	dX_t &= \left(r-\delta-V_t/2\right)dt + \rho\, \sqrt{Q(V_t)}\,dW_{1t}  + \sqrt{V_t -\rho^2 \,Q(V_t)}\,dW_{2t}
	\end{aligned}
	\end{equation}
for real parameters $\kappa> 0$, $\theta\in (v_{min},v_{max}]$, $\sigma> 0$, interest rate $r$, dividend yield $\delta$, and $\rho\in [-1,1]$, and where $W_{1t}$ and $W_{2t}$ are independent standard Brownian motions on some filtered probability space $(\Omega,\Fcal,\Fcal_t,\Q)$. The following theorem shows that $(V_t,X_t)$ is well defined.

\begin{theorem}\label{thmexiuni}
For any deterministic initial state $(V_0,X_0)\in  [v_{min},v_{max}]\times \R$ there exists a unique solution $(V_t,X_t)$ of \eqref{sdeXV} taking values in $[v_{min},v_{max}]\times \R$ and satisfying
\begin{equation}\label{Vtvstart}
  \int_0^\infty \1_{\{V_t=v\}} dt = 0\quad\text{for all $v\in [v_{min},v_{max})$.}
\end{equation}
Moreover, $V_t$ takes values in $(v_{min},v_{max})$ if and only if $V_0\in  (v_{min},v_{max})$ and
	\begin{equation}\label{eq:boundary}
	\frac{\sigma^2(v_{max}-v_{min})}{(\sqrt{v_{max}}-\sqrt{v_{min}})^2}\leq 2\kappa\min\{v_{max}-\theta, \theta-v_{min}\}.
	\end{equation}
\end{theorem}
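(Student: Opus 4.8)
The plan is to treat the autonomous equation for $V$ first and then recover $X$ by direct integration. Since $V_t\in[v_{min},v_{max}]$ gives $V_t-\rho^2 Q(V_t)\ge V_t-Q(V_t)\ge0$ (using $\rho^2\le1$, $Q\ge0$, and $v\ge Q(v)$ on the interval), once a solution $V$ is in hand the second line of \eqref{sdeXV} defines $X_t$ unambiguously as a sum of a Lebesgue integral and two Itô integrals with bounded continuous integrands against $W_1$ and the independent $W_2$; existence, uniqueness, and $\R$-valuedness of $X$ are then immediate. For $V$ I would extend the coefficients to all of $\R$ by replacing $Q$ with $Q^+=\max(Q,0)$ while keeping the affine drift $\kappa(\theta-v)$. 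The drift is Lipschitz and, since $Q^+$ is globally Lipschitz and nonnegative, $v\mapsto\sigma\sqrt{Q^+(v)}$ is globally $1/2$-H\"older; weak existence follows from continuity and linear growth, and pathwise uniqueness from the Yamada--Watanabe condition ($\int_{0+}u^{-1}\,du=\infty$). Yamada--Watanabe then yields a unique strong solution on $\R$. One may alternatively invoke the existence and uniqueness theory for polynomial diffusions on compact state spaces in the cited references.

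Next I would show that $[v_{min},v_{max}]$ is invariant, so that the extended solution in fact solves \eqref{sdeXV}. Apply It\^o's formula to the $C^1$ function $\psi(v)=((v-v_{max})^+)^2$, whose a.e.\ second derivative is $2\,\1_{\{v>v_{max}\}}$ and for which no local-time term appears since $\psi'$ is absolutely continuous. On $\{V_s>v_{max}\}$ one has $Q^+(V_s)=0$, so the martingale part and the second-order term vanish, while the drift contribution $2(V_s-v_{max})^+\kappa(\theta-V_s)$ is $\le0$ because $V_s>v_{max}\ge\theta$. Hence $\psi(V_t)\le\psi(V_0)=0$, forcing $V_t\le v_{max}$. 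The mirror function $((v_{min}-v)^+)^2$, together with $\theta>v_{min}$, gives $V_t\ge v_{min}$. On $[v_{min},v_{max}]$ the extended and original coefficients coincide, so the constructed process solves \eqref{sdeXV} and inherits uniqueness.

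For the occupation-time identity \eqref{Vtvstart} I would split according to whether $Q(v)$ vanishes. For $v\in(v_{min},v_{max})$ the occupation-times formula gives $\sigma^2 Q(v)\int_0^t\1_{\{V_s=v\}}\,ds=\int_0^t\1_{\{V_s=v\}}\,d\langle V\rangle_s=0$, and $Q(v)>0$ yields the claim. The endpoint $v_{min}$ is delicate because $Q(v_{min})=0$: here I would apply the Tanaka formula to $(V_t-v_{min})^+=V_t-v_{min}$ to obtain $\kappa(\theta-v_{min})\int_0^t\1_{\{V_s=v_{min}\}}\,ds=\tfrac12 L_t^{v_{min}}$, and then show the right local time $L_t^{v_{min}}$ vanishes. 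Since $Q(v_{min})=0$ and $Q$ is Lipschitz, one has $Q(V_s)\le c\,\epsilon$ on $\{v_{min}<V_s<v_{min}+\epsilon\}$, so the defining limit is bounded by $\sigma^2 c\,\mathrm{Leb}\{s\le t:\ v_{min}<V_s<v_{min}+\epsilon\}\to0$ as $\epsilon\downarrow0$. As $\theta>v_{min}$, the occupation time at $v_{min}$ is therefore zero. The endpoint $v_{max}$ is excluded precisely because it becomes absorbing when $\theta=v_{max}$.

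The main obstacle is the sharp boundary characterization, which I would obtain from Feller's test. Writing the scale density $s'(v)=\exp(-\int_{v_0}^v 2\kappa(\theta-u)/(\sigma^2 Q(u))\,du)$ and using $Q(u)\sim c\,(u-v_{min})$ with $c=(v_{max}-v_{min})/(\sqrt{v_{max}}-\sqrt{v_{min}})^2$, one finds $s'(v)\sim(v-v_{min})^{-\beta_{min}}$ near $v_{min}$ with $\beta_{min}=2\kappa(\theta-v_{min})/(\sigma^2 c)$, and symmetrically $\beta_{max}=2\kappa(v_{max}-\theta)/(\sigma^2 c)$ near $v_{max}$. The endpoint $v_{min}$ is unattainable iff $s(v_{min}^+)=-\infty$, i.e.\ iff $\int_{v_{min}}s'(v)\,dv=\infty$, i.e.\ iff $\beta_{min}\ge1$; when $\beta_{min}<1$ the complementary speed-measure integral converges and $v_{min}$ is attainable. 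The same dichotomy at $v_{max}$ gives the threshold $\beta_{max}\ge1$. Translating, both boundaries are unattainable iff $\sigma^2 c\le 2\kappa(\theta-v_{min})$ and $\sigma^2 c\le2\kappa(v_{max}-\theta)$, which is exactly \eqref{eq:boundary}; combined with the trivial requirement $V_0\in(v_{min},v_{max})$ this yields the stated equivalence. The delicate point I expect to guard carefully is confirming that only the leading behaviour $Q(u)\sim c\,(u-v_{min})$ governs attainability, so that the higher-order corrections of $Q$ leave the integrability thresholds unchanged.
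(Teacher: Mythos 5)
Your proposal is correct, and it follows the same skeleton as the paper's proof --- reduce to the autonomous $V$-equation, recover $X$ by integration, get pathwise uniqueness from the Yamada--Watanabe $1/2$-H\"older criterion, and prove \eqref{Vtvstart} by the occupation-times formula in the interior plus a local-time argument at $v_{min}$ --- but you fill in with self-contained classical arguments the three steps the paper outsources to the polynomial-diffusions literature. Where the paper cites Larsson--Pulido (Theorem 2.1) for weak existence of a $[v_{min},v_{max}]$-valued solution, you extend the coefficients to $\R$ via $Q^+=\max(Q,0)$ and prove invariance of the interval with It\^o's formula applied to $\psi(v)=((v-v_{max})^+)^2$ and its mirror; this works because on $\{V_s>v_{max}\}$ both $Q^+$ and hence the martingale and second-order terms vanish while the drift term has the right sign (using $\theta\le v_{max}$, resp.\ $\theta>v_{min}$). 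Where the paper cites Filipovi\'c--Larsson (Theorem 5.3 and Lemma A.1) for the occupation time at $v_{min}$, you give the explicit Tanaka computation $\kappa(\theta-v_{min})\int_0^t\1_{\{V_s=v_{min}\}}\,ds=\tfrac12 L_t^{v_{min}}$ (the stochastic-integral term dies because $Q(v_{min})=0$) and kill the local time via the defining limit and $Q(v)\le c\,(v-v_{min})$ --- which is essentially the content of the cited results, made explicit. Where the paper cites Larsson--Pulido (Proposition 2.2) for the boundary dichotomy \eqref{eq:boundary}, you derive it directly from Feller's test with exponents $\beta_{min}=2\kappa(\theta-v_{min})/(\sigma^2c)$ and $\beta_{max}=2\kappa(v_{max}-\theta)/(\sigma^2c)$, $c=(v_{max}-v_{min})/(\sqrt{v_{max}}-\sqrt{v_{min}})^2$. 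Two small points of care, both repairable within your sketch: the phrase ``unattainable iff $s(v_{min}^+)=-\infty$'' is not Feller's test verbatim (infinite scale is sufficient but not in general necessary for inaccessibility), but you implicitly fix this by checking that for $\beta_{min}<1$ the integral $\int(s(x)-s(v_{min}))\,m(dx)\sim\int(x-v_{min})^{1-\beta_{min}}(x-v_{min})^{\beta_{min}-1}\,dx$ converges, so the endpoint is indeed attainable, and the threshold $\beta\ge1$ (including equality) matches \eqref{eq:boundary}; and the concern you flag about higher-order corrections of $Q$ is routine, since writing $Q(u)=c(u-v_{min})h(u)$ with $h(v_{min})=1$ shows the difference between the exact and leading integrands in the scale density is bounded near the endpoint, so $s'$ is comparable to the pure power up to constants and the integrability thresholds are unchanged. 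What your route buys is a fully self-contained proof from textbook tools (Skorokhod existence, Yamada--Watanabe, Tanaka, Feller's test); what the paper's route buys is brevity and a framework (polynomial diffusions on compact state spaces) that generalizes beyond this one-dimensional case.
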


\begin{remark}
Property~\eqref{Vtvstart} implies that no state $v\in [v_{min},v_{max})$ is absorbing. It also implies that conditional on $\{ V_{t},\, t\in [0,T]\}$, the increments $X_{t_i}-X_{t_{i-1}}$ are non-degenerate Gaussian for any $t_{i-1}<t_i\le T$ as will be shown in the proof of Theorem~\ref{thmdensitymd}. Taking $v_{min}=0$ and the limit as $v_{max}\to\infty$, condition~\eqref{eq:boundary} coincides with the known condition that precludes the zero lower bound for the CIR process, $\sigma^2\leq 2\kappa\theta$.
\end{remark}

We specify the price of a traded asset by $S_t = \e^{X_t}$. Then $\sqrt{V_t}$ is the stochastic volatility of the asset return, $d\langle X,X \rangle_t = V_t\,dt $. The cumulative dividend discounted price process $\e^{-(r-\delta)t} S_t$ is a martingale. In other words, $\Q$ is a risk-neutral measure. The parameter $\rho$ tunes the instantaneous correlation between the asset return and the squared volatility,
	\[\frac{d\langle V,X\rangle_t}{\sqrt{d\langle V,V\rangle_t}\sqrt{d\langle X,X\rangle_t}} = \rho\,\sqrt{Q(V_t)/V_t}.\]
This correlation is equal to $\rho$ if $V_t=\sqrt{v_{min}v_{max}}$, see Figure~\ref{fig:varcor}. In general, we have $\sqrt{Q(V_t)/V_t}\le 1$. Empirical evidences suggest that $\rho$ is negative when $S_t$ is a stock price or index. This is commonly referred as the leverage effect, that is, an increase in volatility often goes along with a decrease in asset value.

Since the instantaneous squared volatility $V_t$ follows a bounded Jacobi process on the interval $[v_{min},v_{max}]$, we refer to \eqref{sdeXV} as the \textit{Jacobi model.} For $V_{0}=\theta=v_{max}$ we have constant volatility $V_t=V_0$ for all $t\geq 0$ and we obtain the Black--Scholes model
	\begin{equation}\label{BSmodel}
	dX_t= \left(r-\delta-V_0/2\right)dt+ \sqrt{V_0}\,dW_{2t}.
	\end{equation}
For $v_{min}= 0$ and the limit $v_{max}\to\infty$ we have $Q(v)\to v$, and we formally obtain the Heston model as limit case of \eqref{sdeXV},
	\begin{equation}\label{sdeHeston}
	\begin{aligned}
	dV_t &= \kappa(\theta   - V_t)\,dt + \sigma\sqrt{V_t}\,dW_{1t}\\
	dX_t &= \left(r-\delta-V_t/2\right)dt + \sqrt{V_t}\left(\rho \,dW_{1t}  + \sqrt{(1-\rho^2)}\,dW_{2t}\right).
	\end{aligned}
	\end{equation}
	
In fact, the Jacobi model \eqref{sdeXV} is robust with respect to perturbations, or mis-specifications, of the model parameters $v_{min}$, $v_{max}$ and initial state $(V_0,X_0)$. Specifically, the following theorem shows that the diffusion \eqref{sdeXV} is weakly continuous in the space of continuous paths with respect to  $v_{min}$, $v_{max}$ and $(V_0,X_0)$. In particular, the Heston model \eqref{sdeHeston} is indeed a limit case of our model~\eqref{sdeXV}.

Consider a sequence of parameters $0\le v_{min}^{(n)}<v_{max}^{(n)} $ and deterministic initial states $(V_0^{(n)},X_0^{(n)}) \in [v_{min}^{(n)},v_{max}^{(n)}]\times\R$  converging to ${0\le v_{min}<v_{max}\le \infty}$ and $(V_0,X_0)\in [0,\infty)\times\R$ as $n\to\infty$, respectively. We denote by $(V_t^{(n)},X_t^{(n)})$ and $(V_t,X_t)$ the respective solutions of~\eqref{sdeXV}, or \eqref{sdeHeston} if $v_{max}=\infty$. Here is our main convergence result.

\begin{theorem}\label{thmconv}
The sequence of diffusions $(V_t^{(n)},X_t^{(n)})$ converges weakly in the path space to $(V_t,X_t)$ as $n\to\infty$.
\end{theorem}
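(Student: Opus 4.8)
The plan is to establish weak convergence by the classical two-step program for diffusion approximations: first prove that the laws of $(V^{(n)},X^{(n)})$ form a tight sequence in the space $C([0,T],\R^2)$ of continuous paths for every finite horizon $T$, and then show that every weak limit point solves the martingale problem associated with the limiting generator, whose solution is unique. Tightness guarantees the existence of convergent subsequences, and uniqueness of the limiting martingale problem forces every such subsequence to share the same limit, namely the law of $(V,X)$; hence the whole sequence converges. This is the Stroock--Varadhan route to weak convergence of Markov processes.

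The first and, I expect, most delicate step is \emph{tightness}. The difficulty is that in the Heston limit $v_{max}^{(n)}\to\infty$, so the state space of $V^{(n)}$ is not uniformly bounded and one must produce moment bounds that are uniform in $n$. Here I would exploit two structural features of \eqref{sdeXV}: the mean-reverting drift $\kappa(\theta-V_t)$, and the inequality $Q(v)\le v$ noted after the definition of $Q$, which bounds the squared-volatility diffusion coefficient by $\sigma^2 V_t$, exactly as in a CIR/Heston square-root diffusion. Applying It\^o's formula to $V_t^p$ (or to $\e^{\lambda V_t}$), together with Gronwall's lemma, should then yield $\sup_n \E[\sup_{t\le T}(V_t^{(n)})^p]<\infty$ for suitable $p$. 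These bounds also control $X^{(n)}$, since $d\langle X^{(n)},X^{(n)}\rangle_t=V_t^{(n)}\,dt$. With uniform moment bounds in hand I would verify the Kolmogorov--Chentsov criterion, estimating $\E[|V_t^{(n)}-V_s^{(n)}|^4]$ and $\E[|X_t^{(n)}-X_s^{(n)}|^4]$ by $C|t-s|^2$ with $C$ independent of $n$ (using the Burkholder--Davis--Gundy inequality on the martingale parts), which gives tightness of both components and hence of the pair.

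For the identification step I would check that the generators converge: writing $\Gcal^{(n)}$ for the generator of the $n$-th diffusion and $\Gcal$ for that of the limit, the coefficients of $\Gcal^{(n)}$ converge locally uniformly to those of $\Gcal$ on test functions $f\in C_c^\infty$. The one nontrivial convergence is $Q^{(n)}(v)\to v$ as $v_{min}^{(n)}\to 0$ and $v_{max}^{(n)}\to\infty$, which follows directly from the explicit form of $Q$ since $(\sqrt{v_{max}}-\sqrt{v_{min}})^2\sim v_{max}$ and $(v_{max}-v)\sim v_{max}$; when $v_{max}<\infty$ the coefficients converge by continuity in the parameters and initial state. Given a weak limit point $(\bar V,\bar X)$ along a subsequence, I would pass to the limit in the martingale relation $f(V_t^{(n)},X_t^{(n)})-\int_0^t \Gcal^{(n)}f(V_s^{(n)},X_s^{(n)})\,ds$, using the uniform moment bounds for uniform integrability and the continuity of the coefficients together with the continuous mapping theorem to identify the limit as $f(\bar V_t,\bar X_t)-\int_0^t \Gcal f(\bar V_s,\bar X_s)\,ds$. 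Thus $(\bar V,\bar X)$ solves the martingale problem for $\Gcal$.

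Finally I would invoke uniqueness of the limiting martingale problem. When $v_{max}<\infty$ the limit is again a Jacobi model and uniqueness is the content of Theorem~\ref{thmexiuni}; when $v_{max}=\infty$ the limit is the Heston model \eqref{sdeHeston}, whose martingale problem is well posed by the standard theory of affine diffusions. Uniqueness pins down the law of the limit independently of the subsequence, so the full sequence converges weakly to $(V,X)$. The main obstacle throughout is the tightness step: the square-root diffusion coefficients are only H\"older continuous of order $1/2$, so no Lipschitz stability is available, and the unboundedness of the state space as $v_{max}^{(n)}\to\infty$ must be tamed by $n$-uniform moment estimates that rely crucially on the bound $Q(v)\le v$ and the mean reversion of $V$.
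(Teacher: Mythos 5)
Your proposal is correct, but it takes a genuinely different route from the paper in the identification step, while the tightness step essentially coincides with the paper's own. For tightness, the paper likewise derives the $n$-uniform fourth-moment estimate $\E\big[\|(V_t^{(n)},X_t^{(n)})-(V_s^{(n)},X_s^{(n)})\|^4\big]\le K|t-s|^2$ (citing Karatzas--Shreve, Problem V.3.15, whose constant depends only on linear-growth bounds of the coefficients, uniform in $n$ precisely because $Q(v)\le v$), and then combines Kolmogorov's continuity theorem with a modulus-of-continuity and Chebyshev argument; this matches your Kolmogorov--Chentsov plan. Where you diverge is in identifying the limit: you run the Stroock--Varadhan program (locally uniform convergence $\Gcal^{(n)}f\to\Gcal f$ on $f\in C_c^\infty$, passage to the limit in the martingale relation, then uniqueness of the limiting martingale problem via Theorem~\ref{thmexiuni} for $v_{max}<\infty$ and square-root/affine theory for the Heston limit), whereas the paper never touches the martingale problem. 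Instead it exploits the polynomial property: mixed moments $\E\big[\prod_i p_i(V^{(n)}_{t_i},X^{(n)}_{t_i})\big]$ are written as iterated matrix exponentials $\e^{\Delta t_i\,G^{(n)}}$ acting on coordinate vectors, the generator matrices satisfy $G^{(n)}\to G$ by continuity of $a(v)$ in $(v_{min},v_{max})$, and convergence of finite-dimensional distributions follows from a multivariate method-of-moments lemma (Lemma~\ref{lemweakconvmoments}) together with moment-determinacy of the limiting finite-dimensional laws (Lemma~\ref{lemfindimdet}). What each approach buys: yours is the more classical and avoids the moment-determinacy question entirely --- which is the delicate point in the paper's route, since the Heston marginals have exploding exponential moments and determinacy must be imported from \citet{filipovic2015polpres} --- at the cost of verifying well-posedness of the limiting martingale problem, which is unproblematic here since pathwise uniqueness (Yamada--Watanabe) holds for both the Jacobi and the square-root limits; the paper's route trades that machinery for explicit finite-dimensional computations that come for free from the model's algebraic structure. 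One small point to patch in your write-up: the theorem also allows the mixed regime $v_{min}>0$, $v_{max}=\infty$, in which $Q^{(n)}(v)\to v-v_{min}$ rather than $v$ (the paper notes this inside the proof of its moment-convergence lemma); your argument covers this case verbatim --- the limit is a shifted square-root diffusion, again pathwise unique --- but you only discuss $v_{max}<\infty$ and the pure Heston limit.
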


As the discounted put option payoff function $f_{put}(x) = \e^{-rT} (\e^k - \e^x)^+$ is bounded and continuous on $\R$, it follows from the weak continuity stated in Theorem~\ref{thmconv} that the put option prices based on $(V_t^{(n)},X_t^{(n)})$ converge to the put option price based on the limiting model $(V_t,X_t)$ as $n\to\infty$. The put-call parity, $\pi_{call}  - \pi_{put}  = \e^{-\delta T}S_0-\e^{-rT+k} $, then implies that also call option prices converge as $n\to\infty$. This carries over to more complex path-dependent options with bounded continuous payoff functional.

\subsection*{Polynomial property}

Moments in the Jacobi model~\eqref{sdeXV} are given in closed-form. Indeed, let
	\[ \Gcal f(v,x)=b(v)^\top\nabla f(v,x)+\frac{1}{2}\tr \left(a(v)\nabla^2f(v,x)\right)\]
denote the generator of $(V_t,X_t)$ with drift vector $b(v)$ and the diffusion matrix $a(v)$ given by
	\begin{equation}\label{eqgenerator}
	b(v) =\begin{pmatrix}\kappa(\theta-v)\\r-\delta-v/2\end{pmatrix},\quad  a(v)=\begin{pmatrix} \sigma^2 Q(v) & \rho\sigma Q(v) \\ \rho\sigma Q(v) & v\end{pmatrix}.
	\end{equation}
Observe that $a(v)$ is  continuous in the parameters $v_{min}$, $v_{max}$, so that for $v_{min}=0$ and $v_{max}\to\infty$ we obtain
	\[ a(v)\to \begin{pmatrix} \sigma^2 v & \rho\sigma v \\ \rho\sigma v & v\end{pmatrix},\]
which corresponds to the generator of the Heston model~\eqref{sdeHeston}. Let $\Pol_n$ be the vector space of polynomials in $(v,x)$ of degree less than or equal to $n$. It then follows by inspection that the components of $b(v)$ and $a(v)$ lie in $\Pol_1$ and $\Pol_2$, respectively. As a consequence, $\Gcal$ maps any polynomial of degree $n$ onto a polynomial of degree $n$ or less, $\Gcal\, \Pol_n\subset\Pol_n$, so that $(V_t,X_t)$ is a polynomial diffusion, see~\citet[Lemma~2.2]{filipovic2015polpres}. From this we can easily calculate the conditional moments of $(V_T,X_T)$ as follows. For $N\in\N$, let $M=(N+2)(N+1)/2$ denote the dimension of $\Pol_N$. Let $h_1(v,x),\ldots,h_{M}(v,x)$ be a basis of polynomials of $\Pol_N$ and denote by $G$ the matrix representation of the linear map $\Gcal$ restricted to $\Pol_N$ with respect to this basis.
\begin{theorem}\label{thmoments}
 For any polynomial $p\in\Pol_N$ and $0\le t\le T$ we have
	\[ \E\big[p(V_T,X_T)\bigm| \Fcal_t\big]=\begin{pmatrix}h_1(V_t,X_t) & \cdots & h_{M}(V_t,X_t)\end{pmatrix}\e^{(T-t)G}\vv p \]
where $\vv p\in\R^{M}$ is the coordinate representation of the polynomial $p(v,x)$ with respect to the basis $h_1(v,x),\ldots,h_{M}(v,x)$.
\end{theorem}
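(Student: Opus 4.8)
The plan is to exhibit a closed-form martingale whose terminal value equals $p(V_T,X_T)$ and whose time-$t$ value is precisely the right-hand side of the claimed identity. Since $\Gcal\,\Pol_N\subset\Pol_N$ by the polynomial property established above, the restriction of $\Gcal$ to the finite-dimensional space $\Pol_N$ is a linear endomorphism, represented by the $M\times M$ matrix $G$ in the basis $h_1,\dots,h_M$; this is exactly what makes the matrix exponential $\e^{(T-t)G}$ meaningful. Writing $\vec h(v,x)=(h_1(v,x),\dots,h_M(v,x))$ for the row vector of basis polynomials, I would introduce the candidate
\[
 F(t,v,x)=\vec h(v,x)\,\e^{(T-t)G}\vv p
\]
and aim to show that $M_t:=F(t,V_t,X_t)$ is a true $(\Fcal_t)$-martingale. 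The terminal condition is immediate: $\e^{0}=I$ and $\vec h(v,x)\vv p=p(v,x)$, so $F(T,v,x)=p(v,x)$.

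The drift computation is the algebraic heart of the argument. Applying $\Gcal$ componentwise to $\vec h$ and using the definition of $G$ (namely $\Gcal h_j=\sum_i G_{ij}h_i$, i.e.\ $\Gcal\vec h=\vec h\,G$), and noting that $\e^{(T-t)G}\vv p$ is constant in $(v,x)$, I obtain
\[
 \Gcal F=\vec h\,G\,\e^{(T-t)G}\vv p,\qquad \partial_t F=-\vec h\,G\,\e^{(T-t)G}\vv p,
\]
so that $\partial_t F+\Gcal F\equiv 0$. By It\^o's formula the finite-variation part of $M_t$ then vanishes, leaving $M_t=M_0$ plus the It\^o integral of $\nabla F(s,V_s,X_s)$ against the Brownian motions driving $(V,X)$; in particular $M_t$ is a continuous local martingale.

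The main obstacle is upgrading this local martingale to a genuine martingale, which reduces to an integrability estimate. Because $F$ is polynomial in $(v,x)$ and $V_t$ is confined to the compact interval $[v_{min},v_{max}]$, everything hinges on controlling the polynomial moments of $X_t$. I would establish that $\sup_{t\le T}\E\big[|X_t|^{2k}\big]<\infty$ for every $k\in\N$: the drift $r-\delta-V_t/2$ is bounded and the quadratic variation rate $d\langle X,X\rangle_t=V_t\,dt$ is bounded by $v_{max}$, so $X_t$ is a continuous semimartingale with bounded coefficients and hence has finite moments of all orders, uniformly on $[0,T]$ (e.g.\ via the Burkholder--Davis--Gundy inequality together with a Gronwall argument, or by observing that $X_t$ is conditionally Gaussian given the volatility path). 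These bounds give $\E\int_0^T|\nabla F(s,V_s,X_s)|^2\,\tr a(V_s)\,ds<\infty$, using that $\tr a(v)=\sigma^2 Q(v)+v$ is bounded on $[v_{min},v_{max}]$, so the stochastic integral is a true martingale.

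Finally, the martingale property yields $M_t=\E[M_T\mid\Fcal_t]=\E[p(V_T,X_T)\mid\Fcal_t]$, and since $M_t=\vec h(V_t,X_t)\,\e^{(T-t)G}\vv p$ is exactly the asserted expression, this completes the proof. An essentially equivalent route is to invoke the general moment formula for polynomial diffusions from \citet{filipovic2015polpres} directly, once existence and uniqueness (Theorem~\ref{thmexiuni}) and the property $\Gcal\,\Pol_N\subset\Pol_N$ are in place; the argument above specializes that result and verifies its integrability hypothesis in the present bounded-volatility setting.
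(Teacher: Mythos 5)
Your argument is correct. The comparison with the paper is a little unusual, because the paper offers no self-contained proof at all: its entire proof of Theorem~\ref{thmoments} is the citation \citet[Theorem 3.1]{filipovic2015polpres}. What you have written is essentially a reconstruction of the proof of that cited result, specialized to the Jacobi model: the space-time function $F(t,v,x)=H(v,x)\,\e^{(T-t)G}\vv p$, with $H(v,x)=(h_1(v,x),\dots,h_M(v,x))$, satisfies $\partial_t F+\Gcal F=0$ precisely because $\Gcal H = H G$ under the paper's convention for the matrix representation (your convention matches the paper's usage, e.g.\ in Theorem~\ref{thmhermitemoments}); It\^o's formula then makes $F(t,V_t,X_t)$ a continuous local martingale with the right terminal value, and an integrability estimate upgrades it. The specialization genuinely pays off in that last step: since $V_t$ is confined to the compact interval $[v_{min},v_{max}]$, the log price has bounded drift and quadratic-variation rate at most $v_{max}$, so the Burkholder--Davis--Gundy inequality alone (no Gronwall argument is actually needed) yields $\sup_{t\le T}\E\big[|X_t|^{2k}\big]<\infty$ for every $k$, and the true-martingale property becomes elementary --- whereas the general polynomial-diffusion theorem must establish moment finiteness on unbounded state spaces by a more delicate argument. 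What the paper's one-line citation buys, besides brevity, is scope: the moment formula is later invoked also for the limiting Heston model with $v_{max}=\infty$ (inside the proof of Lemma~\ref{lemfindimmom}), where your compactness-based integrability step does not apply and the general result is doing real work. For the theorem as actually stated (the Jacobi model, $v_{max}<\infty$) your proof is complete and self-contained; a purely cosmetic point is that your symbol $M_t$ collides with the process defined in \eqref{eq:defM_Tnew}.
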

\begin{proof}
See \citet[Theorem 3.1]{filipovic2015polpres}.
\end{proof}
The moment formula in Theorem~\ref{thmoments} is crucial in order to efficiently implement the numerical schemes described below.


\section{European option pricing}
\label{sec:optionprice}

Henceforth we assume that $(V_0,X_0)\in  [v_{min},v_{max}]\times \R$ is a deterministic initial state and fix a finite time horizon $T>0$. We first establish some key properties of the distribution of $X_T$. Denote the quadratic variation of the second martingale component of $X_t$ in~\eqref{sdeXV} by
	\begin{equation}\label{eq:defC_T}
	{C_t}=\int_0^t \left(V_s -\rho^2 Q(V_s)\right) ds.
	\end{equation}
	
The following theorem is a special case of Theorem~\ref{thmdensitymd} below.
\begin{theorem}\label{thmdensity}
Let $\epsilon<1/(2 v_{max} T)$. The distribution of $X_T$ admits a density $g_T(x)$ on $\R$ that satisfies
	\begin{equation}\label{gint}
	\int_\R \e^{\epsilon x^2 } g_T(x)\,dx<\infty .
	\end{equation}
If
	\begin{equation}\label{assint}
	\E\left[  {C_T}^{-1/2-k} \right]<\infty
	\end{equation}
for some $k\in\N_0$ then $g_T(x)$ and $\e^{\epsilon x^2 } g_T(x)$ are uniformly bounded and $g_T(x)$ is $k$-times continuously differentiable on $\R$. A sufficient condition for~\eqref{assint} to hold for any $k\ge 0$ is
	\begin{equation}\label{assint_2}
	v_{min}>0\,\text{ and }\,\rho^2<1.\footnote{We conjecture that \eqref{assint} holds for any $k\ge 0$ also when $v_{min}=0$ (and $\kappa\theta>0$) or $\rho^2=1$. For the Heston model~\eqref{sdeHeston} with $Q(v)=v$ and $\rho^2<1$ the conjecture follows from \citet[Theorem~4.1]{duf_01}.}
	\end{equation}
\end{theorem}

The condition that $\epsilon<1/(2 v_{max} T)$ is sharp for \eqref{gint} to hold. Indeed, consider the Black--Scholes model~\eqref{BSmodel} where $V_t=\theta=v_{max}$ for all $t\ge 0$. Then $X_T$ is Gaussian with variance $C_T=v_{max}T$. Hence the integral in \eqref{gint} is infinite for any $\epsilon\ge 1/(2 v_{max} T)$.

Since any uniformly bounded and integrable function on $\R$ is square integrable on $\R$, as an immediate consequence of Theorem~\ref{thmdensity} we have the following corollary.

\begin{corollary}\label{corL2w}
Assume \eqref{assint} holds for $k=0$. Then
	\begin{equation}\label{eq:ginL2}
	\int_\R \frac{g_T(x)^2}{w(x)}\,dx <\infty
	\end{equation}
for any Gaussian density $w(x)$ with variance $\sigma_w^2$ satisfying
	\begin{equation}\label{sicon}
	\sigma_w^2 >\frac{v_{max}T}{2}.
	\end{equation}
\end{corollary}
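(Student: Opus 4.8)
The goal is to show that the density $g_T$ lies in the weighted space $L^2_w$, i.e.\ that $\int_\R g_T(x)^2/w(x)\,dx<\infty$. The plan is to dominate the integrand pointwise by a genuine Gaussian whose exponent has a \emph{strictly negative} quadratic coefficient, and then conclude by convergence of the resulting Gaussian integral. Everything rests on the tail estimate for $g_T$ already furnished by Theorem~\ref{thmdensity}; the only work is to match up the exponents correctly. Equivalently, this is the weighted analogue of the elementary fact quoted before the statement, that a uniformly bounded and integrable function is square integrable.

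First I would exploit the freedom in the choice of $\epsilon$. Because $\sigma_w^2>v_{max}T/2$ by~\eqref{sicon}, we have $1/(4\sigma_w^2)<1/(2v_{max}T)$, so the open interval $\bigl(1/(4\sigma_w^2),\,1/(2v_{max}T)\bigr)$ is nonempty; fix any $\epsilon$ in it. Since~\eqref{assint} is assumed to hold for $k=0$ and $\epsilon<1/(2v_{max}T)$, Theorem~\ref{thmdensity} guarantees that $\e^{\epsilon x^2}g_T(x)$ is uniformly bounded, say by a constant $C$. This yields the pointwise tail bound $g_T(x)\le C\,\e^{-\epsilon x^2}$ for all $x\in\R$.

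Next I would substitute this bound into the target integral. Writing the Gaussian density as $w(x)=(2\pi\sigma_w^2)^{-1/2}\e^{-(x-\mu_w)^2/(2\sigma_w^2)}$ for its mean $\mu_w$, whose value is immaterial because only the variance controls the tail behaviour, one obtains
\[
\frac{g_T(x)^2}{w(x)}\le C^2\sqrt{2\pi\sigma_w^2}\,
\exp\!\Bigl(-2\epsilon x^2+\frac{(x-\mu_w)^2}{2\sigma_w^2}\Bigr).
\]
The coefficient of $x^2$ in the exponent equals $\tfrac{1}{2\sigma_w^2}-2\epsilon$, which is strictly negative precisely because $\epsilon>1/(4\sigma_w^2)$. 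Hence the right-hand side is an integrable Gaussian, and~\eqref{eq:ginL2} follows.

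The one point I would treat most carefully is the exponent bookkeeping that keeps the admissible interval for $\epsilon$ nonempty. Squaring the tail bound $g_T\le C\e^{-\epsilon x^2}$ doubles the effective decay rate to $2\epsilon$, and this factor of two is exactly what converts the ceiling $\epsilon<1/(2v_{max}T)$ required by Theorem~\ref{thmdensity} into the matching floor $\sigma_w^2>v_{max}T/2$ required for integrability. This also explains why the threshold in~\eqref{sicon} is sharp, mirroring the sharpness of the bound $\epsilon<1/(2v_{max}T)$ recorded after Theorem~\ref{thmdensity}: at the boundary $\sigma_w^2=v_{max}T/2$ the Black--Scholes case shows $g_T$ fails to be in $L^2_w$.
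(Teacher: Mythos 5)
Your proof is correct and takes essentially the same route the paper intends: the corollary is presented there as an immediate consequence of Theorem~\ref{thmdensity}, obtained by choosing $\epsilon\in\bigl(1/(4\sigma_w^2),\,1/(2v_{max}T)\bigr)$ --- nonempty precisely by \eqref{sicon} --- and dominating $g_T(x)^2/w(x)$ through the uniform bound on $\e^{\epsilon x^2}g_T(x)$, which is exactly your exponent bookkeeping with the factor of two from squaring. Your pointwise bound $g_T(x)\le C\,\e^{-\epsilon x^2}$ followed by Gaussian domination is the same argument the paper's one-line remark (a uniformly bounded and integrable function is square integrable) encapsulates, and your closing sharpness observation via the Black--Scholes case mirrors the paper's own discussion.
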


\begin{remark}\label{remHESTON}
It follows from the proof that the statements of Theorem~\ref{thmdensity} also hold for the Heston model~\eqref{sdeHeston} with $Q(v)=v$ and $\epsilon = 0$. However, the Heston model does not satisfy \eqref{gint} for any $\epsilon>0$. Indeed, otherwise its moment generating function
\begin{equation}\label{eqmmf}
\widehat{g_T}(z)=\int_\R \e^{ zx}g_T(x)\,dx
\end{equation}
would extend to an entire function in $z\in\C$. But it is well known that $\widehat{g_T}(z)$ becomes infinite for large enough $z\in\R$, see~\citet{andersen2007moment}. As a consequence, the Heston model does not satisfy~\eqref{eq:ginL2} for any finite $\sigma_w$. Indeed, by the Cauchy-Schwarz inequality,~\eqref{eq:ginL2} implies~\eqref{gint} for any $\epsilon<1/(4\sigma_w^2)$.
\end{remark}

We now compute the price at time $t=0$ of a European claim with discounted payoff $f(X_T)$ at expiry date $T>0$. We henceforth assume that \eqref{assint} holds with $k=0$, and we let $w(x)$ be a Gaussian density with mean $\mu_w$ and variance $\sigma_w^2$ satisfying \eqref{sicon}. We define the weighted Lebesgue space
	\[ L^2_w=\left\{ f(x) : \| f\|_w^2 = \int_\R f(x)^2 \,w(x)dx<\infty\right\},\]
which is a Hilbert space with scalar product
	\[ \dotprod{f}{g} = \int_\R f(x)g(x)\,w(x)dx .\]
The space $L^2_w$ admits the orthonormal basis of generalized Hermite polynomials $H_n(x)$, $n\ge 0$, given by
	\begin{equation}\label{HHcal}
 	H_n(x) = \frac{1}{\sqrt{n!}} \Hcal_n \left(\frac{x-\mu_w}{\sigma_w}\right)
	\end{equation}
where $\Hcal_n(x)$ are the standard Hermite polynomials defined by
\begin{equation}\label{PHdef}
 \Hcal_n(x)=(-1)^n\e^{\frac{x^2}{2}}\frac{d^n}{d x^n}\e^{-\frac{x^2}{2}},
\end{equation}
see \citet[Section XVI.1]{feller1960introduction}. In particular, the degree of $H_n(x)$ is $n$, and ${\dotprod{H_m}{H_n}=1}$ if $m=n$ and zero otherwise.

Corollary~\ref{corL2w} implies that the likelihood ratio function $\ell(x)=g_T(x)/w(x)$ of the density $g_T(x)$ of the log price $X_T$ with respect to $w(x)$ belongs to $L^2_w$. We henceforth assume that also the discounted payoff function $f(x)$ is in $L^2_w$. This hypothesis is satisfied for instance in the case of European call and put options. It implies that the price, denoted by $\pi_f$, is well defined and equals
	\begin{equation}\label{pihinf}
	\pi_f = \int_\R f(x) g_T(x)\,dx = \dotprod{f}{\ell}  =\sum_{n\ge 0} f_n \ell_n,
	\end{equation}
for the \textit{Fourier coefficients} of $f(x)$
	\begin{equation}\label{eq:Fcoef}
	f_n=\dotprod{f}{H_n},
	\end{equation}
and the Fourier coefficients of $\ell(x)$ that we refer to as \textit{Hermite moments}
	\begin{equation}\label{eq:Hmoments}
	\ell_n=\dotprod{\ell}{H_n}=\int_\R H_n(x) g_T(x)\,dx.
	\end{equation}

We approximate the price $\pi_f$ by truncating the series in \eqref{pihinf} at some order $N\ge 1$ and write
	\begin{equation}\label{eqproxyprice}
	\pi_f^{(N)} = \sum_{n= 0}^N f_n \ell_n,
	\end{equation}
so that $\pi_f^{(N)}\to \pi_f$ as $N\to\infty$. Due to the polynomial property of the Jacobi model, \eqref{eqproxyprice} induces an efficient price approximation scheme because the Hermite moments $\ell_n$ are linear combinations of moments of $X_T$ and thus given in closed-form, see Theorem~\ref{thmoments}. In particular, since $H_0(x)=1$, we have $\ell_0=1$. More details on the computation of $\ell_n$ are given in Appendix~\ref{appHm}.

With the Hermite moments $\ell_n$ available, the computation of the approximation \eqref{eqproxyprice} boils down to a numerical integration,
	\begin{equation}\label{eqpihNint}
  	\pi_f^{(N)} = \sum_{n=0}^N \dotprod{f}{\ell_n H_n}=\int_\R f(x) \ell^{(N)}(x) \,w(x)dx ,
	\end{equation}
of $f(x)\ell^{(N)}(x)$ with respect to the Gaussian distribution $w(x)dx$, where the polynomial $\ell^{(N)}(x)=\sum_{n=0}^N \ell_n H_n(x)$ is in closed-form. The integral \eqref{eqpihNint} can be computed by quadrature or Monte-Carlo simulation. In specific cases, we find closed-form formulas for the Fourier coefficients $f_n$ and no numerical integration is needed. This includes European call, put, and digital options, as shown below.

\begin{remark}\label{R:divHeston}
Formula~\eqref{eqpihNint} shows that $g_T^{(N)}(x) = \ell^{(N)}(x)w(x)$ serves as an approximation for the density $g_T(x)$. In fact, we readily see that $g_T^{(N)}(x)$ integrates to one and converges to $g_T(x)$ in $L^2_{1/w}$ as $N\to\infty$. Hence, we have convergence of the Gram--Charlier A series expansion of the density of the log price $X_T$ in $L^2_{1/w}$.\footnote{A Gram--Charlier A series expansion of a density function $g(x)$ is formally defined as $g(x)=\sum_{n\ge 0} c_n H_n(x) w(x)$ for some real numbers $c_n$, $n\ge0$.} In view of Remark~\ref{remHESTON}, this does not hold for the Heston model.
\end{remark}

Matching the first moment or the first two moments of $w(x)$ and $g_T(x)$, we further obtain
\begin{equation*}
\ell_1=\int_\R H_1(x) g_T(x)\,dx= \dotprod{H_0}{H_1} = 0\quad\text{if $\mu_w=\E[X_T]$,}\\
\end{equation*}
and similarly,
\begin{equation}\label{mm2}
\ell_1=\ell_2=0\quad\text{if $\mu_w=\E[X_T]$ and $\sigma_w^2={\rm var}[X_T]$.}
\end{equation}
Matching the first moment or the first two moments of $w(x)$ and $g_T(x)$ can improve the convergence of the approximation \eqref{eqproxyprice}. Note however that \eqref{sicon} and \eqref{mm2} imply ${\rm var}[X_T]>v_{max}T/2$, so that second moment matching is not always feasible in empirical applications.

\begin{remark}\label{remBSoptionP}
If $\mu_w=X_0+(r-\delta)T-\sigma_w^2/2$, then $f_0=\int_{\R} f(x) w(x)dx$ is the Black--Scholes option price with volatility parameter $\sigma_{BS}=\sigma_w/\sqrt{T}$. Because ${\E[X_T]=X_0+(r-\delta)T-{\rm var}[X_T]/2}$, this holds in particular if the first two moments of $w(x)$ and $g_T(x)$ match, see~\eqref{mm2}. In this case, the higher order terms in~$\pi_f^{(N)}=f_0+\sum_{n=3}^N f_n\ell_n$ can be thought of as corrections to the corresponding Black--Scholes price $f_0$ due to stochastic volatility.
\end{remark}

The following result, which is a special case of Theorem~\ref{thm:IVEx} below, provides universal upper and lower bounds on the implied volatility of a European option with discounted payoff $f(X_T)$ at $T$ and price $\pi_f$. The implied volatility $\sigma_{\rm IV}$ is defined as the volatility parameter that renders the corresponding Black--Scholes option price equal to $\pi_f$.

\begin{theorem}\label{thm:ivB}
Assume that the discounted payoff function $f(\log(s))$ is convex in $s>0$. Then the implied volatility satisfies $\sqrt{v_{min}} \le \sigma_{\rm IV} \le  \sqrt{v_{max}}$.
\end{theorem}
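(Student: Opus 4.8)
The plan is to sandwich the Jacobi price $\pi_f$ between the two Black--Scholes prices computed with the constant volatilities $\sqrt{v_{min}}$ and $\sqrt{v_{max}}$, and then invoke monotonicity of the Black--Scholes price in its volatility parameter. Write $f(x)=\e^{-rT}\phi(\e^x)$, so that convexity of $s\mapsto f(\log s)$ is equivalent to convexity of the undiscounted payoff $\phi$. For a volatility parameter $\sigma>0$ let $\pi^{BS}(\sigma)=\e^{-rT}\,\E^{BS}_\sigma[\phi(S_T)]$ denote the Black--Scholes price, where under $\E^{BS}_\sigma$ the asset follows $dS_t=(r-\delta)S_t\,dt+\sigma S_t\,dW_t$ with $S_0=\e^{X_0}$. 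Since $\sigma_{\rm IV}$ is defined by $\pi^{BS}(\sigma_{\rm IV})=\pi_f$, and since for convex $\phi$ the two Black--Scholes log-normal laws share the common forward mean $S_0\e^{(r-\delta)T}$ and are ordered in convex order as $\sigma$ increases, the map $\sigma\mapsto\pi^{BS}(\sigma)$ is continuous and nondecreasing. Hence it suffices to prove
\[
\pi^{BS}(\sqrt{v_{min}})\le \pi_f\le \pi^{BS}(\sqrt{v_{max}}).
\]

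For the comparison I would use a verification argument. Fix $\sigma>0$ and let $u(t,s)=\e^{-r(T-t)}\E^{BS}_\sigma[\phi(S_T)\mid S_t=s]$ be the Black--Scholes value function, which solves $\partial_t u+(r-\delta)s\,\partial_s u+\tfrac12\sigma^2 s^2\partial_{ss}u-ru=0$ with $u(T,\cdot)=\phi$. Because $S_T=s\,\xi$ for a positive random variable $\xi$ independent of $S_t$, the representation $u(t,s)=\e^{-r(T-t)}\E[\phi(s\xi)]$ shows that $u(t,\cdot)$ inherits the convexity of $\phi$, so $\partial_{ss}u\ge0$. Under the Jacobi dynamics \eqref{sdeXV} the price $S_t=\e^{X_t}$ satisfies $dS_t=(r-\delta)S_t\,dt+S_t\,dM_t$ with $d\langle M\rangle_t=V_t\,dt$, so applying Itô's formula to $\e^{-rt}u(t,S_t)$ and substituting the PDE gives
\[
d\!\left(\e^{-rt}u(t,S_t)\right)=\tfrac12\,\e^{-rt}S_t^2\bigl(V_t-\sigma^2\bigr)\partial_{ss}u(t,S_t)\,dt+\e^{-rt}S_t\,\partial_s u(t,S_t)\,dM_t.
\]
Taking $\sigma=\sqrt{v_{max}}$ makes the drift nonpositive, since $V_t\le v_{max}$ and $\partial_{ss}u\ge0$, so $\e^{-rt}u(t,S_t)$ is a supermartingale and $\pi_f=\E^{\Q}[\e^{-rT}\phi(S_T)]\le u(0,S_0)=\pi^{BS}(\sqrt{v_{max}})$; taking $\sigma=\sqrt{v_{min}}$ makes the drift nonnegative, since $V_t\ge v_{min}$, yielding the reverse inequality $\pi^{BS}(\sqrt{v_{min}})\le\pi_f$.

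The main obstacle is the rigorous justification of this (sub/super)martingale comparison, which splits into two technical points. First, for a generic convex payoff such as a call or put the value function $u$ is only $C^{1,2}$ on $[0,T)\times(0,\infty)$ and its second derivative degenerates at expiry; I would handle this by first mollifying $\phi$ into a smooth convex payoff, running the argument, and passing to the limit, or equivalently by applying Itô on $[0,T-\varepsilon]$ and letting $\varepsilon\downarrow0$ with continuity of $u$ and dominated convergence. Second, one must check that the stochastic integral against $dM_t$ is a true martingale rather than merely a local one, so that taking expectations legitimately turns the sign of the drift into the desired inequality; this is where the tail control of Theorem~\ref{thmdensity} enters, as the Gaussian-square integrability \eqref{gint} of the law of $X_T$ together with the boundedness of $V_t$ supplies the uniform integrability needed to localize and pass to expectations. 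The convexity preservation of $u$ is the conceptual heart of the argument, and once it and these integrability issues are settled the two price inequalities — and hence the implied-volatility bounds $\sqrt{v_{min}}\le\sigma_{\rm IV}\le\sqrt{v_{max}}$ — follow immediately.
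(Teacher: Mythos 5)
Your proof is correct, and its analytic core---applying It\^o's formula to $\e^{-rt}u(t,S_t)$ under the Jacobi dynamics, substituting the Black--Scholes PDE, and reading off the drift $\tfrac12\e^{-rt}\bigl(V_t-\sigma^2\bigr)S_t^2\,\partial_{ss}u$ whose sign is controlled by convexity of $u$ and the bounds $v_{min}\le V_t\le v_{max}$---is exactly the computation in the paper's proof of Theorem~\ref{thm:IVEx}, of which Theorem~\ref{thm:ivB} is the case $d=1$ (the paper carries this out for payoffs $f(X_{t_1},\dots,X_{t_d})$ with conditional value functions, reducing to your setup when $d=1$). Where you genuinely diverge is the concluding mechanism: you take expectations and turn the signed drift into a super/submartingale comparison yielding $\pi^{BS}(\sqrt{v_{min}})\le\pi_f\le\pi^{BS}(\sqrt{v_{max}})$, whereas the paper never takes expectations---it forms the delta-hedged self-financing portfolio $\Pi_t$ (long the option, short $\Delta^{\sigma_{\rm BS}}_{f,t}$ units of the asset), integrates the same identity to reach \eqref{E:BShedging}, and concludes from absence of arbitrage that $\Pi_T$ cannot be bounded away from zero. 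Your route is the classical robustness-of-Black--Scholes argument in probabilistic form: it is self-contained but obliges you to verify that the stochastic integral is a true martingale and to handle the degeneracy of $\partial_{ss}u$ at expiry (mollification, or It\^o on $[0,T-\varepsilon]$ and $\varepsilon\downarrow0$)---obligations you correctly identify, and which are easy here since $V_t$ is bounded, so $S_t$ has moments of all orders and for call/put-type payoffs $\partial_s u$ is bounded; the paper's no-arbitrage phrasing quietly sidesteps these expectations at the cost of leaving the admissibility of the hedge implicit. Your monotonicity-in-$\sigma$ step via the convex order of lognormals with common forward mean is a concrete version of what the paper dispatches with ``a similar argument shows,'' so on balance your writeup is, if anything, the more rigorous rendering of the same underlying idea.
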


\subsection*{Examples}

We now present examples of discounted payoff functions $f(x)$ for which closed-form formulas for the Fourier coefficients $f_n$ exist. The first example is a call option.\footnote{Similar recursive relations of the Fourier coefficients for the physicist Hermite polynomial basis can be found in \citet{drimus2013closed}. The physicist Hermite polynomial basis is the orthogonal polynomial basis of the $L^2_w$ space equipped with the weight function $w(x)=\e^{-x^2}$ so that $\dotprod{H_n}{H_n}=\sqrt{2\pi}2^nn!$.}

\begin{theorem}\label{thmoptionFC}
Consider the discounted payoff function for a call option with log strike $k$,
	\begin{equation}\label{Eurcall}
  	f(x) = \e^{-rT} \left( \e^x-\e^k \right)^+ .
	\end{equation}
Its Fourier coefficients $f_n$ in \eqref{eq:Fcoef} are given by
	\begin{equation}\label{E:fourier_coef} \begin{split} f_0&=\e^{-rT+\mu_w}I_0\left(\frac{k-\mu_w}{\sigma_w};\sigma_w\right)-\e^{-rT+k}\Phi\left(\frac{\mu_w-k}{\sigma_w}\right);\\
f_n&=\e^{-rT+\mu_w}\frac{1}{\sqrt{n!}}\sigma_wI_{n-1}\left(\frac{k-\mu_w}{\sigma_w};\sigma_w\right),\quad n\ge 1.
	\end{split}
	\end{equation}
The functions $I_n(\mu;\nu)$ are defined recursively by
	\begin{equation}\label{E:recursionI}
	\begin{aligned}
	I_0(\mu;\nu)&=\e^{\frac{\nu^2}{2}} \Phi(\nu-\mu);\\
    	I_n(\mu;\nu)&= \Hcal_{n-1}(\mu)\e^{\nu \mu}\phi(\mu)+\nu I_{n-1}(\mu;\nu) ,\quad n\ge 1,
	\end{aligned}
	\end{equation}
where $\Phi(x)$ denotes the standard Gaussian distribution function and $\phi(x)$ its density.
\end{theorem}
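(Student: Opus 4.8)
The plan is to evaluate the defining integral $f_n=\dotprod{f}{H_n}$ from~\eqref{eq:Fcoef} for the payoff~\eqref{Eurcall} directly, reduce it to the auxiliary integrals $I_n$, and then use the recursion~\eqref{E:recursionI} to collapse the result into the stated closed forms.

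First I would use that the call payoff vanishes for $x\le k$ to write $f_n=\e^{-rT}\int_k^\infty(\e^x-\e^k)H_n(x)\,w(x)\,dx$, and substitute $u=(x-\mu_w)/\sigma_w$. Under this change of variables $w(x)\,dx=\phi(u)\,du$, $H_n(x)=\Hcal_n(u)/\sqrt{n!}$ by~\eqref{HHcal}, the lower limit becomes $\mu=(k-\mu_w)/\sigma_w$, and $\e^x=\e^{\mu_w}\e^{\sigma_w u}$. This recasts the coefficient as \[ f_n=\frac{\e^{-rT}}{\sqrt{n!}}\Bigl(\e^{\mu_w}\,I_n(\mu;\sigma_w)-\e^k\,J_n(\mu)\Bigr), \] where $I_n(\mu;\nu)=\int_\mu^\infty\e^{\nu u}\Hcal_n(u)\phi(u)\,du$ and $J_n(\mu)=I_n(\mu;0)$. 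All of these integrals converge absolutely because $\phi(u)$ decays like $\e^{-u^2/2}$, which dominates the polynomial-times-exponential factor; this super-exponential decay is also what makes every boundary term at $+\infty$ vanish in the integrations by parts below.

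Next I would establish the two properties of $I_n$ recorded in~\eqref{E:recursionI}. The base case $I_0(\mu;\nu)=\int_\mu^\infty\e^{\nu u}\phi(u)\,du=\e^{\nu^2/2}\Phi(\nu-\mu)$ follows by completing the square in the exponent and recognising the shifted Gaussian tail. For the recursion I would use the Rodrigues-type identity behind~\eqref{PHdef}, namely $\frac{d}{du}\bigl(\Hcal_{n-1}(u)\phi(u)\bigr)=-\Hcal_n(u)\phi(u)$, and integrate $I_n(\mu;\nu)$ by parts against $\e^{\nu u}$: the term at $+\infty$ drops out, the term at $\mu$ contributes $\Hcal_{n-1}(\mu)\e^{\nu\mu}\phi(\mu)$, and the remaining integral is $\nu\,I_{n-1}(\mu;\nu)$, which is precisely the claimed recursion. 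The same identity gives $J_0(\mu)=\Phi(-\mu)$ and $J_n(\mu)=\Hcal_{n-1}(\mu)\phi(\mu)$ for $n\ge1$.

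Finally I would assemble the formula. For $n=0$, inserting $I_0(\mu;\sigma_w)$ and $J_0(\mu)=\Phi(-\mu)=\Phi((\mu_w-k)/\sigma_w)$ reproduces the stated $f_0$. For $n\ge1$ the decisive step is a cancellation: applying the recursion to $I_n(\mu;\sigma_w)$ and using $\e^{\mu_w}\e^{\sigma_w\mu}=\e^{\mu_w+(k-\mu_w)}=\e^k$, the boundary contribution $\e^{\mu_w}\Hcal_{n-1}(\mu)\e^{\sigma_w\mu}\phi(\mu)$ equals $\e^k\Hcal_{n-1}(\mu)\phi(\mu)=\e^k J_n(\mu)$ and therefore cancels the strike term exactly, leaving only $\e^{\mu_w}\sigma_w\,I_{n-1}(\mu;\sigma_w)$ and hence the single-term expression for $f_n$. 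I expect this cancellation to be the one point requiring genuine care, since it hinges on correctly tracking the constants $\mu_w,\sigma_w,\mu$ so that $\e^{\mu_w+\sigma_w\mu}=\e^k$ collapses the two boundary contributions into one; the integrations by parts themselves are routine given the Gaussian decay.
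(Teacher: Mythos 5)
Your proposal is correct and follows essentially the same route as the paper's proof: the same change of variables reducing $f_n$ to $\e^{-rT+\mu_w}\frac{1}{\sqrt{n!}}I_n(\mu;\sigma_w)-\e^{-rT+k}\frac{1}{\sqrt{n!}}I_n(\mu;0)$, the same base case by completing the square, and the same integration by parts for the recursion (your Rodrigues identity $\frac{d}{du}\bigl(\Hcal_{n-1}(u)\phi(u)\bigr)=-\Hcal_n(u)\phi(u)$ is exactly the paper's combination of $\Hcal_n(x)=x\Hcal_{n-1}(x)-\Hcal_{n-1}'(x)$ with $x\phi(x)=-\phi'(x)$). The only difference is that you make explicit the final cancellation $\e^{\mu_w+\sigma_w\mu}=\e^k$ collapsing the two terms for $n\ge1$, which the paper compresses into the remark that the formulas ``follow from the recursion.''
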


The Fourier coefficients of a put option can be obtained from the put-call parity. For digital options, the Fourier coefficients $f_n$ are as follows.
\begin{theorem}\label{thmoptionDigital}
Consider the discounted payoff function for a digital option of the form
	\[f(x)=\e^{-rT}\1_{[k,\infty)}(x).\]
Its Fourier coefficients $f_n$ are given by
	\begin{equation}\label{E:fourier_coef_dig}
	\begin{split}
	f_0&=\e^{-rT}\Phi\left(\frac{\mu_w-k}{\sigma_w}\right);\\
	f_n&=\frac{{\rm e}^{-rT}}{\sqrt{n!}}\Hcal_{n-1}\left(\frac{k-\mu_w}{\sigma_w}\right)\phi\left(\frac{k-\mu_w}{\sigma_w}\right),\quad n\ge1,	
	\end{split}
	\end{equation}
where $\Phi(x)$ denotes the standard Gaussian distribution function and $\phi(x)$ its density.
\end{theorem}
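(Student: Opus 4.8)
The plan is to compute the Fourier coefficient $f_n=\dotprod{f}{H_n}=\e^{-rT}\int_k^\infty H_n(x)\,w(x)\,dx$ directly from the definitions, reducing it to a standard-normal tail integral and then exploiting a Rodrigues-type derivative identity. Note first that $f$ is bounded, hence lies in $L^2_w$, so each $f_n$ is well defined, and the integrals below converge absolutely because $\Hcal_n$ grows only polynomially against the Gaussian weight.

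First I would substitute $y=(x-\mu_w)/\sigma_w$, under which $w(x)\,dx=\phi(y)\,dy$ and the lower limit becomes $a:=(k-\mu_w)/\sigma_w$. Recalling from \eqref{HHcal} that $H_n(x)=\Hcal_n\!\big((x-\mu_w)/\sigma_w\big)/\sqrt{n!}$, this turns the coefficient into
\[
f_n=\frac{\e^{-rT}}{\sqrt{n!}}\int_a^\infty \Hcal_n(y)\,\phi(y)\,dy .
\]
For $n=0$ we have $\Hcal_0\equiv 1$, so the integral is the Gaussian tail $1-\Phi(a)=\Phi(-a)=\Phi\!\big((\mu_w-k)/\sigma_w\big)$, which gives the stated expression for $f_0$.

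For $n\ge 1$ the key identity is $\Hcal_n(y)\,\phi(y)=-\tfrac{d}{dy}\big[\Hcal_{n-1}(y)\,\phi(y)\big]$, which I would derive from the Rodrigues formula \eqref{PHdef}: since $\Hcal_m(y)\e^{-y^2/2}=(-1)^m\tfrac{d^m}{dy^m}\e^{-y^2/2}$, differentiating the $(n-1)$th such expression once reproduces the $n$th up to sign. Integrating this identity over $[a,\infty)$ and using that $\Hcal_{n-1}(y)\phi(y)\to 0$ as $y\to\infty$ (a polynomial times Gaussian decay) collapses the integral to the single boundary value $\Hcal_{n-1}(a)\,\phi(a)$. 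Substituting $a=(k-\mu_w)/\sigma_w$ then yields the claimed formula for $f_n$.

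The computation is elementary throughout; the only points requiring care are establishing the derivative identity and confirming that the boundary term at $+\infty$ vanishes, both of which follow immediately from \eqref{PHdef}. I do not anticipate any genuine obstacle here, so the task is mainly to organize these steps cleanly rather than to overcome a hard technical hurdle.
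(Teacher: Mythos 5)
Your proof is correct and follows essentially the same route as the paper: the same change of variables reducing $f_n$ to $\frac{\e^{-rT}}{\sqrt{n!}}\int_{(k-\mu_w)/\sigma_w}^\infty \Hcal_n(z)\phi(z)\,dz$, evaluated via the antiderivative property of Hermite polynomials. Your Rodrigues-based identity $\Hcal_n(y)\phi(y)=-\frac{d}{dy}\bigl[\Hcal_{n-1}(y)\phi(y)\bigr]$ is exactly the $\nu=0$ specialization of the integration-by-parts step behind the paper's recursion \eqref{E:recursionI}, which the paper invokes as $I_n\bigl(\frac{k-\mu_w}{\sigma_w};0\bigr)$; the only difference is that you re-derive this special case directly rather than citing the general recursion from Theorem~\ref{thmoptionFC}.
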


For a digital option with generic payoff $\1_{[k_1,k_2)}(x)$ the Fourier coefficients can be derived using Theorem~\ref{thmoptionDigital} and $\1_{[k_1,k_2)}(x)=\1_{[k_1,\infty)}(x)-\1_{[k_2,\infty)}(x).$

\subsection*{Error bounds and asymptotics}

We first discuss an error bound of the price approximation scheme~\eqref{eqproxyprice}. The error of the approximation is ${\epsilon^{(N)}=\pi_f-\pi_f^{(N)}=\sum_{n=N+1}^\infty f_n \ell_n}$ for a fixed order $N\ge 1$. The Cauchy--Schwarz inequality implies the following error bound
\begin{equation}\label{BAERR}
\lvert\epsilon^{(N)} \rvert \le  \left( \| f \|_w^2 - \sum_{n=0}^N f_n^2 \right)^{\frac{1}{2}}  \left( \| \ell \|_w^2 - \sum_{n=0}^N \ell_n^2 \right)^{\frac{1}{2}}.
\end{equation}
The $L^2_w$-norm of $f(x)$ has an explicit expression, $\|f\|_w^2 = \int_\R f(x)^2 \, w(x)dx $, that can be computed by quadrature or Monte--Carlo simulation. The Fourier coefficients $f_n$ can be computed similarly. The Hermite moments $\ell_n$ are given in closed-form. It remains to compute the $L^2_w$-norm of $\ell(x)$. For further use we define
\begin{equation}\label{eq:defM_Tnew}
	M_t=X_0+\int_0^t \left(r-\delta-V_s/2\right)ds +\frac{\rho}{\sigma}\left( V_t-V_0 -\int_0^t \kappa\left(\theta-V_s\right)ds \right),
\end{equation}
so that, in view of \eqref{sdeXV}, the log price $X_t=  M_t + \int_0^t \sqrt{V_s -\rho^2 Q(V_s)}\,dW_{2s}$. Recall also $C_t$ given in~\eqref{eq:defC_T}.

\begin{lemma}\label{lemellnorm}
The $L^2_w$-norm of $\ell(x)$ is given by
\begin{equation}\label{eqellExp}
\|\ell\|_w^2  = \int_\R \frac{g_T(x)^2 }{w(x)}dx  = \E\left[ \frac{g_T(X_T) }{w(X_T)} \right]   =\E\left[ \frac{ \phi\left(X_T,\widetilde{M}_T,{\widetilde{C}_T}\right)}{\phi\left(X_T,\mu_w,\sigma_w^2\right)} \right]
\end{equation}
where $\phi(x,\mu,\sigma^2)$ is the normal density function in $x$ with mean $\mu$ and variance $\sigma^2$, and the pair of random variables $(\widetilde{M}_T,\widetilde{C}_T)$ is independent from $X_T$ and has the same distribution as $(M_T,C_T)$.
\end{lemma}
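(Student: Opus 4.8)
The plan is to verify the three equalities from left to right; the first two are essentially bookkeeping, while the third encodes the conditional Gaussian structure of $X_T$ given the volatility path. For the first equality I would insert $\ell(x)=g_T(x)/w(x)$ into the definition of the weighted norm, so that $\|\ell\|_w^2=\int_\R \ell(x)^2\, w(x)\,dx=\int_\R g_T(x)^2/w(x)\,dx$, the integral being finite by Corollary~\ref{corL2w}. For the second equality I would regroup the integrand as $\tfrac{g_T(x)}{w(x)}\,g_T(x)$ and recognize the result as the expectation of $g_T(X_T)/w(X_T)$, using that $g_T$ is the density of $X_T$.

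The substance lies in the third equality. The key input is the representation $X_T=M_T+\int_0^T\sqrt{V_s-\rho^2 Q(V_s)}\,dW_{2s}$ recorded after~\eqref{eq:defM_Tnew}, in which both $M_T$ and $C_T$ are measurable with respect to the $\sigma$-field $\Fcal^{W_1}_T$ generated by the Brownian motion $W_1$ that drives $V$. Since $W_2$ is independent of $W_1$, conditionally on $\Fcal^{W_1}_T$ the integrand becomes a deterministic (path-dependent) function and the Itô integral against $W_2$ is a centered Gaussian variable with variance $C_T$. A clean way to record this is through the conditional characteristic function
\begin{equation*}
\E\!\left[\e^{\im u X_T}\bigm|\Fcal^{W_1}_T\right]=\e^{\im u M_T}\,\exp\!\left(-\tfrac{1}{2}u^2 C_T\right),
\end{equation*}
which identifies the conditional law of $X_T$ as Gaussian with mean $M_T$ and variance $C_T$, hence with density $\phi(x,M_T,C_T)$. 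Taking expectations and using the tower property then gives $g_T(x)=\E[\phi(x,M_T,C_T)]$.

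To close the argument I would introduce the copy $(\widetilde{M}_T,\widetilde{C}_T)$ of $(M_T,C_T)$, independent of $X_T$. Conditioning on $X_T$ and exploiting this independence yields $\E[\phi(X_T,\widetilde{M}_T,\widetilde{C}_T)\mid X_T]=\E[\phi(x,M_T,C_T)]\big|_{x=X_T}=g_T(X_T)$; dividing by $w(X_T)=\phi(X_T,\mu_w,\sigma_w^2)$ and taking expectations produces the rightmost expression. The main obstacle is the rigorous justification of the conditional Gaussianity in the second paragraph: one must verify that conditioning on the whole path of $W_1$ preserves the Brownian character of $W_2$, so that the Itô integral is conditionally Gaussian, and one must ensure $C_T>0$ almost surely (nondegeneracy of the conditional variance), which follows from $v-\rho^2 Q(v)\ge v-Q(v)\ge 0$ together with property~\eqref{Vtvstart}. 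Since precisely this conditional Gaussianity is established in the proof of Theorem~\ref{thmdensitymd} and announced in the Remark after Theorem~\ref{thmexiuni}, for the present lemma it may simply be invoked rather than reproved.
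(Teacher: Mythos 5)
Your proof is correct and takes essentially the same route as the paper's: the paper likewise invokes the conditional Gaussianity of $X_T$ given the volatility path (established in the proof of Theorem~\ref{thmdensitymd}) to write $g_T(x)=\E\left[\phi(x,M_T,C_T)\right]$, and then concludes by recognizing $w(x)=\phi(x,\mu_w,\sigma_w^2)$ and passing to the independent copy $(\widetilde{M}_T,\widetilde{C}_T)$. The only difference is one of detail: you make explicit the conditioning-on-$X_T$ step and the nondegeneracy $C_T>0$ via property~\eqref{Vtvstart}, both of which the paper leaves implicit by referring back to Theorem~\ref{thmdensitymd}.
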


In applications, we compute the right hand side of \eqref{eqellExp} by Monte--Carlo simulation of $(X_T,\widetilde{M}_T,\widetilde{C}_T)$ and thus obtain the error bound \eqref{BAERR}.

We next show that the Hermite moments $\ell_n$ decay at an exponential rate under some technical assumptions.

\begin{lemma}\label{lemdecayH}
Suppose that \eqref{assint_2} holds and $\sigma_w^2>v_{max}T$. Then there exist finite constants $C>0$ and $0<q<1$ such that $\ell_n^2\leq Cq^n$ for all $n\geq 0$.
\end{lemma}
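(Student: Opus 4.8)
The plan is to exploit the conditional Gaussianity of $X_T$ to reduce each Hermite moment $\ell_n=\E[H_n(X_T)]$ to a single geometrically decaying factor times a remainder that is controlled uniformly in $n$. As will be shown in the proof of Theorem~\ref{thmdensitymd}, conditionally on the $\sigma$-algebra $\Fcal^V$ generated by the volatility path $\{V_t:t\in[0,T]\}$, the log price is Gaussian, $X_T\mid\Fcal^V\sim\Ncal(M_T,C_T)$ with $M_T$ and $C_T$ as in \eqref{eq:defM_Tnew} and \eqref{eq:defC_T}. First I would write $\ell_n=\E\big[\E[H_n(X_T)\mid\Fcal^V]\big]$ and compute the inner Gaussian Hermite moment in closed form. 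Using the generating identity $\sum_n \frac{t^n}{n!}\Hcal_n(y)=\e^{ty-t^2/2}$ together with the Gaussian Laplace transform, one obtains for $Z\sim\Ncal(\mu,c)$ that $\E[H_n(Z)]=\frac{\gamma^n}{\sqrt{n!}}\,\Hcal_n\!\big(\tfrac{\mu-\mu_w}{\sigma_w\gamma}\big)$, where $\gamma^2=1-c/\sigma_w^2$. Applying this with $\mu=M_T$ and $c=C_T$ gives $\ell_n=\E\big[\gamma^n\,\Hcal_n(m/\gamma)/\sqrt{n!}\big]$, with $m=(M_T-\mu_w)/\sigma_w$ and $\gamma^2=1-C_T/\sigma_w^2$.

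The decay then comes entirely from sandwiching $C_T$. Since $Q(v)\le v$, the integrand in \eqref{eq:defC_T} satisfies $(1-\rho^2)v_{min}\le(1-\rho^2)V_s\le V_s-\rho^2Q(V_s)\le V_s\le v_{max}$, so integrating over $[0,T]$ yields the deterministic bounds $(1-\rho^2)v_{min}T\le C_T\le v_{max}T$. Under \eqref{assint_2} the lower bound is strictly positive, and under $\sigma_w^2>v_{max}T$ the upper bound is strictly below $\sigma_w^2$; hence $\gamma^2$ is confined to a compact subinterval of $(0,1)$. In particular $\gamma\le\bar\gamma:=\sqrt{1-(1-\rho^2)v_{min}T/\sigma_w^2}<1$ and $\gamma^2\ge\underline\gamma^2:=1-v_{max}T/\sigma_w^2>0$, both deterministically. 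This uniform separation of $\gamma$ from $1$ is the heart of the argument and the step I expect to be the main obstacle: it is exactly where the hypotheses $v_{min}>0$ and $\rho^2<1$ are indispensable, since otherwise $C_T$ could approach zero, $\gamma$ could approach $1$, and the geometric gain would be lost.

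It remains to absorb the Hermite polynomial factor without spoiling the $\bar\gamma^{\,n}$ decay. The hard part here is that $\Hcal_n(m/\gamma)$ grows like $\sqrt{n!}$; I would neutralise this by invoking Cramér's inequality for the probabilist Hermite polynomials, $|\Hcal_n(y)|\le K\sqrt{n!}\,\e^{y^2/4}$ for a universal constant $K$, which cancels $\sqrt{n!}$ exactly and leaves, conditionally, $\big|\E[H_n(X_T)\mid\Fcal^V]\big|\le K\,\gamma^n\,\e^{m^2/(4\gamma^2)}\le K\,\bar\gamma^{\,n}\,\e^{(M_T-\mu_w)^2/(4\sigma_w^2\underline\gamma^2)}$. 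Finally, because $V$ takes values in the bounded interval $[v_{min},v_{max}]$, every term defining $M_T$ in \eqref{eq:defM_Tnew} is bounded by a deterministic constant, so $M_T$ is a bounded random variable and $C'':=\E\big[\e^{(M_T-\mu_w)^2/(4\sigma_w^2\underline\gamma^2)}\big]<\infty$. Taking expectations yields $|\ell_n|\le KC''\,\bar\gamma^{\,n}$, hence $\ell_n^2\le Cq^n$ with $C=(KC'')^2$ and $q=\bar\gamma^2\in(0,1)$, as claimed.
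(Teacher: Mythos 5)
Your proof is correct and follows essentially the same route as the paper: conditioning on the volatility path to get $X_T\mid\Fcal^V\sim\Ncal(M_T,C_T)$, the closed-form conditional Hermite moment $\E[H_n(X_T)\mid\Fcal^V]=\gamma^n\,\Hcal_n(m/\gamma)/\sqrt{n!}$ (which the paper obtains via the recursion $x_n=\beta x_{n-1}-(n-1)(1-\alpha^2)x_{n-2}$ rather than the generating function), the deterministic sandwich $(1-\rho^2)v_{min}T\le C_T\le v_{max}T<\sigma_w^2$, and Cram\'er's inequality to absorb the $\sqrt{n!}$ growth. The only cosmetic differences are your generating-function derivation of the closed form and your pointwise bound in place of the paper's Cauchy--Schwarz step; since $M_T$ is deterministically bounded both yield the same geometric rate $q=1-(1-\rho^2)v_{min}T/\sigma_w^2$.
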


\subsection*{Comparison to Fourier transform}

An alternative dual expression of the price $\pi_f$ in \eqref{pihinf} is given by the Fourier integral
\begin{equation}\label{pihFourier}
  \pi_f = \frac{1}{2\pi}\int_{\R} \hat f(-\mu-\im\lambda) \hat g_T(\mu+\im\lambda)d\lambda,
\end{equation}
	
where $\widehat f(z)$ and $\widehat{g_T}(z)$ denote the moment generating functions given by \eqref{eqmmf}, respectively. Here $\mu\in\R$ is some appropriate dampening parameter such that $\e^{-\mu x}f(x)$ and $\e^{\mu x}g_T(x)$ are Lebesgue integrable and square integrable on $\R$. Indeed, Lebesgue integrability implies that $\widehat f(z)$ and $\widehat{g_T}(z)$ are well defined for $z\in\mu+\im\R$ through \eqref{eqmmf}. Square integrability and the Plancherel Theorem then yield the representation~\eqref{pihFourier}. 
For example, for the European call option \eqref{Eurcall} we have $\widehat f(z)=\e^{-rT+k(1+z)}/(z(z+1))$ for ${\rm Re}(z)<-1$

Option pricing via \eqref{pihFourier} is the approach taken in the Heston model \eqref{sdeHeston}, for which there exists a closed-form expression for $\widehat{g_T}(z)$. It is given in terms of the solution of a Riccati equation. The computation of $\pi_f$ boils down to the numerical integration of \eqref{pihFourier} along with the numerical solution of a Riccati equation for every argument $z\in\mu+\im\R$ that is needed for the integration. The Heston model (which entails $v_{max}\to\infty$) does not adhere to the series representation \eqref{pihinf} that is based on condition~\eqref{eq:ginL2}, see Remark~\ref{remHESTON}.

The Jacobi model, on the other hand, does not admit a closed-form expression for $\widehat{g_T}(z)$. But the Hermite moments $\ell_n$ are readily available in closed-form. In conjunction with Theorem~\ref{thmoptionFC}, the (truncated) series representation~\eqref{pihinf} thus provides a valuable alternative to the (numerical) Fourier integral approach~\eqref{pihFourier} for option pricing. Moreover, the approximation~\eqref{eqpihNint} can be applied to any discounted payoff function $f(x)\in L^2_w$. This includes functions $f(x)$ that do not necessarily admit closed-form moment generating function $\widehat f(z)$ as is required in the Heston model approach. In Section~\ref{sec:exotic}, we further develop our approach to price path dependent options, which could be a cumbersome task using Fourier transform techniques in the Heston model.


\section{Exotic option pricing}
\label{sec:exotic}

Pricing exotic options with stochastic volatility models is a challenging task. We show that the price of an exotic option whose payoff is a function of a finite sequence of log returns admits a polynomial series representation in the Jacobi model.

Henceforth we assume that $(V_0,X_0)\in  [v_{min},v_{max}]\times \R$ is a deterministic initial state. Consider time points $0=t_0< t_1<t_2<\cdots<t_d$ and denote the log returns $Y_{t_i}=X_{t_i}-X_{t_{i-1}}$ for $i=1,\ldots,d$. The following theorem contains Theorem~\ref{thmdensity} as special case where $d=1$.

\begin{theorem}\label{thmdensitymd}
Let $\epsilon_1,\ldots,\epsilon_d\in\R$ be such that $\epsilon_i< 1/(2 v_{max} (t_i-t_{i-1}))$ for ${i=1,\ldots,d}$. The random vector $(Y_{t_1},\ldots,Y_{t_d})$ admits a density ${g_{t_1,\ldots,t_d}(y)}$ on $\R^d$ satisfying
	\begin{equation*}\label{gint-md}
	\int_{\R^d} \e^{\sum_{i=1}^d\epsilon_i y_i^2 } g_{t_1,\ldots,t_d}(y)\,dy<\infty.
	\end{equation*}
If
	\begin{equation}\label{assintmd}
	\E\left[  \prod_{i=1}^d ({C_{t_{i}}}-C_{t_{i-1}})^{-1/2-n_i} \right]<\infty
	\end{equation}
for all $(n_1\ldots,n_d)\in\N_0^d$ with $\sum_{i=1}^d n_i\leq k\in\N_0$, for some $k\in\N_0$, then $g_{t_1,\ldots,t_d}(y)$ and $\e^{\sum_{i=1}^d\epsilon_i y_i^2 } g_{t_1,\ldots,t_d}(y)$ are uniformly bounded and $g_{t_1,\ldots,t_d}(y)$ is $k$-times continuously differentiable on $\R^d$. Property~\eqref{assint_2} implies~\eqref{assintmd} for any $k\ge 0$.
\end{theorem}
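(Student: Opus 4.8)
The plan is to condition on the whole path of the squared-volatility process and exploit that, given this information, the log-return vector is a product of independent univariate Gaussians; the unconditional density is then an expectation of a Gaussian density, and every assertion reduces to an integrability statement about $C_{t_i}-C_{t_{i-1}}$.

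\textbf{Conditional Gaussian structure.} First I would record that, by the uniqueness in Theorem~\ref{thmexiuni}, the autonomous first equation in~\eqref{sdeXV} produces a process $V$ adapted to the augmented filtration of $W_1$, so $W_2$ is independent of $\Fcal^V:=\sigma(V_s:s\le t_d)$. Substituting $\sigma\sqrt{Q(V_s)}\,dW_{1s}=dV_s-\kappa(\theta-V_s)\,ds$ identifies the $V$-measurable part of $X_t$ with $M_t$ from~\eqref{eq:defM_Tnew}, whence
\[ Y_{t_i}=(M_{t_i}-M_{t_{i-1}})+\int_{t_{i-1}}^{t_i}\sqrt{V_s-\rho^2 Q(V_s)}\,dW_{2s}. \]
Conditionally on $\Fcal^V$ the integrands are deterministic and square-integrable (bounded, as $V\le v_{max}$), and $W_2$ remains a Brownian motion independent of $\Fcal^V$; hence the stochastic integrals are independent centred Gaussians with variances $c_i:=C_{t_i}-C_{t_{i-1}}$ (see~\eqref{eq:defC_T}), so conditionally $Y_{t_i}\sim\Ncal(m_i,c_i)$ are independent, with $m_i:=M_{t_i}-M_{t_{i-1}}$.

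\textbf{Density and exponential moment.} Since $V_s\ge Q(V_s)\ge 0$ and $\rho^2\le 1$, one has $V_s-\rho^2 Q(V_s)\ge 0$, vanishing only where $V_s=\sqrt{v_{min}v_{max}}\in[v_{min},v_{max})$; by~\eqref{Vtvstart} this time set is Lebesgue-null, so $c_i>0$ almost surely. The tower property then gives
\[ g_{t_1,\dots,t_d}(y)=\E\Big[\textstyle\prod_{i=1}^d\phi(y_i,m_i,c_i)\Big],\qquad \phi(y,m,c)=\tfrac{1}{\sqrt{2\pi c}}\,\e^{-(y-m)^2/(2c)}. \]
For the exponential moment I apply Tonelli together with $\int_\R\e^{\epsilon y^2}\phi(y,m,c)\,dy=(1-2\epsilon c)^{-1/2}\exp\!\big(\epsilon m^2/(1-2\epsilon c)\big)$, valid when $2\epsilon c<1$. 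Because $c_i\le v_{max}(t_i-t_{i-1})$, the hypothesis $\epsilon_i<1/(2v_{max}(t_i-t_{i-1}))$ forces $1-2\epsilon_i c_i\ge\beta_i>0$ for deterministic $\beta_i$, while boundedness of $V$ makes each $m_i$ bounded; thus the integrand is bounded and its expectation finite, proving the first display and existence of $g_{t_1,\dots,t_d}$.

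\textbf{Boundedness and $C^k$-smoothness under~\eqref{assintmd}.} Bounding $\phi(y_i,m_i,c_i)\le(2\pi c_i)^{-1/2}$ gives $g\le(2\pi)^{-d/2}\E[\prod_i c_i^{-1/2}]<\infty$ (the case $k=0$). Bounding $\e^{\sum_i\epsilon_i y_i^2}\prod_i\phi(y_i,m_i,c_i)$ by its supremum over $y$, which by completing the square equals $\prod_i(2\pi c_i)^{-1/2}\exp\!\big(\epsilon_i m_i^2/(1-2\epsilon_i c_i)\big)$, and using $1-2\epsilon_i c_i\ge\beta_i$ together with boundedness of $m_i$, yields uniform boundedness of $\e^{\sum_i\epsilon_i y_i^2}g$ via~\eqref{assintmd} with $k=0$. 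For smoothness I use $\partial_y^{n}\phi(y,m,c)=(-1)^n c^{-n/2}\,\Hcal_n\!\big((y-m)/\sqrt c\big)\,\phi(y,m,c)$ with $\Hcal_n$ as in~\eqref{PHdef}, so that the mixed partial of order $(n_1,\dots,n_d)$ with $\sum_i n_i\le k$ is dominated, uniformly in $y$, by $\prod_i c_i^{-1/2-n_i/2}$ times the finite constants $\sup_z|\Hcal_{n_i}(z)|\e^{-z^2/2}$. The main obstacle is that this dominating exponent is $n_i/2$ rather than the $n_i$ appearing in~\eqref{assintmd}; since $c_i\le v_{max}(t_i-t_{i-1})$ one has $c_i^{-1/2-n_i/2}\le c_i^{-1/2-n_i}+1$, and expanding the product reduces the required integrability to instances of~\eqref{assintmd} (the extra bounded factors $c_i^{1/2}$ being harmless). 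Differentiation under $\E$ and continuity of the partials then follow by dominated convergence.

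\textbf{The implication~\eqref{assint_2}$\Rightarrow$\eqref{assintmd}.} When $v_{min}>0$ and $\rho^2<1$, the bound $Q(v)/v\le 1$ gives $V_s-\rho^2 Q(V_s)=V_s\big(1-\rho^2 Q(V_s)/V_s\big)\ge v_{min}(1-\rho^2)>0$, so $c_i\ge v_{min}(1-\rho^2)(t_i-t_{i-1})$ is bounded below by a deterministic positive constant and every moment in~\eqref{assintmd} is trivially finite.
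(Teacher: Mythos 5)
Your proof is correct and takes essentially the same route as the paper's: you condition on the volatility path to obtain a product of independent Gaussians (the paper's $G_{t_1,\ldots,t_d}(y)$, following Broadie--Kaya), use Fubini--Tonelli plus completing the square for the exponential moment, dominated convergence for boundedness and smoothness, and the bound $C_{t_i}-C_{t_{i-1}}\ge (1-\rho^2)v_{min}(t_i-t_{i-1})$ for the implication \eqref{assint_2}$\Rightarrow$\eqref{assintmd}. If anything, you are more explicit than the paper on the $C^k$ step — the Hermite-derivative bound with exponent $n_i/2$, which \eqref{assintmd} covers with room to spare since the $c_i$ are bounded above — where the paper simply invokes dominated convergence.
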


Since any uniformly bounded and integrable function on $\R^d$ is square integrable on $\R^d$, as an immediate consequence of Theorem~\ref{thmdensitymd} we have the following corollary.
\begin{corollary}\label{corL2wmd}
Assume \eqref{assintmd} holds for $k=0$. Then
	\begin{equation*}\label{E:L2md}
	\int_{\R^d} \frac{g_{t_1,\ldots,t_d}(y)^2}{\prod_{i=1}^d w_i(y_i)}\,dy <\infty
	\end{equation*}
for all Gaussian densities $w_i(y_i)$ with variances $\sigma_{w_i}^2$ satisfying
	\begin{equation}\label{siconmd}
	\sigma_{w_i}^2 >\frac{v_{max}(t_i-t_{i-1})}{2},\quad i=1,\dots,d.
	\end{equation}
\end{corollary}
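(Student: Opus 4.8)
The plan is to read the claim off the boundedness and integrability already supplied by Theorem~\ref{thmdensitymd}, exactly as the announcing sentence suggests (``any uniformly bounded and integrable function on $\R^d$ is square integrable''), and as was done in the $d=1$ template, Corollary~\ref{corL2w}. Abbreviate $g=g_{t_1,\ldots,t_d}$, and let each $w_i$ be a Gaussian density with mean $\mu_{w_i}$ and variance $\sigma_{w_i}^2$, so that $1/w_i(y_i)=\sqrt{2\pi}\,\sigma_{w_i}\,\exp\big((y_i-\mu_{w_i})^2/(2\sigma_{w_i}^2)\big)$ and
\[
\int_{\R^d}\frac{g(y)^2}{\prod_{i=1}^d w_i(y_i)}\,dy=\Big(\prod_{i=1}^d\sqrt{2\pi}\,\sigma_{w_i}\Big)\int_{\R^d} g(y)^2\,\e^{\sum_{i=1}^d (y_i-\mu_{w_i})^2/(2\sigma_{w_i}^2)}\,dy .
\]
The whole task reduces to bounding the integral on the right.

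First I would extract an $L^2$ estimate from the theorem. Fix exponents $\epsilon_i<1/(2v_{max}(t_i-t_{i-1}))$, to be pinned down below. Under \eqref{assintmd} with $k=0$, Theorem~\ref{thmdensitymd} asserts that $u(y):=\e^{\sum_i\epsilon_i y_i^2}g(y)$ is uniformly bounded, while the accompanying integrability statement gives $\int_{\R^d}u(y)\,dy<\infty$. A uniformly bounded, integrable nonnegative function is square integrable, whence
\[
\int_{\R^d}\e^{2\sum_{i=1}^d\epsilon_i y_i^2}\,g(y)^2\,dy=\int_{\R^d}u(y)^2\,dy<\infty .
\]
Equivalently, $g(y)\le\|u\|_\infty\,\e^{-\sum_i\epsilon_i y_i^2}$ pointwise, and squaring gives the rate $2\epsilon_i$.

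It then remains to choose the $\epsilon_i$ so that the Gaussian weight $\e^{\sum_i (y_i-\mu_{w_i})^2/(2\sigma_{w_i}^2)}$ is dominated by $\e^{2\sum_i\epsilon_i y_i^2}$. Comparing the leading quadratic coefficients coordinatewise, the exponent $\sum_i\big[(y_i-\mu_{w_i})^2/(2\sigma_{w_i}^2)-2\epsilon_i y_i^2\big]$ tends to $-\infty$, hence is bounded above by some $\log C<\infty$ (the linear and constant terms coming from the means $\mu_{w_i}$ are harmless), as soon as $1/(2\sigma_{w_i}^2)<2\epsilon_i$, i.e.\ $\epsilon_i>1/(4\sigma_{w_i}^2)$, for every $i$. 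An admissible $\epsilon_i$ in the window $\big(1/(4\sigma_{w_i}^2),\,1/(2v_{max}(t_i-t_{i-1}))\big)$ exists precisely when $1/(4\sigma_{w_i}^2)<1/(2v_{max}(t_i-t_{i-1}))$, which is exactly condition \eqref{siconmd}, $\sigma_{w_i}^2>v_{max}(t_i-t_{i-1})/2$. Fixing such $\epsilon_i$, the bound $\e^{\sum_i (y_i-\mu_{w_i})^2/(2\sigma_{w_i}^2)}\le C\,\e^{2\sum_i\epsilon_i y_i^2}$ combined with the two displays above yields finiteness of the integral, proving the corollary.

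The only delicate point—and the reason the factor $2$ in \eqref{siconmd} emerges at the sharp threshold—is that the decisive estimate squares the pointwise decay of $g$, producing the rate $2\epsilon_i$ rather than $\epsilon_i$. Had I instead kept one undifferentiated factor of $g$ and bounded only the other by $\e^{-\sum_i\epsilon_i y_i^2}$, the comparison would have demanded $1/(2\sigma_{w_i}^2)<\epsilon_i$, forcing the strictly stronger requirement $\sigma_{w_i}^2>v_{max}(t_i-t_{i-1})$; it is the squared rate coming from $u\in L^2$ that relaxes this to $v_{max}(t_i-t_{i-1})/2$. Everything else—factoring the product of Gaussians and checking the coordinatewise quadratic comparison—is routine.
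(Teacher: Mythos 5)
Your proof is correct and is precisely the argument the paper leaves implicit: the corollary is stated as an ``immediate consequence'' of Theorem~\ref{thmdensitymd} via the remark that a uniformly bounded, integrable function is square integrable, which is exactly your step $u=\e^{\sum_i\epsilon_i y_i^2}g\in L^\infty\cap L^1\subset L^2$, followed by the routine Gaussian comparison with $\epsilon_i\in\bigl(1/(4\sigma_{w_i}^2),\,1/(2v_{max}(t_i-t_{i-1}))\bigr)$. Your observation that squaring the decay rate is what produces the threshold $\sigma_{w_i}^2>v_{max}(t_i-t_{i-1})/2$ rather than $v_{max}(t_i-t_{i-1})$ is also consistent with the paper's own Remark~\ref{remHESTON}, where \eqref{eq:ginL2} is noted to imply \eqref{gint} for any $\epsilon<1/(4\sigma_w^2)$.
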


\begin{remark}\label{R:finitedimdensities}
There is a one-to-one correspondence between the vector of log returns $(Y_{t_1},\ldots,Y_{t_d})$ and the vector of log prices $(X_{t_1},\ldots,X_{t_d})$. Indeed,
	\[X_{t_i}=X_0+\sum_{j=1}^i Y_{t_j}.\]
Hence, a crucial consequence of Theorem~\ref{thmdensitymd} is that the finite-dimensional distributions of the process $X_t$ admit densities with nice decay properties. More precisely, the density of $(X_{t_1},\ldots,X_{t_d})$ is $g_{t_1,\ldots,t_d}(x_1-X_0,\ldots,x_d-x_{d-1})$.
\end{remark}

Suppose that the discounted payoff of an exotic option is of the form $f(X_{t_1},...,X_{t_d})$. Assume that~\eqref{assintmd} holds with $k=0$. Set the weight function ${w(y)=\prod_{i=1}^d w_i(y_i)}$, where $w_i(y)$ is a Gaussian density with mean $\mu_{w_i}$ and variance $\sigma_{w_i}^2$ satisfying~\eqref{siconmd}. Define
	\[\widetilde{f}(y)=f(X_0+y_1,X_0+y_1+y_2,\ldots,X_0+y_1+\cdots+y_d).\]
Then by similar arguments as in Section~\ref{sec:optionprice} the price of the option is
	\[\pi_f=\E\left[f(X_{t_1},...,X_{t_d})\right]=\sum_{n_1,\ldots,n_d\geq 0}\widetilde{f}_{n_1,\ldots,n_d}\ell_{n_1,\ldots,n_d}\]
where the Fourier coefficients $\widetilde{f}_{n_1,\ldots,n_d}$ and the Hermite moments $\ell_{n_1,\ldots,n_d}$ are given by
\[\widetilde{f}_{n_1,\ldots,n_d}=\dotprod{\widetilde{f}}{H_{n_1,\ldots,n_d}}=\int_{\R^d}\widetilde{f}(y)H_{n_1,\ldots,n_d}(y)w(y)\,dy\]
and
\begin{equation}\label{E:Hmommentmd}
	\ell_{n_1,\ldots,n_d}=\E\big[H_{n_1,\ldots,n_d}(Y_{t_1},\ldots,Y_{t_d})\big]
	\end{equation}
with $H_{n_1,\ldots,n_d}(y_1,\ldots,y_d)=\prod_{i=1}^d H^{(i)}_{n_i}(y_i)$, where $H^{(i)}_{n_i}(y_i)$ is the generalized Hermite polynomial of degree $n_i$ associated to parameters $\mu_{w_i}$ and $\sigma_{w_i}$, see \eqref{HHcal}. The price approximation at truncation order $N\ge 1$ is given, in analogy to \eqref{eqproxyprice}, by
\begin{equation}\label{eqproxypriceMD}
  \pi_f^{(N)} = \sum_{n_1+\cdots+n_d=0}^N \widetilde{f}_{n_1,\ldots,n_d} \ell_{n_1,\ldots,n_d},
\end{equation}
so that $\pi_f^{(N)}\to \pi_f$ as $N\to\infty$.


We now derive universal upper and lower bounds on the implied volatility for the exotic option with discounted payoff function $f(X_{t_1},...,X_{t_d})$ and price $\pi_f$. We denote by
\begin{equation}\label{SBSdN}
 dS^{{\rm BS}}_t = S^{{\rm BS}}_t (r-\delta)\,dt + S^{{\rm BS}}_t \sigma_{\rm BS}\,dB_t
\end{equation}
the Black--Scholes price process with volatility $\sigma_{\rm BS}>0$ where $B_t$ is some Brownian motion. The Black--Scholes price is defined by
\[ \pi^{\sigma_{\rm IV}}_f =\E\Big[ f\left(\log S^{{\rm BS}}_{t_1},\dots,\log S^{{\rm BS}}_{t_d}\right) \Big] .\]
The implied volatility $\sigma_{\rm IV}$ is the volatility parameter $\sigma_{\rm BS}$ that renders the Black--Scholes option price $\pi^{\sigma_{\rm IV}}_f=\pi_f$. The following theorem provides bounds on the values that $\sigma_{\rm IV}$ may take.

\begin{theorem}\label{thm:IVEx}
Assume that the payoff function $f(\log(s_1),\dots,\log(s_d))$ is convex in the prices $(s_1,\dots,s_d)\in (0,\infty)^d$. Then the implied volatility satisfies ${\sqrt{v_{min}} \le \sigma_{\rm IV} \le  \sqrt{v_{max}}}$.
\end{theorem}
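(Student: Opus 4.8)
The plan is to reduce the statement to the price sandwich $\pi^{\sqrt{v_{min}}}_f \le \pi_f \le \pi^{\sqrt{v_{max}}}_f$, where $\pi^{\sigma_{\rm BS}}_f$ denotes the Black--Scholes price with constant volatility $\sigma_{\rm BS}$ driven by \eqref{SBSdN}, and then to invoke the classical fact that for a convex payoff the map $\sigma_{\rm BS}\mapsto \pi^{\sigma_{\rm BS}}_f$ is nondecreasing (strictly increasing in the nondegenerate case). Combined with the defining relation $\pi^{\sigma_{\rm IV}}_f=\pi_f$, this immediately forces $\sqrt{v_{min}}\le\sigma_{\rm IV}\le\sqrt{v_{max}}$. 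Throughout I would pass to the discounted prices $\widehat S_t=\e^{-(r-\delta)t}S_t$, which are martingales in both the Jacobi model and every Black--Scholes model; since the rescaling $s\mapsto \e^{(r-\delta)t}s$ is linear, convexity of the payoff in the prices is preserved and the parameter $r-\delta$ drops out of the comparison. The decisive structural observation is that, by \eqref{sdeXV}, the total quadratic variation of the log price is $d\langle X,X\rangle_t=V_t\,dt$ \emph{irrespective of} $\rho$: leverage only redistributes the martingale part between $W_1$ and $W_2$. Hence $d\langle \log\widehat S\rangle_t=V_t\,dt$ with $V_t\in[v_{min},v_{max}]$, and this bound is the only feature of the dynamics the argument uses.

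For the upper bound I would set $\bar\sigma=\sqrt{v_{max}}$ and introduce the Black--Scholes backward value functions on the grid $t_1<\dots<t_d$: let $u_d(\widehat s_1,\dots,\widehat s_d)$ be the payoff $f$ re-expressed as a (still convex) function of the discounted prices, and define recursively $u_{j-1}(\widehat s_1,\dots,\widehat s_{j-1})=\E^{\bar\sigma}[\,u_j(\widehat s_1,\dots,\widehat s_{j-1},\widehat S_{t_j})\mid \widehat S_{t_{j-1}}=\widehat s_{j-1}]$, the conditional expectation under the constant-volatility dynamics. A short induction shows each $u_j$ is convex: given the increment one has $\widehat S_{t_j}=\widehat s_{j-1}\xi_j$ with $\xi_j$ an independent lognormal factor, so for fixed $\xi_j$ the map $(\widehat s_1,\dots,\widehat s_{j-1})\mapsto(\widehat s_1,\dots,\widehat s_{j-1},\widehat s_{j-1}\xi_j)$ is linear, and convexity survives composition with a linear map and the expectation over $\xi_j$. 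I would then consider $\Phi_j=\E[u_j(\widehat S_{t_1},\dots,\widehat S_{t_j})]$ under the \emph{Jacobi} dynamics, noting $\Phi_d=\pi_f$ and $\Phi_0=\pi^{\bar\sigma}_f$, and prove $\Phi_j\le\Phi_{j-1}$ for each $j$, which telescopes to $\pi_f\le\pi^{\bar\sigma}_f$.

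The one-step inequality $\Phi_j\le\Phi_{j-1}$ is the heart of the matter and is a single-period convex-order comparison. Fixing the realized values $\widehat S_{t_1},\dots,\widehat S_{t_{j-1}}$, set $g(\cdot)=u_j(\widehat S_{t_1},\dots,\widehat S_{t_{j-1}},\cdot)$, a convex function, and let $w(t,\widehat s)$ solve the driftless Black--Scholes equation $w_t+\tfrac12 v_{max}\,\widehat s^2 w_{\widehat s\widehat s}=0$ on $[t_{j-1},t_j]$ with $w(t_j,\cdot)=g$; being a Gaussian average of $g$ after a linear rescaling, $w$ is convex in $\widehat s$, so $w_{\widehat s\widehat s}\ge 0$, and it is smooth for $t<t_j$ by heat-kernel smoothing. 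Applying It\^o's formula to $w(t,\widehat S_t)$ along the Jacobi dynamics and using the PDE yields, after cancellation, the drift $\tfrac12 w_{\widehat s\widehat s}\,\widehat S_t^2\,(V_t-v_{max})\le 0$, so $w(t,\widehat S_t)$ is a Jacobi supermartingale on $[t_{j-1},t_j]$. Taking conditional expectations gives $\E[g(\widehat S_{t_j})\mid\Fcal_{t_{j-1}}]\le w(t_{j-1},\widehat S_{t_{j-1}})=u_{j-1}(\widehat S_{t_1},\dots,\widehat S_{t_{j-1}})$, which is exactly $\Phi_j\le\Phi_{j-1}$. The lower bound is identical with $\underline\sigma=\sqrt{v_{min}}$ and $V_t\ge v_{min}$, turning the drift nonnegative and $w(t,\widehat S_t)$ into a submartingale; the case $d=1$ recovers Theorem~\ref{thm:ivB}.

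The main obstacle I anticipate is purely analytic: promoting the local martingale arising in the It\^o step to a genuine martingale (equivalently, justifying the supermartingale inequality in expectation) and coping with the non-smoothness of the terminal data $g$ at $t=t_j$. For $t<t_j$ the function $w$ and its derivatives are smooth with at most exponential growth, while Theorem~\ref{thmdensity} supplies the Gaussian-type tail control on $\widehat S_t$ (with the boundedness of $V$ playing an essential role), so the stochastic integral is square-integrable and the compensator has the asserted sign. I would make this rigorous by localizing with stopping times and, if needed, approximating $f$ by bounded convex functions and passing to the limit; the terminal degeneracy is absorbed by stopping slightly before $t_j$ and invoking continuity of $w$ up to $t_j$. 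Finally, strict monotonicity of $\sigma_{\rm BS}\mapsto\pi^{\sigma_{\rm BS}}_f$ for a nonaffine convex payoff guarantees that $\sigma_{\rm IV}$ is well defined and lies in the stated interval.
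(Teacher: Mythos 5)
Your proof is correct, and its analytic core coincides with the paper's: both arguments rest on the convexity of the Black--Scholes value function in the spot (propagated across fixing dates because the future prices enter as $s\,R^{\rm BS}$, linearly in $s$), on the It\^o computation that combines the Black--Scholes PDE \eqref{BSPDE} with the Jacobi dynamics \eqref{sdeXV} to isolate the drift $\tfrac12(V_t-\sigma_{\rm BS}^2)S_t^2\,\Gamma^{\sigma_{\rm BS}}_{f,t}\,dt$, and on the bound $v_{min}\le V_t\le v_{max}$ fixing its sign (with, as you note, $d\langle X,X\rangle_t=V_t\,dt$ independently of $\rho$). Where you genuinely diverge is in the packaging and the concluding step. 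The paper runs a single global delta-hedging argument: it forms the self-financing portfolio long the exotic and short $\Delta^{\sigma_{\rm BS}}_{f,t}$ shares, integrates to get the hedging-error identity \eqref{E:BShedging}, and then concludes $\pi^{\sqrt{v_{min}}}_{f,0}\le\pi_f\le\pi^{\sqrt{v_{max}}}_{f,0}$ by the verbal no-arbitrage principle that $\Pi_T$ cannot be bounded away from zero; the monotonicity of $\pi^{\sigma_{\rm BS}}_f$ in $\sigma_{\rm BS}$ is obtained by ``a similar argument.'' You instead take expectations directly, organizing the comparison as a backward induction over the monitoring grid with value functions $u_j$ and a one-step super/submartingale inequality. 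This buys two things: it replaces the informal no-arbitrage appeal with a purely probabilistic convex-order statement, and it forces you to confront (and you do, via localization, heat-kernel smoothing of $w$ for $t<t_j$, and approximation by bounded convex payoffs) exactly the integrability issues that the paper's a.s.\ argument quietly skirts --- note that making \eqref{E:BShedging} usable in expectation would require the same justification that the stochastic integral is a true martingale. What the paper's route buys in exchange is brevity, a financially transparent interpretation of the gap as a hedging error, and a one-shot treatment of the multi-period structure by conditioning on past fixings rather than an explicit induction. One small point to tighten in your write-up: your final appeal to \emph{strict} monotonicity of $\sigma_{\rm BS}\mapsto\pi^{\sigma_{\rm BS}}_f$ deserves the same drift-sign argument (comparing two Black--Scholes models) rather than a bare citation, since for payoffs with vanishing gamma on the relevant region the map is only nondecreasing and $\sigma_{\rm IV}$ is then any solution of $\pi^{\sigma_{\rm IV}}_f=\pi_f$, which the sandwich still confines to $[\sqrt{v_{min}},\sqrt{v_{max}}]$.
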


\subsection*{Examples}

We provide some examples of exotic options on the asset with price $S_t=\e^{X_t}$ for which our method applies.

The payoff of a {\textit{forward start call option on the underlying return}} between dates $t$ and $T$, and with strike $K$ is $( S_T/S_t - K )^+$ and its discounted payoff function is given by
 	\[ \widetilde{f}(y)=\e^{-r T} \left( \e^{y_2} - K\right)^+ \]
with the times $t_1=t$ and $t_2=T$. Note that $\widetilde{f}(y)= \widetilde{f}(y_2)$ only depends on $y_2$, so that this example reduces to the univariate case. In particular, the Fourier coefficients $\tilde{f}_n$ coincide with those of a call option and, as we shall see in Theroem~\ref{thmhermitemoments_md}, the \textit{forward} Hermite moments $\ell_n^\ast=\E[H_n(X_{t_2}-X_{t_1})]$ can be computed efficiently. Theorem~\ref{thm:IVEx} applies in particular to the forward start call option on the underlying return, so that its implied volatility is uniformly bounded for all maturities $T>t$. On the other hand, we know from \citet{jacquier2015asymptotics} that in the Heston model the same implied volatility explodes (except at the money) when $T\to t$.

The payoff of a \textit{forward start call option} with maturity $T$, strike fixing date $t$ and proportional strike $K$ is $(S_T - K S_t )^+$ and its discounted payoff function is given by
 	\[ \widetilde{f}(y)= \e^{-r T} \left( \e^{X_{0} + y_1 + y_2} - K\e^{X_{0} + y_1 }\right)^+ \]
with the times $t_1=t$ and $t_2=T$.
In this case the Fourier coefficients have the form
	\begin{equation*}\label{eq:fourier_fwdoptionmd}
	\begin{aligned}
	\tilde{f}_{n_1,n_2}&=\e^{X_0-rT}\int_{\R^2}\e^{y_1}H_{n_1}(y_1)w_1(y_1)(\e^{y_2}-K)^+H_{n_2}	(y_2)w_2(y_2)\,dy_1\,dy_2\\
	&=\e^{X_0-rT}f_{n_1}^{(0,-\infty)}f_{n_2}^{(0,\log K)} =f_{n_2}^{(0,\log K)}\frac{\sigma_w^{n_1}}{\sqrt{n_1!}}\e^{X_0-rT+\mu_{w_1}+\sigma_{w_1}^2/2},
	\end{aligned}
	\end{equation*}
where $f_n^{(r,k)}$ denotes the Fourier coefficient of a call option for interest rate $r$ and log strike $k$ as in~\eqref{E:fourier_coef}. Here we have used~\eqref{E:fourier_coef}--\eqref{E:recursionI} to deduce that $f_{n_1}^{(0,-\infty)}=\frac{\sigma_w^{n_1}}{\sqrt{n_1!}}\e^{\mu_{w_1}+\sigma_{w_1}^2/2}$.
In particular no numerical integration is needed. Additionally, the Hermite moments
\begin{equation*}\label{eq:hermitemommd}
\ell_{n_1,n_2}=\E\big[H_{n_1}(Y_{t_1})H_{n_2}(Y_{t_2})\big]
\end{equation*}
can be calculated efficiently as explained in Theorem~\ref{thmhermitemoments_md}. The pricing of forward start call options (on the underlying return) in the Black--Scholes model is straightforward. Analytical expressions for forward start call options (on the underlying return) have been provided in the Heston model by \citet{kruse2005pricing}. However, these integral expressions involve the Bessel
function of first kind and are therefore rather difficult to implement numerically.

The payoff of an \textit{Asian call option} with maturity $T$, discrete monitoring dates $t_{1}< \cdots <t_{d}= T$, and fixed strike $K$ is $( \sum_{i=1}^d S_{t_i}/d - K)^+$ and its discounted payoff function is given by
 	\[ \widetilde{f}(y)= \e^{-r T} \left(\frac{1}{d}\sum_{i=1}^d \e^{X_0+\sum_{j=1}^{i} y_i} - K\right)^+.\]
The payoff of an \textit{Asian call option with floating strike} is ${(S_T -  K\sum_{i=1}^d S_{t_i}/d)^+}$ and its discounted payoff function is given by
 	\[ \widetilde{f}(y)= \e^{-r T} \left( \e^{X_0+\sum_{j=1}^{d} y_j} -  \frac{K}{d}\sum_{i=1}^d \e^{X_0+\sum_{j=1}^{i} y_j} \right)^+.\]
The valuation of Asian options with continuously monitoring in the Black--Scholes model has been studied in \citet{rogers1995value} and \citet{yor2001bessel} among others.

\begin{remark}\label{rem:cuba}
The Fourier coefficients may not be available in closed-form for some exotic options, such as the Asian options. In this case, we compute the multi-dimensional version of the approximation \eqref{eqproxyprice} via numerical integration of \eqref{eqpihNint} with respect to a Gaussian density $w(x)$ in $\R^d$. This can be efficiently implemented using Gauss-Hermite quadrature, see for example \citet{jackel2005note}. Specifically, denote $z_m\in\R^d$ and $w_m\in(0,1)$ the $m$-th point and weight of an $d$-dimensional standard Gaussian cubature rule with $M$ points.
The price approximation can then be computed as follows
\begin{equation}\label{eq:quadapp}
\begin{aligned}
\pi_f^{(N)} &= \int_{\R^d} \tilde{f}\big( \mu + \Sigma z\big) \; \ell^{(N)}\big( \mu + \Sigma z\big) \;  \frac{1}{(2\pi)^{\frac d 2}}\e^{-\frac{\lVert z \rVert^2}{2}} dz \\
& \approx \sum_{m=1}^M w_m \, \tilde{f}_m \; \sum_{n_1+\dots+n_d\le N} \, \ell_{n_1,\dots,n_d} \; \prod_{i=1}^d \, \frac{1}{\sqrt{n_i !}}\Hcal_{n_i}(z_{m,i}) 
\end{aligned}
\end{equation}
where $\mu=(\mu_{w_1},\dots,\mu_{w_d})^\top$, $\Sigma = \diag(\sigma_{w_1},\dots,\sigma_{w_d})$, $\tilde{f}_m=\tilde{f}( \mu + \Sigma z_m)$, and $ \Hcal_n$ denotes the standard Hermite polynomial~\eqref{PHdef}. We emphasize that many elements in the above expression can be precomputed. A numerical example is given for the Asian option in Section~\ref{sec:exotic_num} below.
\end{remark}


\section{Numerical analysis}
\label{sec:numeric}

We analyse the performance of the price approximation \eqref{eqproxyprice} with closed-form Fourier coefficients and numerical integration of \eqref{eqpihNint} for European call options, forward start and Asian options. This includes price approximation error, model implied volatility, and computational time. The model parameters are fixed as: ${r=\delta=X_0=0}$, ${\kappa=0.5}$, ${\theta=V_0=0.04}$, ${v_{min}=10^{-4}}$, ${v_{max}=0.08}$, ${\rho=-0.5}$, and ${\sigma=1}$. 
The parameter values are in line with what could be obtained from a calibration to market prices, such as S\&P500  option prices, with the exception of $v_{max}$ that is set smaller than the typical fitted value. 
The choice $v_{max}=0.08$ permits to match the first two moments of $w(x)$ and $g(x)$ as in~\eqref{mm2}, which improves the convergence of the approximation~\eqref{eqproxyprice}. We refer to \citet{ackerer2017option} for an extension of the polynomial option pricing method, which works well for arbitrary parameter values.

\subsection{European call option}

Figure~\ref{fig:fnlnpin} displays Hermite moments $\ell_n$, Fourier coefficients $f_n$, and approximation option prices $\pi^{(N)}_f$ for a European call option with maturity $T=1/12$ and log strike $k=0$ (ATM) as functions of the truncation order $N$. The first two moments of the Gaussian density $w(x)$ match the first two moments of $X_T$, see \eqref{mm2}.\footnote{In practice, depending on the model parameters, this may not always be feasible, in which case the truncation order $N$ should be increased.} We observe that the $\ell_n$ and $f_n$ sequences oscillate and converge toward zero.
The amplitudes of these oscillations negatively impact the speed at which the approximation price sequence converges.
The gray lines surrounding the price sequence are the upper and lower price error bounds computed as in \eqref{BAERR} and Lemma~\ref{lemellnorm}, using $10^5$ Monte-Carlo samples. The price approximation converges rapidly.

Table~\ref{tab:conv} reports the implied volatility values and absolute errors in percentage points for the log strikes $k=\{-0.1,\,0,\,0.1\}$ and for various truncation orders. The reference option prices have been computed at truncation order $N=50$. For all strikes the truncation order $N=10$ is sufficient to be within 10 basis points of the reference implied volatility.

Figure~\ref{fig:smile} displays the implied volatility smile for various $v_{min}$ and $v_{max}$ such that $\sqrt{v_{min} v_{max}}=\theta$, and for the Heston model~\eqref{sdeHeston}. We observe that the smile of the Jacobi model approaches the Heston smile when $v_{min}$ is small and $v_{max}$ is large.
Somewhat surprisingly, a relatively small value for $v_{max}$ seems to be sufficient for the two smiles to coincide for options around the money.
Indeed, although the variance process has an unbounded support in the Heston model, the probability that it will visit values beyond some large threshold can be extremely small.
Figure~\ref{fig:smile} also illustrates how the implied volatility smile flattens when the variance support shrinks, $v_{max}\downarrow \theta$.
In the limit $v_{max}=\theta$, we obtain the flat implied volatility smile of the Black--Scholes model.
This shows that the Jacobi model lies between the Black--Scholes model and the Heston model and that the parameters $v_{min}$ and $v_{max}$ offer additional degrees of flexibility to model the volatility surface.

As reported in Figure~\ref{fig:cputimes}, the Fourier coefficients can be computed in less than a millisecond thanks to the recursive scheme~\eqref{E:fourier_coef}-\eqref{E:recursionI}.
Computing the Hermite moments is more costly, however they can be used to price all options with the same maturity.
The most expensive task appears to be the construction of the matrix $G$, which however is a one-off.
The Hermite moment $\ell_n$ in turn derives from the vector $v_{n,T}=\e^{G T}{\bm e}_{\pi(0,n)}$ which can be used for any initial state $(V_0,X_0)$.
Note that specific numerical methods have been developed to compute the action of the matrix exponential $\e^{G T}$ on the basis vector ${\bm e}_{\pi(0,n)}$, see for example \citet{al2011computing}, \citet{hochbruck1997krylov}, and references therein.
The running times were realized with a standard desktop computer using a single 3.5 Ghz 64 bits CPU and the \textsf{R} programming language.

\subsection{Forward start and Asian options}\label{sec:exotic_num}

The left panels of Figure~\ref{fig:fwdasian} display the approximation prices of a forward start call option with strike fixing time $t_1=1/52$ and maturity $t_2=5/52$, so that $d=2$, and of an Asian call option with weekly discrete monitoring and maturity four weeks, $t_i=i/52$ for $i\le d=4$. Both options have log strike $k=0$.
The price approximations at order $N$ have been computed using \eqref{eqproxypriceMD}. For the forward start call option, we match the first two moments of $w_i(y_i)$ and $Y_{t_i}$. For the Asian call option, we chose $\sigma_{w_i}=\sqrt{v_{max}/104}+10^{-4}$ and $\mu_{w_i}=E[X_{1/52}]$, which is in line with \eqref{siconmd} but does not match the first two moments of $Y_{t_i}$. The Fourier coefficients are not available in closed-form for the Asian call option, therefore we integrated its payoff function with respect to the density approximation using Gaussian cubature as described in Remark~\ref{rem:cuba}.
We observe that with exotic payoffs the price approximation sequence may require a larger order before stabilizing. For example, for the forward start price approximation it seems necessary to truncate beyond $N=15$ in order to obtain a accurate price approximation.

The Asian option price is approximated by~\eqref{eq:quadapp} whose computational cost depends on the number of elements in the double summation.
Therefore, in order to efficiently approximate the price, we used a truncation of the 4-dimensional product of the one-dimensional Gaussian quadrature with 20 points. 
More precisely, we selected the quadrature points having a weight larger than the $90\%$ quantile of all the weights. 
This means that, out of the $20^4$ initial points, $M=16\,000$ points were selected and their weights normalized.
Note that the $144\,000 $ removed points had a total weight of $7.2\times10^{-4}$ percent which is extremely small.
Hence, the selected points cover most of the non-negligible part of the multivariate Gaussian density support. 
An alternative approach would be to use optimal Gaussian quantizers, see \citet{pages2003optimal}.

The right panels of Figure~5 display the multi-index Hermite moments $\ell_{n_1,\dots,n_d}$ with multi-orders $n_1+\cdots+n_d=1,\dots,10$. 
Note that there are $\binom{N+d}{N}$ Hermite moments $\ell_{n_1,\dots,n_d}$ of total order $n_1+\cdots+n_d\le N$.
In practice, we observe that a significant proportion of the Hermite moments is negligible so that they may simply be set to zero if they are smaller than a certain threshold to be computed online. 
As for the quadrature points, doing so reduces the computational cost of approximating the option price.
Therefore, when approximating the Asian option price, we removed the Hermite moments having an absolute value smaller than the correspondning $10\%$ quantile.
For example, when $N=20$, this implies removing all the Hermite moments with an absolute value $|\ell_{n_1,\dots,n_d}|$ smaller than $2.35\times10^{-6}$.


\section{Conclusion}
\label{sec:conclusions}

The Jacobi model is a highly tractable and versatile stochastic volatility model. It contains the Heston stochastic volatility model as a limit case.
The moments of the finite dimensional distributions of the log prices can be calculated explicitly thanks to the polynomial property of the model.
As a result, the series approximation techniques based on the Gram--Charlier A expansions of the joint distributions of finite sequences of log returns allow us to efficiently compute prices of options whose payoff depends on the underlying asset price at finitely many time points.
Compared to the Heston model, the Jacobi model offers additional flexibility to fit a large range of Black--Scholes implied volatility surfaces.
Our numerical analysis shows that the series approximations of European call, put and digital option prices in the Jacobi model are computationally comparable to the widely used Fourier transform techniques for option pricing in the Heston model.
The truncated series of prices, whose computations do not require any numerical integration, can be implemented efficiently and reliably up to orders that guarantee accurate approximations as shown by our numerical analysis.
The pricing of forward start options, which does not involve any numerical integration, is significantly simpler and faster than the iterative numerical integration method used in the Heston model.
The minimal and maximal volatility parameters are universal bounds for Black--Scholes implied volatilities and provide additional stability to the model. In particular, Black--Scholes implied volatilities of forward start options in the Jacobi model do not experience the explosions observed in the Heston model.
Furthermore, our density approximation technique in the Jacobi model circumvents some limitations of the Fourier transform techniques in affine models and allows us to price discretely monitored Asian options.

\appendix\normalsize


\section{Hermite moments}
\label{appHm}

We apply Theorem~\ref{thmoments} to describe more explicitly how the Hermite moments $\ell_0,\dots,\ell_N$ in~\eqref{eq:Hmoments} can be efficiently computed for any fixed truncation order $N\ge 1$. We let $M=\dim\Pol_N$ and $\pi:\Ecal\rightarrow \{1,\ldots, M\}$ be an enumeration of the set of exponents
	\[\Ecal=\{(m,n): m,n\ge 0;\,m+n\le N\}.\]
The polynomials
	\begin{equation}\label{defhmn}
	h_{\pi(m,n)}(v,x) = v^m H_n(x),\quad (m,n)\in\Ecal
	\end{equation}
then form a basis of $\Pol_N$. In view of the elementary property
	\[H_n'(x)=\frac{\sqrt{n}}{\sigma_w}H_{n-1}(x),\quad n\ge 1,\]
we obtain that the $M\times M$--matrix $G$ representing $\Gcal$ on $\Pol_N$ has at most 7 nonzero elements in column $\pi(m,n)$ with $(m,n)\in\Ecal$ given by
\begin{equation*}\label{eqmatrixG}
	\begin{aligned}
	G_{\pi(m-2,n),\pi(m,n)}&=-\frac{\sigma^2 m(m-1) v_{max}v_{min}}{2(\sqrt{v_{max}}-\sqrt{v_{min}})^2},\quad m\ge 2;\\
	G_{\pi(m-1,n-1),\pi(m,n)}&=-\frac{\sigma\rho m\sqrt{n} v_{max}v_{min}}{\sigma_w(\sqrt{v_{max}}-\sqrt{v_{min}})^2}, \quad m,n\ge 1;\\
	G_{\pi(m-1,n),\pi(m,n)}&=\kappa\theta m+\frac{\sigma^2m(m-1) (v_{max}+v_{min})}{2(\sqrt{v_{max}}-\sqrt{v_{min}})^2}, \quad m\ge 1;\\
	G_{\pi(m,n-1),\pi(m,n)}&=\frac{(r-\delta)\sqrt{n}}{\sigma_w}+\frac{\sigma\rho m\sqrt{n} (v_{max}+v_{min})}{\sigma_w(\sqrt{v_{max}}-\sqrt{v_{min}})^2}, \quad n\ge 1;\\
	G_{\pi(m+1,n-2),\pi(m,n)}&=\frac{\sqrt{n(n-1)}}{2\sigma_w^2}, \quad n\ge 2;\\
	G_{\pi(m,n),\pi(m,n)}&=-\kappa m-\frac{\sigma^2 m(m-1)}{2(\sqrt{v_{max}}-\sqrt{v_{min}})^2}\\
	G_{\pi(m+1,n-1),\pi(m,n)}&=-\frac{\sqrt{n}}{2\sigma_w}-\frac{\sigma\rho m\sqrt{n}}{\sigma_w(\sqrt{v_{max}}-\sqrt{v_{min}})^2}, \quad n\ge 1.
	\end{aligned}
	\end{equation*}

Theorem~\ref{thmoments} now implies the following result.

\begin{theorem}\label{thmhermitemoments}
The coefficients $\ell_n$ are given by
	\begin{equation}\label{eqell}
	\ell_n=\begin{pmatrix}
	h_1(V_0,X_0) & \cdots & h_M(V_0,X_0)
\end{pmatrix} \,\e^{TG}\, \mathbf{e}_{\pi(0,n)},\quad 0\le n\le N,
	\end{equation}
where $\bm e_{i}$ is the $i$--th standard basis vector in $\R^{M}$.
\end{theorem}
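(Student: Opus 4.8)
The plan is to apply Theorem~\ref{thmoments} directly, taking the polynomial $p$ to be the generalized Hermite polynomial $H_n$ viewed as a function of $(v,x)$ that happens not to depend on $v$. First I would observe that since $H_n(x)$ has degree $n\le N$ it lies in $\Pol_N$, and in terms of the basis $\{h_{\pi(m,n)}\}$ from~\eqref{defhmn}, where $h_{\pi(m,n)}(v,x)=v^m H_n(x)$, we have the identification $H_n(x)=h_{\pi(0,n)}(v,x)$. Hence its coordinate vector $\vv p$ is precisely the standard basis vector $\mathbf{e}_{\pi(0,n)}\in\R^{M}$.

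Next I would invoke Theorem~\ref{thmoments} with this choice of $p$, evaluated at $t=0$ with the deterministic initial state $(V_0,X_0)$. Because the initial state is deterministic, the conditional expectation $\E[H_n(X_T)\mid\Fcal_0]$ coincides with the unconditional expectation $\E[H_n(X_T)]$, which by~\eqref{eq:Hmoments} is exactly $\ell_n$. Substituting $\vv p=\mathbf{e}_{\pi(0,n)}$ into the moment formula of Theorem~\ref{thmoments} then yields~\eqref{eqell} immediately.

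There is essentially no obstacle here: the entire analytic content is already packaged in the polynomial property through Theorem~\ref{thmoments}, which is itself borrowed from~\citet[Theorem~3.1]{filipovic2015polpres}. The only point requiring care is the bookkeeping identification of $H_n$ with the basis element indexed by $\pi(0,n)$, which is immediate from the explicit form~\eqref{defhmn} of the basis. I would emphasize that the value of the statement is computational rather than logical, namely that the Hermite moments are obtained by a single action of the matrix exponential $\e^{TG}$ on one coordinate vector, with $G$ given explicitly above and the left factor $\big(h_1(V_0,X_0)\ \cdots\ h_M(V_0,X_0)\big)$ evaluated at the initial state.
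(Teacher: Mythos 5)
Your proof is correct and takes essentially the same route as the paper, which states Theorem~\ref{thmhermitemoments} as an immediate consequence of Theorem~\ref{thmoments}: one applies the moment formula at $t=0$ to the polynomial $p=H_n=h_{\pi(0,n)}$, whose coordinate vector in the basis \eqref{defhmn} is $\mathbf{e}_{\pi(0,n)}$, and uses that the initial state $(V_0,X_0)$ is deterministic so that $\E[H_n(X_T)\mid\Fcal_0]=\E[H_n(X_T)]=\ell_n$. Your write-up merely makes explicit the bookkeeping that the paper leaves implicit.
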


\begin{remark}
The choice of the basis polynomials $h_{\pi(m,n)}$ in \eqref{defhmn} is convenient for our purposes because: 1) each column of the $M\times M$-matrix $G$ has at most seven nonzero entries. 2) The coefficients $\ell_n$ in the expansion of prices~\eqref{pihinf}, can be obtained directly from the action of $\e^{G T}$ on ${\bm e}_{\pi_{(0,n)}}$ as specified in~\eqref{eqell}.
In practice, it is more efficient to compute directly this action, rather than computing the matrix exponential $\e^{G T}$ and then selecting the $\pi_{(0,n)}$-column.
\end{remark}

We now extend Theorem~\ref{thmhermitemoments} to a multi-dimensional setting. The following theorem provides an efficient way to compute the multi-dimensional Hermite moments defined in~\eqref{E:Hmommentmd}. Before stating the theorem we fix some notation. Set $N=\sum_{i=1}^d n_i$ and $M=\dim\Pol_N$. Let $G^{(i)}$ be the matrix representation of the linear map $\Gcal$ restricted to $\Pol_N$ with respect to the basis, in row vector form,
\[	
h^{(i)}(v,x)=\begin{pmatrix}h^{(i)}_1(v,x) & \cdots & h^{(i)}_M(v,x)\end{pmatrix},
\]
with $h^{(i)}_{\pi(m,n)}(v,x)=v^mH^{(i)}_n(x)$ as in \eqref{defhmn} where $H_n^{(i)}$ is the generalized Hermite polynomial of degree $n$ associated to the parameters $\mu_{w_i}$ and  $\sigma_{w_i}$, see \eqref{HHcal}. Define the $M\times M$-matrix $A^{(k,l)}$ by
\[ 	A^{(k,l)}_{i,j} =
   	\begin{cases}
     	H_n^{(l)}(0) & \text{if $i=\pi(m,k)$ and $j=\pi(m,n)$ for some $m,n\in\N$} \\
     	0 & \text{otherwise.}
   	\end{cases}\]

\begin{theorem}\label{thmhermitemoments_md}
For any $n_1,\ldots,n_d\in\N_0$, the multi-dimensional Hermite moment in~\eqref{E:Hmommentmd} can be computed through
	\begin{equation*}
	\ell_{n_1,\ldots,n_d}=h^{(1)}(V_0,0) \left(\prod_{i=1}^{d-1}\e^{G^{(i)} \Delta t_i}A^{(n_{i},i+1)} \right)\e^{G^{(d)}\Delta t_d}{\bm e}_{\pi(0,n_d)},
	\end{equation*}
where $\Delta t_i=t_i-t_{i-1}$.
\end{theorem}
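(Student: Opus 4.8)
The plan is to evaluate the expectation in~\eqref{E:Hmommentmd} by peeling off one time point at a time, working backwards from $t_d$ to $t_1$, and at each step invoking the moment formula of Theorem~\ref{thmoments}. Write $\ell_{n_1,\ldots,n_d}=\E\big[\prod_{i=1}^d H^{(i)}_{n_i}(X_{t_i}-X_{t_{i-1}})\big]$ and condition successively on $\Fcal_{t_{d-1}},\Fcal_{t_{d-2}},\ldots,\Fcal_{t_0}$ using the tower property and the Markov property of $(V_t,X_t)$. The conceptual key is that the drift and diffusion coefficients of $X_t$ in~\eqref{sdeXV} depend on $V_t$ but not on $X_t$, so the $X$-dynamics are invariant under the shift $X\mapsto X+c$. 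Consequently, for any $s<t$ and any test function $F$, the conditional expectation satisfies $\E[F(X_t-X_s,V_t)\mid\Fcal_s]=\E[F(X_t,V_t)\mid V_s=v,\,X_s=0]\big|_{v=V_s}$, i.e.\ it is a function of $V_s$ alone, obtained by restarting the diffusion from $X=0$. This is the device that converts each \emph{return} $X_{t_i}-X_{t_{i-1}}$ into a terminal value of a freshly restarted process, to which Theorem~\ref{thmoments} applies directly.

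First I would treat the innermost factor. By the shift invariance above, $\E[H^{(d)}_{n_d}(X_{t_d}-X_{t_{d-1}})\mid\Fcal_{t_{d-1}}]=\psi_d(V_{t_{d-1}})$, where $\psi_d(v)=\E[H^{(d)}_{n_d}(X_{t_d})\mid V_{t_{d-1}}=v,\,X_{t_{d-1}}=0]$. Since $H^{(d)}_{n_d}(x)=h^{(d)}_{\pi(0,n_d)}(v,x)$ has coordinate vector ${\bm e}_{\pi(0,n_d)}$, Theorem~\ref{thmoments} applied on $[t_{d-1},t_d]$ gives $\psi_d(v)=h^{(d)}(v,0)\,\e^{G^{(d)}\Delta t_d}{\bm e}_{\pi(0,n_d)}$. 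Setting $u_d=\e^{G^{(d)}\Delta t_d}{\bm e}_{\pi(0,n_d)}$ and recalling $h^{(d)}_{\pi(m,n)}(v,0)=v^mH^{(d)}_n(0)$, this reads $\psi_d(v)=\sum_{(m,n)\in\Ecal}(u_d)_{\pi(m,n)}\,v^m H^{(d)}_n(0)$, a polynomial in $v$ of degree at most $n_d$.

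The inductive step is where the matrices $A^{(k,l)}$ enter. Suppose that after conditioning down to $\Fcal_{t_i}$ the problem has been reduced to $\E\big[\prod_{j=1}^{i}H^{(j)}_{n_j}(X_{t_j}-X_{t_{j-1}})\cdot\psi_{i+1}(V_{t_i})\big]$ with $\psi_{i+1}(v)=h^{(i+1)}(v,0)u_{i+1}$. Conditioning on $\Fcal_{t_{i-1}}$ and using shift invariance again, the $i$-th factor becomes $\E[H^{(i)}_{n_i}(X_{t_i})\,\psi_{i+1}(V_{t_i})\mid V_{t_{i-1}}=v,\,X_{t_{i-1}}=0]$, which is the conditional expectation of the polynomial $p_i(v,x)=H^{(i)}_{n_i}(x)\psi_{i+1}(v)$. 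Expanding $p_i(v,x)=\sum_{(m,n)}(u_{i+1})_{\pi(m,n)}H^{(i+1)}_n(0)\,v^mH^{(i)}_{n_i}(x)=\sum_m\big(\sum_n (u_{i+1})_{\pi(m,n)}H^{(i+1)}_n(0)\big)h^{(i)}_{\pi(m,n_i)}(v,x)$, one reads off that the coordinate vector of $p_i$ in the basis $h^{(i)}$ is precisely $A^{(n_i,i+1)}u_{i+1}$, by the very definition of $A^{(k,l)}$. Theorem~\ref{thmoments} on $[t_{i-1},t_i]$ then yields $\psi_i(v)=h^{(i)}(v,0)\,\e^{G^{(i)}\Delta t_i}A^{(n_i,i+1)}u_{i+1}$, so $u_i=\e^{G^{(i)}\Delta t_i}A^{(n_i,i+1)}u_{i+1}$. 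A short degree count shows $\deg\psi_i\le\sum_{j\ge i}n_j$ and $\deg p_i\le\sum_{j\ge i-1}n_j\le N$, so every intermediate polynomial stays in $\Pol_N$ and Theorem~\ref{thmoments} is legitimately applicable throughout. Iterating from $i=d-1$ down to $i=1$ and evaluating the last step at the deterministic initial state $(V_0,0)$ produces exactly the claimed product $h^{(1)}(V_0,0)\prod_{i=1}^{d-1}\big(\e^{G^{(i)}\Delta t_i}A^{(n_i,i+1)}\big)\e^{G^{(d)}\Delta t_d}{\bm e}_{\pi(0,n_d)}$.

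I expect the main obstacle to be the bookkeeping in two places rather than any deep analytic difficulty. First, one must state and justify the shift invariance cleanly --- that conditioning on $\Fcal_{t_{i-1}}$ can be replaced by restarting the diffusion from $X=0$ --- which rests on the $X$-independence of the coefficients in~\eqref{sdeXV} and should be recorded as a short lemma. Second, one must verify carefully that the purely algebraic operation ``evaluate the $x$-part at $0$, re-attach the factor $H^{(i)}_{n_i}(x)$, and re-expand in the basis $h^{(i)}$'' is faithfully represented by the matrix $A^{(n_i,i+1)}$; this is a direct index check against the definition of $A^{(k,l)}$, but it is the step most prone to off-by-one errors in the superscripts $(n_i,i+1)$ and in the roles of $H^{(i+1)}_n(0)$ versus $H^{(i)}_{n_i}(x)$.
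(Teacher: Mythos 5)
Your proposal is correct and follows essentially the same route as the paper's proof: iterated conditioning via the tower property, the observation that each increment $X_{t_i}-X_{t_{i-1}}$ can be handled by restarting the diffusion at $X=0$ (since the coefficients in~\eqref{sdeXV} do not depend on $x$), Theorem~\ref{thmoments} applied on each interval, and the identification $\vec{p}=A^{(n_i,i+1)}u_{i+1}$ of the re-expansion step. The only difference is cosmetic: the paper spells out the case $d=2$ and appeals to induction, whereas you carry out the general inductive step explicitly (with a helpful degree count justifying that everything stays in $\Pol_N$).
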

\begin{proof}
By an inductive argument it is sufficient to illustrate the case $n=2$. Applying the law of iterated expectation we obtain
\[
\ell_{n_1,n_2}= \E\left[H^{(1)}_{n_1}(Y_{t_1})H^{(2)}_{n_2}(Y_{t_2})\right] = \E\left[H^{(1)}_{n_1}(X_{t_1}-X_{0})\E_{t_1}\big[H^{(2)}_{n_2}(X_{t_2}-X_{t_1})\big]\right].
\]
Since the increment $X_{t_2}-X_{t_1}$ does not depend on $X_{t_1}$ we can rewrite, using Theorem~\ref{thmoments},
\[
\E_{t_1}\left[H^{(2)}_{n_2}(X_{t_2}-X_{t_1})\right] = \E\Big[H^{(2)}_{n_2}(X_{\Delta t_2})\Bigm| X_{0}=0, V_0=V_{t_1}\Big] =  h^{(2)}(V_{t_1},0)  v^{(n_2,2)}
\]
where $v^{(n_2,2)}=e^{G^{(2)} \Delta t_2}{\bm e}_{\pi(0,n_2)}$.
Note that this last expression is a polynomial solely in $V_{t_1}$
\[
h^{(2)}(V_{t_1},0)   v^{(n_2,2)} = \sum_{n=0}^{n_2} a_n \, V_{t_1}^n, \quad \text{with } a_n = \sum_{n+j\le n_2} \, H_j^{(2)}(0) \, v^{(n_2,2)}_{\pi(n,j)}.
\]
Theorem~\ref{thmoments} now implies that the Hermite coefficient is given by
\[
\ell_{n_1,n_2}= \E\big[p(V_{t_1},X_{t_1})\bigm| X_0=0\big] = h^{(1)}(V_0,0)  \e^{G^{(1)} \Delta t_1} \vec{p}\]
where $\vec{p}$ is the vector representation in the basis $h^{(1)}(v,x)$ of the polynomial
\[
p(v,x) = \sum_{n=0}^{n_2} a_n \, v^n \,  H_{n_1}(x) = h^{(1)}(v,x)  \vec{p} .
\]
We conclude by observing that the coordinates of the vector $\vec{p}$ are given by $\e_i^\top \, \vec{p} = a_n$ if $i=\pi(n, n_1)$ for some integer $n\le n_2$ and equal to zero otherwise, which in turn shows that $\vec{p} = A^{(n_1,2)} \, v^{(n_2,2)} $.
\end{proof}


\section{Proofs}
\label{a:proofs}

This appendix contains the proofs of all theorems and propositions in the main text.

\subsection*{Proof of Theorem~\ref{thmexiuni}}

For strong existence and uniqueness of~\eqref{sdeXV}, it is enough to show strong existence and uniqueness for the SDE for $V_t$,
	\begin{equation}
	\label{SDEofV}dV_t = \kappa(\theta   - V_t)\,dt + \sigma\sqrt{Q(V_t)}\,dW_{1t}.
	\end{equation}
Since the interval $[0,1]$ is an affine transformation of the unit ball in $\R$, weak existence of a $[v_{min},v_{max}]$-valued solution can be deduced from~\citet[Theorem 2.1]{LarssonPulido2015}. Path-wise uniqueness of solutions follows from~\citet[Theorem 1]{yamada1971}. Strong existence of solutions for the SDE~\eqref{SDEofV} is a consequence of path-wise uniqueness and weak existence of solutions, see for instance~\citet[Corollary 1]{yamada1971}.

Now let $v\in [v_{min},v_{max})$. The occupation times formula \citet[Corollary~VI.1.6]{revuz1999continuous} implies
\[ \int_0^\infty \1_{\{V_t=v\}} \sigma^2 Q(v)\,dt =0,\quad v>v_{min}.\]
Since $\sigma^2 Q(v)>0$ this proves~\eqref{Vtvstart} for $v>v_{min}$.
We can show that the local time at $v_{min}$ of $V_t$ is zero as in \citet[Theorem 5.3]{filipovic2015polpres} which in turn proves~\eqref{Vtvstart} for $v=v_{min}$ by applying \cite[Lemma A.1]{filipovic2015polpres}.


To conclude,~Proposition 2.2 in~\citet{LarssonPulido2015} shows that $V_t\in(v_{min},v_{max})$ if and only if $V_0\in (v_{min},v_{max})$ and condition~\eqref{eq:boundary} holds.

\subsection*{Proof of Theorem~\ref{thmconv}}

The proof of Theorem~\ref{thmconv} builds on the following four lemmas.

\begin{lemma}\label{lemweakconvmoments}
 Suppose that $Y$ and $Y^{(n)}$, $n\ge 1$, are random variables in $\R^d$ for which all moments exist. Assume further that
    \begin{equation}
        \lim_n\E\big[p(Y^{(n)})\big]=\E\big[p(Y)\big],
    \end{equation}
for any polynomial $p(y)$ and that the distribution of $Y$ is determined by its moments. Then the sequence $Y^{(n)}$ converges weakly to $Y$ as $n\to\infty$.
\end{lemma}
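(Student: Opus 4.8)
The plan is to use the classical method of moments. Weak convergence of $Y^{(n)}$ to $Y$ is equivalent to $\E[f(Y^{(n)})]\to\E[f(Y)]$ for all bounded continuous $f$, so the difficulty is that the hypothesis only controls the unbounded test functions given by polynomials. I would bridge this gap via tightness, extraction of subsequential limits, and the moment-determinacy assumption.

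First I would establish tightness of $(Y^{(n)})$. Applying the moment-convergence hypothesis to the polynomial $p(y)=\lvert y\rvert^2=\sum_{i=1}^d y_i^2$ shows that $\E\big[\lvert Y^{(n)}\rvert^2\big]$ converges, hence is bounded by some finite constant $C$. Chebyshev's inequality then gives $\Pa\big(\lvert Y^{(n)}\rvert>R\big)\le C/R^2$ uniformly in $n$, so the family $(Y^{(n)})$ is tight. By Prokhorov's theorem, every subsequence of $(Y^{(n)})$ admits a further subsequence $(Y^{(n_k)})$ converging weakly to some $\R^d$-valued random variable $Z$.

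The crux is to identify the moments of $Z$. I would fix an arbitrary polynomial $p$ and observe that the family $(p(Y^{(n_k)}))_k$ is bounded in $L^2$: since $\lvert p\rvert^2$ is again a polynomial, the hypothesis ensures $\E\big[\lvert p(Y^{(n_k)})\rvert^2\big]$ converges and is therefore bounded. An $L^2$-bounded family is uniformly integrable, and since $p$ is continuous, the standard criterion for passing to the limit of expectations under weak convergence of uniformly integrable sequences yields $\E[p(Y^{(n_k)})]\to\E[p(Z)]$. Combined with the assumption $\E[p(Y^{(n_k)})]\to\E[p(Y)]$, this forces $\E[p(Z)]=\E[p(Y)]$ for every polynomial $p$; that is, $Z$ and $Y$ share the same moments.

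Finally, because the distribution of $Y$ is determined by its moments, $Z$ has the same law as $Y$. Thus every weakly convergent subsequence of the tight sequence $(Y^{(n)})$ has the law of $Y$ as its limit, and a tight sequence all of whose subsequential limits coincide with a fixed law must itself converge weakly to that law. Hence $Y^{(n)}\to Y$ weakly. The main obstacle is the third step: weak convergence does not by itself transfer to convergence of moments, since polynomials are unbounded test functions, and it is precisely the uniform-integrability argument—powered by the boundedness of the doubled moments that the hypothesis supplies—that overcomes this.
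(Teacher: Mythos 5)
Your argument is correct and follows essentially the same route as the paper: the paper's proof simply cites Theorem 30.2 of Billingsley (the classical method of moments) and remarks that its proof extends to $\R^d$ by inspection, and your chain of tightness via second moments, Prokhorov extraction, uniform integrability of $p(Y^{(n_k)})$ from the $L^2$-bound supplied by the polynomial $p^2$, and identification of the subsequential limit through moment determinacy is precisely that proof written out in the multivariate setting. No gaps to report.
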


\begin{proof}
    Theorem 30.2 in~\citet{billingsley1995probability} proves this result for the case $d=1$. Inspection shows that the proof is still valid for the general case.
\end{proof}

\begin{lemma}\label{lemfindimmom}
The moments of the finite-dimensional distributions of the diffusions $(V_t^{(n)},X_t^{(n)})$ converge to the respective moments of the finite-dimensional distributions of $(V_t,X_t)$. That is, for any $0\le t_1<\cdots<t_d<\infty$ and for any polynomials $p_1(v,x),\ldots,p_d(v,x)$ we have
	\begin{equation}\label{eqlemmomentconv0}
	\lim_{n}\E \left[\prod_{i=1}^d p_i(V^{(n)}_{t_i},X^{(n)}_{t_i})\right]=\E \left[\prod_{i=1}^d p_i(V_{t_i},X_{t_i})\right].
	\end{equation}
\end{lemma}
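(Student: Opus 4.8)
The plan is to reduce the convergence of the finite-dimensional moments \eqref{eqlemmomentconv0} to the convergence of the matrix exponentials $\e^{tG^{(n)}}$ governing the one-step moments, and then to chain these together across the time points $t_1<\cdots<t_d$ using the Markov property and the tower rule. The key structural fact I would exploit is the polynomial property from Theorem~\ref{thmoments}: for each $n$ the generator $\Gcal^{(n)}$ of $(V^{(n)}_t,X^{(n)}_t)$ maps $\Pol_N$ into itself, and its matrix representation $G^{(n)}$ (in a fixed basis of $\Pol_N$, say the monomials $v^mx^j$) has entries that are polynomials in the parameters $v_{min}^{(n)}$, $v_{max}^{(n)}$, and hence converge entrywise to the entries of the limiting matrix $G$ (of the Jacobi generator, or the Heston generator when $v_{max}=\infty$). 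This is exactly the continuity of $a(v)$ and $b(v)$ in the parameters already noted after \eqref{eqgenerator}.

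First I would fix the time points and polynomials and choose $N$ large enough that all the polynomials $p_i$ lie in $\Pol_N$; I also need $N$ large enough to accommodate the products that arise when I substitute one conditional expectation into the next, so in practice I would work in $\Pol_{N}$ with $N$ the total degree of $\prod_i p_i$. Next, working from the innermost time point outward, I would write
\[
\E\!\left[\prod_{i=1}^d p_i(V^{(n)}_{t_i},X^{(n)}_{t_i})\right]
=\E\!\left[\prod_{i=1}^{d-1} p_i(V^{(n)}_{t_i},X^{(n)}_{t_i})\;
\E\!\big[p_d(V^{(n)}_{t_d},X^{(n)}_{t_d})\bigm|\Fcal_{t_{d-1}}\big]\right],
\]
and apply Theorem~\ref{thmoments} to the inner conditional expectation, which turns it into a polynomial in $(V^{(n)}_{t_{d-1}},X^{(n)}_{t_{d-1}})$ whose coefficients are the entries of $\e^{(t_d-t_{d-1})G^{(n)}}\vv{p_d}$. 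Folding this new polynomial into $p_{d-1}$ and iterating, the whole expectation collapses into a product of the form
\[
\E\!\left[\prod_{i=1}^d p_i(V^{(n)}_{t_i},X^{(n)}_{t_i})\right]
= \vv{h}(V_0,X_0)^\top
\,\e^{t_1 G^{(n)}} D_1^{(n)} \e^{(t_2-t_1)G^{(n)}} D_2^{(n)}\cdots
\e^{(t_d-t_{d-1})G^{(n)}}\,\vv{p_d},
\]
where each $D_i^{(n)}$ is the (linear, parameter-continuous) multiplication-by-$p_i$ map on $\Pol_N$ expressed in the chosen basis. Since $G^{(n)}\to G$ entrywise on the finite-dimensional space $\Pol_N$, continuity of the matrix exponential gives $\e^{sG^{(n)}}\to\e^{sG}$ for each fixed $s\ge0$, the $D_i^{(n)}\to D_i$ by parameter-continuity of the coefficients, and the initial condition vector converges because $(V_0^{(n)},X_0^{(n)})\to(V_0,X_0)$; multiplying finitely many convergent factors then yields \eqref{eqlemmomentconv0}.

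The main obstacle is the limiting case $v_{max}=\infty$, where the relevant limit is the Heston model \eqref{sdeHeston} rather than a Jacobi model. Here one must verify that the Heston generator still preserves $\Pol_N$ so that Theorem~\ref{thmoments} applies to the limit, and that the entrywise convergence $G^{(n)}\to G$ genuinely holds: the Jacobi diffusion coefficient $\sigma^2 Q(v)$ tends to $\sigma^2 v$ as $v_{min}^{(n)}\to0$ and $v_{max}^{(n)}\to\infty$, which is exactly the computation displayed after \eqref{eqgenerator}, so the generator entries converge, but one should confirm that the finite moments of the Heston finite-dimensional distributions indeed exist up to the order $N$ in question (needed for $\e^{tG}$ to represent them) and, for the eventual application to weak convergence via Lemma~\ref{lemweakconvmoments}, that the limiting finite-dimensional distribution is determined by its moments. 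The latter determinacy is not needed for this lemma itself but frames why the moment convergence is the right quantity to establish; within the present lemma the only delicate point is ensuring all expectations are finite and polynomial in the parameters uniformly enough to pass to the limit, which the polynomial structure and the explicit continuity of $a(v),b(v)$ in $(v_{min},v_{max})$ deliver.
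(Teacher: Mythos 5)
Your proposal is correct and follows essentially the same route as the paper: the paper likewise chains the tower rule with Theorem~\ref{thmoments} by recursively defining polynomials $q_i(v,x)=p_i(v,x)\,(h_1,\dots,h_M)(v,x)\,\e^{(t_{i+1}-t_i)G}\vv{q_{i+1}}$ (your multiplication operators $D_i$ in disguise), collapses the multi-time expectation into $(h_1(V_0,X_0),\dots,h_M(V_0,X_0))\,\e^{t_1 G}\vv{q_1}$, and concludes from the entrywise convergence $G^{(n)}\to G$ (valid also when $v_{max}=\infty$, where $Q(v)=v-v_{min}$) together with convergence of the initial states. The only cosmetic point is that your matrices $D_i^{(n)}$ do not in fact depend on $n$, since the $p_i$ are fixed, so that step of your argument is even simpler than stated.
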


\begin{proof}
Let $N=\sum_{i=1}^d \deg p_i$. Throughout  the proof we fix a basis of $\Pol_N$, $h_j(v,x)$ where $1\le j\le M=\dim\Pol_N$,  and for any polynomial $p(v,x)$ we denote by $\vv{p}$ its coordinates with respect to this basis. We denote by $G$ and $G^{(n)}$ the respective $M\times M$-matrix representations of the generators restricted to $\Pol_N$ of  $(V_t,X_t)$ and $(V^{(n)}_t,X^{(n)}_t)$, respectively. We then define recursively the polynomials $q_i(v,x)$  and $q^{(n)}_i(v,x)$ for $1\le i\le d$ by
	\begin{equation*}
	\begin{aligned}
	q_d(v,x)&=q^{(n)}_d(v,x)=p_d(v,x),\\
	q_i(v,x)&= p_i(v,x) \begin{pmatrix}h_1(v,x) & \cdots & h_{M}(v,x)\end{pmatrix}\e^{(t_{i+1}-t_i)G}\vv{q_{i+1}} ,\quad 1\le i<d,\\
	q^{(n)}_{i}(v,x)&= p_i(v,x)  \begin{pmatrix}h_1(v,x) & \cdots & h_{M}(v,x)\end{pmatrix}\e^{(t_{i+1}-t_i)G^{(n)}}\vv{q^{(n)}_{i+1}},\quad 1\le i<d.\\
	\end{aligned}	
	\end{equation*}
As in the proof of Theorem~\ref{thmhermitemoments_md}, a successive application of Theorem~\ref{thmoments} and the law of iterated expectation implies that
	\begin{align*}
	\E \left[\prod_{i=1}^d p_i(V_{t_i},X_{t_i})\right]&=\E \left[\prod_{i=1}^{d-1} p_i(V_{t_i},X_{t_i}) \E\big[ p_d(V_{t_d},X_{t_d}) \bigm| \Fcal_{t_{d-1}}\big]\right]\\
	&=\cdots=\begin{pmatrix}h_1(V_0,X_0) & \cdots & h_{M}(V_0,X_0)\end{pmatrix}\e^{t_1G}\vv{q_1}.
	\end{align*}
and similarly,
	\[ 
	\E \left[\prod_{i=1}^d p_i(V^{(n)}_{t_i},X^{(n)}_{t_i})\right]=\begin{pmatrix}h_1(V_0^{(n)},X_0^{(n)}) & \cdots & h_{M}(V_0^{(n)},X_0^{(n)})\end{pmatrix}\e^{t_1G^{(n)}}\vv{q_1^{(n)}}.
	\]

We deduce from~\eqref{eqgenerator} that
	\begin{equation}\label{eqlemmomentconv1}
	\lim_n G^{(n)}=G.
	\end{equation}
This is valid also for the limit case $v_{max}=\infty$, that is ${Q(v)=v-v_{min}}$. This fact together with an inductive argument shows that $\lim_n\vv{q_1^{(n)}}=\vv{q_1}$. This combined with~\eqref{eqlemmomentconv1} proves~\eqref{eqlemmomentconv0}.
\end{proof}

\begin{lemma}\label{lemfindimdet}
The finite-dimensional distributions of $(V_t,X_t)$ are determined by their moments.
\end{lemma}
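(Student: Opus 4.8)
The finite-dimensional distribution in question is the law of the random vector $W=(V_{t_1},X_{t_1},\dots,V_{t_d},X_{t_d})\in\R^{2d}$ for fixed times $0\le t_1<\cdots<t_d$. The plan is to show that each one-dimensional coordinate marginal of $W$ has a finite exponential moment $\E[\e^{t_0|W_j|}]<\infty$ for some $t_0>0$. This forces the moment sequence of every coordinate, and in fact of every linear projection, to obey Carleman's condition, after which the Cram\'er--Wold device lifts one-dimensional determinacy to determinacy of the joint law of $W$.

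First I would record the exponential moments, treating the two components of $W$ separately. For the $V$-coordinates this is immediate: when $v_{max}<\infty$ we have $0\le V_{t_i}\le v_{max}$, so $\E[\e^{t_0 V_{t_i}}]\le \e^{t_0 v_{max}}$ for every $t_0$; in the limiting Heston case $v_{max}=\infty$ the process $V_t$ is a CIR diffusion whose Laplace transform is finite in a neighbourhood of the origin, so a finite exponential moment still holds for $t_0$ small. For the $X$-coordinates I would invoke Theorem~\ref{thmdensity} (equivalently Theorem~\ref{thmdensitymd} with $d=1$): the density $g_{t_i}$ satisfies $\int_\R \e^{\epsilon x^2}g_{t_i}(x)\,dx<\infty$ for $\epsilon<1/(2v_{max}t_i)$, which yields $\E[\e^{t_0|X_{t_i}|}]<\infty$ for all $t_0>0$ when $v_{max}<\infty$. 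In the Heston limit this quadratic-exponential integrability fails (Remark~\ref{remHESTON}), but the moment generating function $z\mapsto\E[\e^{zX_{t_i}}]$ is still finite on an open interval around $0$, the moment explosions of \citet{andersen2007moment} occurring only away from the origin, so again $\E[\e^{t_0|X_{t_i}|}]<\infty$ for $t_0$ small. A convenient unified way to obtain the latter is to condition on the volatility path: by the remark following Theorem~\ref{thmexiuni}, $X_{t_i}$ is conditionally Gaussian with an $\Fcal^V$-measurable mean $M_{t_i}$ and variance $C_{t_i}$, whence $\E[\e^{t_0|X_{t_i}|}]\le 2\,\E[\e^{t_0|X_0+M_{t_i}|+t_0^2 C_{t_i}/2}]$, and $M_{t_i},C_{t_i}$ are affine functionals of $(V_{t_i},\int_0^{t_i}V_s\,ds)$ with exponential moments for small $t_0$ by the affine property of $V$.

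Finally I would assemble the determinacy. A finite exponential moment $\E[\e^{t_0|Z|}]<\infty$ gives $\E[Z^{2n}]\le (2n)!\,t_0^{-2n}\E[\e^{t_0|Z|}]$, so by Stirling $(\E[Z^{2n}])^{-1/(2n)}\ge c/n$ and Carleman's condition $\sum_n(\E[Z^{2n}])^{-1/(2n)}=\infty$ holds. By H\"older's inequality every projection $\xi\cdot W$, $\xi\in\R^{2d}$, then also has a finite exponential moment, hence a moment sequence satisfying Carleman's condition, hence at most one law on $\R$ realises that sequence. Since the joint moments of $W$ determine, through the multinomial expansion, the moment sequence of each $\xi\cdot W$, any two vectors with identical joint moments have identical one-dimensional projections in law, and the Cram\'er--Wold theorem forces their joint laws to coincide. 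The main obstacle is the exponential-moment step for the $X$-coordinates in the limiting Heston case: the quadratic-exponential bound of Theorem~\ref{thmdensity} is unavailable there, so one must separately secure finiteness of the Laplace transform near the origin, either from the affine structure via the conditioning argument above or from the moment-explosion analysis.
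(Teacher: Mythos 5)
Your argument is correct, and it is essentially a self-contained reconstruction of what the paper outsources: the paper's entire proof of this lemma is a citation to \citet[Lemma~4.1]{filipovic2015polpres}, whose mechanism is likewise exponential-moment integrability implying moment determinacy for polynomial diffusions. What you do differently is make the chain explicit: exponential moments $\Rightarrow$ Carleman for every linear projection $\xi\cdot W$ (via H\"older) $\Rightarrow$ determinacy of the joint law by Cram\'er--Wold. You also correctly spotted the only genuinely delicate point, namely that the lemma must cover the limit case $v_{max}=\infty$ (the Heston model), where the quadratic-exponential bound of Theorem~\ref{thmdensity} fails; your fix --- conditioning on the volatility path so that $X_{t_i}$ is Gaussian with $\Fcal^V$-measurable mean $M_{t_i}$ and variance $C_{t_i}$, then using finiteness of $\E[\e^{uV_t+w\int_0^t V_s\,ds}]$ near the origin for the CIR process --- is sound, and is the same Broadie--Kaya conditioning device the paper itself uses in the proof of Theorem~\ref{thmdensitymd}. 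Two small remarks: first, $M_t$ as defined in~\eqref{eq:defM_Tnew} already contains $X_0$, so your bound should read $2\,\E[\e^{t_0|M_{t_i}|+t_0^2C_{t_i}/2}]$ rather than $2\,\E[\e^{t_0|X_0+M_{t_i}|+t_0^2C_{t_i}/2}]$ (harmless); second, you were right to insist on determinacy of \emph{every} projection $\xi\cdot W$ rather than only of the coordinate marginals --- marginal determinacy alone does not imply joint determinacy, and your H\"older step is exactly what closes that gap. What each approach buys: the paper's citation is shorter and inherits a statement proved for general polynomial diffusions; your version is elementary, self-contained, and makes visible exactly where boundedness of $V$ (or, in the Heston limit, the affine structure) enters.
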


\begin{proof}
The proof of this result is contained in the proof of~\citet[Lemma 4.1] {filipovic2015polpres}.
\end{proof}

\begin{lemma}\label{lemtight}
The family of diffusions $(V_t^{(n)},X_t^{(n)})$ is tight.
\end{lemma}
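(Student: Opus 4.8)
The plan is to establish tightness of the family $(V_t^{(n)},X_t^{(n)})$ via the standard Kolmogorov--Chentsov moment criterion together with tightness of the initial distributions. Since the initial states $(V_0^{(n)},X_0^{(n)})$ are deterministic and converge to $(V_0,X_0)$, they form a tight (indeed convergent) sequence in $[0,\infty)\times\R$, so the initial values cause no difficulty. It therefore suffices to produce moment bounds on the increments that are uniform in $n$. Concretely, I would aim to show that for each fixed time horizon $T$ there exist constants $C$ and exponents $\alpha>0$, $\beta>1$, independent of $n$, such that
\begin{equation*}
\E\big[\lvert V_s^{(n)}-V_t^{(n)}\rvert^{\alpha}\big]\le C\lvert s-t\rvert^{\beta}
\qquad\text{and}\qquad
\E\big[\lvert X_s^{(n)}-X_t^{(n)}\rvert^{\alpha}\big]\le C\lvert s-t\rvert^{\beta}
\end{equation*}
for all $0\le t\le s\le T$. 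Together with the tightness of the initial laws, these bounds yield tightness of each coordinate in the path space $C([0,T])$ by \citet[Theorem 12.3]{billingsley1995probability}, and hence tightness of the $\R^2$-valued process.

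First I would treat the $V^{(n)}$ component. For the finite cases $v_{max}^{(n)}<\infty$, the process $V^{(n)}$ is confined to the bounded interval $[v_{min}^{(n)},v_{max}^{(n)}]$, so its drift $\kappa(\theta^{(n)}-V_s^{(n)})$ and its squared diffusion coefficient $\sigma^2 Q^{(n)}(V_s^{(n)})$ are uniformly bounded once we note that $v_{max}^{(n)}$ converges and hence stays bounded along the sequence; the case $v_{max}=\infty$ (Heston limit) must be handled separately, where instead I would use the explicit square-root form and the well-known moment bounds for the CIR process. Writing the increment $V_s^{(n)}-V_t^{(n)}$ as a drift integral plus a stochastic integral and applying the Burkholder--Davis--Gundy inequality to the martingale part, I would bound the $\alpha$-th moment of the increment by a sum of terms of order $\lvert s-t\rvert^{\alpha}$ (from the drift, via Jensen) and $\lvert s-t\rvert^{\alpha/2}$ (from the BDG estimate applied to $\int_t^s \sigma^2 Q^{(n)}(V_r^{(n)})\,dr$). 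Choosing $\alpha=4$ gives exponents $4$ and $2$ on $\lvert s-t\rvert$, both strictly greater than $1$, as required. The analogous computation for $X^{(n)}$ uses the drift $r-\delta-V_r^{(n)}/2$ and the two martingale components with quadratic variation rates $\rho^2 Q^{(n)}(V_r^{(n)})$ and $V_r^{(n)}-\rho^2 Q^{(n)}(V_r^{(n)})$; boundedness (or uniform moment control) of $V^{(n)}$ again supplies the uniform constants.

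The main obstacle I anticipate is obtaining genuinely $n$-uniform constants that also cover the degenerate limit $v_{max}=\infty$, where the confining interval disappears and $V^{(n)}$ is no longer uniformly bounded. In that regime the clean ``bounded coefficients'' argument breaks down, and I would instead invoke uniform moment estimates for the squared-volatility processes: since the parameters $(\kappa,\theta^{(n)},\sigma)$ converge and the initial values $V_0^{(n)}$ converge, standard Gronwall-type bounds give $\sup_n \E\big[\sup_{t\le T}\lvert V_t^{(n)}\rvert^{p}\big]<\infty$ for each $p$, and this replaces pointwise boundedness in the BDG estimates above. Care is needed to verify that these moment bounds hold uniformly across both the Jacobi family and its Heston limit, but once that uniform control is in place the Kolmogorov criterion applies verbatim and tightness follows.
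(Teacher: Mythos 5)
Your proposal is correct and takes essentially the same route as the paper: a fourth-moment increment bound $\E\big[\|(V_t^{(n)},X_t^{(n)})-(V_s^{(n)},X_s^{(n)})\|^4\big]\le K|t-s|^2$ with $K$ independent of $n$, fed into the Kolmogorov-type tightness criterion together with convergence of the deterministic initial states -- the paper gets this bound in one stroke from \citet[Problem V.3.15]{karatzas1991brownian} and then controls the modulus of continuity by hand via Kolmogorov's continuity theorem and Chebyshev before invoking \citet[Theorem II.85.3]{rogers2000diffusions}. One clarification: your case split is unnecessary, since $Q^{(n)}(v)\le v$ gives linear-growth coefficients with constants uniform in $n$ even when $v_{max}^{(n)}\to\infty$, which is precisely what makes the cited moment estimate uniform; note also that no member of the sequence is itself a CIR process (the Heston model is only the limit), so the right tool in that regime is exactly the uniform Gronwall/BDG moment bound you fall back on, not CIR-specific estimates.
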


\begin{proof}
Fix a time horizon $N\in\N$. We first observe that by~\citet[Problem V.3.15]{karatzas1991brownian} there is a constant $K$ independent of $n$ such that
	\begin{equation}\label{Kolmd}
	\E\left[\|(V_t^{(n)},X_t^{(n)})-(V_s^{(n)},X_s^{(n)})\|^4\right]\leq K|t-s|^2,\quad 0\le s<t\le N.
	\end{equation}
Now fix any positive $\alpha<1/4$. Kolmogorov's continuity theorem (see \citet[Theorem I.2.1]{revuz1999continuous}) implies that
	\[ \E\left[\left(\sup_{0\le s<t\le N} \frac{\|(V_t^{(n)},X_t^{(n)})-(V_s^{(n)},X_s^{(n)})\|}{|t-s|^\alpha}\right)^4\right]\le J \]
for a finite constant $J$ that is independent of $n$. The modulus of continuity
	\[ \Delta(\delta,n)=\sup\Big\{ \|(V_t^{(n)},X_t^{(n)})-(V_s^{(n)},X_s^{(n)})\| : 0\le s<t\le N,\, |t-s|<\delta\Big\}\]
thus satisfies
	\[ \E\left[ \Delta(\delta,n)^4\right] \le \delta^\alpha J .\]
Using Chebyshev's inequality we conclude that, for every $\epsilon>0$,
	\[ \Q\left[ \Delta(\delta,n)>\epsilon\right] \le \frac{\E[ \Delta(\delta,n)^4]}{\epsilon^4}\le \frac{  \delta^\alpha J}{\epsilon^4}, \]
and thus $\sup_{n} \Q[ \Delta(\delta,n)>\epsilon]\to 0$ as $\delta\to 0$. This together with the property that the initial states $(V_0^{(n)},X_0^{(n)})$ converge to $(V_0,X_0)$ as $n\to\infty$ proves the lemma, see~\citet[Theorem II.85.3]{rogers2000diffusions}.\footnote{The derivation of the tightness of $(V_t^{(n)},X_t^{(n)})$ from \eqref{Kolmd} is also stated without proof in \citet[Theorem II.85.5]{rogers2000diffusions}. For the sake of completeness we give a short self-contained argument here.}
\end{proof}

\begin{remark}
Kolmogorov's continuity theorem (see \citet[Theorem I.2.1]{revuz1999continuous}) and \eqref{Kolmd} imply that the paths of $(V_t,X_t)$ are $\alpha$-H\"older continuous for any $\alpha<1/4$.
\end{remark}

Lemmas~\ref{lemweakconvmoments}--\ref{lemfindimdet} imply that the finite-dimensional distributions of the diffusions $(V_t^{(n)},X_t^{(n)})$ converge weakly to those of $(V_t,X_t)$ as $n\to\infty$. Theorem~\ref{thmconv} thus follows from Lemma~\ref{lemtight} and \citet[Lemma II.87.3]{rogers2000diffusions}.

\subsection*{Proof of Theorem~\ref{thmoptionFC}}

We claim that the solution of the recursion~\eqref{E:recursionI} is given by
	\begin{equation}\label{E:functionI}
	I_n(\mu;\nu)=\int_\mu^{\infty}\Hcal_n(x)\e^{\nu x}\phi(x)\,dx,\quad n\ge 0.
	\end{equation}
Indeed, for $n=0$ the right hand side of \eqref{E:functionI} equals
	\[ \int_\mu^{\infty}\Hcal_0(x)\e^{\nu x}\phi(x)\,dx=\e^{\frac{\nu^2}{2}}\int_{\mu-\nu}^{\infty}\phi(x)\,dx,\]
which is $I_0(\mu;\nu)$. For $n\ge 1$, we recall that the standard Hermite polynomials $\Hcal_{n}(x)$ satisfy
	\begin{equation}\label{E:recursion}
	\Hcal_{n}(x)= x\Hcal_{n-1}(x)-\Hcal_{n-1}'(x).
	\end{equation}
Integration by parts and~\eqref{E:recursion} then show that
	\begin{align*}
		\int_\mu^{\infty}\Hcal_n(x)\e^{\nu x}\phi (x)\,dx&= \int_{\mu}^{\infty}\Hcal_{n-1}(x)\e^{\nu x}x\phi (x)\,dx-\int_{\mu}^{\infty}\Hcal_{n-1}'(x)\e^{\nu x}\phi (x)\,dx \\
		&=- \Hcal_{n-1}(x)\e^{\nu x}\phi (x)\big|_{\mu}^{\infty}+ \int_{\mu}^{\infty}\Hcal_{n-1}(x)\nu \e^{\nu x}\phi (x)\,dx.\\
		&= \Hcal_{n-1}(\mu)\e^{\nu \mu}\phi (\mu)+\nu\int_{\mu}^{\infty}\Hcal_{n-1}(x)\e^{\nu x}\phi (x)\,dx ,	
	\end{align*}
which proves \eqref{E:functionI}.

A change of variables, using \eqref{HHcal} and~\eqref{E:functionI}, shows
	\begin{align*}
	f_n &= \e^{-rT} \int_k^\infty \left( \e^x -\e^k\right) H_n(x) w(x)\,dx\\
	&= \e^{-rT} \int_{\frac{k-\mu_w}{\sigma_w}}^\infty \left( \e^{\mu_w+\sigma_w z} -\e^k\right) H_n(\mu_w+\sigma_w z) w(\mu_w+\sigma_w z)\sigma_w\,dz\\
	&= \e^{-rT} \frac{1}{\sqrt{n!}}\int_{\frac{k-\mu_w}{\sigma_w}}^\infty \left( \e^{\mu_w+\sigma_w z} -\e^k\right) \Hcal_n(z) \phi(z)\,dz\\
	&=\e^{-rT+\mu_w}\frac{1}{\sqrt{n!}}I_n\left(\frac{k-\mu_w}{\sigma_w};\sigma_w\right)-\e^{-rT+k}\frac{1}{\sqrt{n!}}I_n\left(\frac{k-\mu_w}{\sigma_w};0\right).
	\end{align*}
Formulas~\eqref{E:fourier_coef} follow from the recursion formula \eqref{E:recursionI}.

\subsection*{Proof of Theorem~\ref{thmoptionDigital}}

As before, a change of variables, using \eqref{HHcal} and~\eqref{E:functionI}, shows
	\begin{align*}
	f_n&=\e^{-rT} \int_k^\infty H_n(x) w(x)\,dx = \frac{\e^{-rT}}{\sqrt{n!}} \int_{\frac{k-\mu_w}{\sigma_w}}^\infty  \Hcal_n(z) \phi(z)\,dz \\&=\frac{\e^{-rT}}{\sqrt{n!}} I_n\left(\frac{k-\mu_w}{\sigma_w};0\right).
	\end{align*}
Formulas~\eqref{E:fourier_coef_dig} follow directly from \eqref{E:recursionI}.

\subsection*{Proof of Lemma~\ref{lemellnorm}}

 We use similar notation as in the proof of Theorem~\ref{thmdensitymd}. In particular, with $C_T$ as in~\eqref{eq:defC_T} and $M_T$ as in~\eqref{eq:defM_Tnew}, we denote by
 \begin{equation}\label{GxdefN}
  G_T(x)=(2\pi C_T)^{-\frac12}\exp\left(-\frac{(x-M_T)^2}{2C_T}\right)
 \end{equation}
the conditional density of $X_T$ given $\{V_t:t\in[0,T]\}$, so that $g_T(x)=\E[G_T(x)]$ is the unconditional density of $X_T$. Lemma~\ref{lemellnorm} now follows from observing that $G_T(x)=\phi(x,M_T,C_T)$ and $w(x) = \phi(x,\mu_w,\sigma_w^2)$.

\subsection*{Proof of Lemma~\ref{lemdecayH}}

We first recall that by Cram\'er's inequality (see for instance~\citet[Section 10.18]{erd_53}) there exists a constant $K>0$ such that for all $n\ge 0$
	\begin{equation}\label{cramer}
	\e^{-(x-\mu_w)^2/4\sigma_w^2}|H_n(x)|=(n!)^{-1/2}\e^{-(x-\mu_w)^2/4\sigma_w^2}\left|\Hcal_n\left(\frac{x-\mu_w}{\sigma_w}\right)\right|\leq K.
	\end{equation}
Additionally, as in the proof Theorem~\ref{thmdensitymd}, since $1/4\sigma^2_w<1/(2v_{max}T)$,
	\[ \E\left[\int_{\R}\e^{(x-\mu_w)^2/4\sigma_w^2}G_T(x)\,dx\right]<\infty,\]
where $G_T(x)$ is given in \eqref{GxdefN}. This implies
	\begin{equation*}
	\begin{aligned}
	\E&\left[\int_{R}|H_n(x)|G_T(x)\,dx\right]\\
	&\quad=\E\left[\int_{R}|H_n(x)|\e^{-(x-\mu_w)^2/4\sigma_w^2}\e^{(x-\mu_w)^2/4\sigma_w^2}G_T(x)\,dx\right]\\
	&\quad\leq K \E\left[ \int_R\e^{(x-\mu_w)^2/4\sigma_w^2}G_T(x)\, dx\right]<\infty.
	\end{aligned}
	\end{equation*}
We can therefore use Fubini's theorem to deduce
	\begin{equation}\label{eq:lnexplicit}
	\ell_n=\int_{\R}H_n(x)g_T(x)\,dx
	=\E\left[\int_{\R}H_n(x)G_T(x)\,dx\right]=\E[Y_n].
	\end{equation}

We now analyze the term inside the expectation in~\eqref{eq:lnexplicit}. A change of variables shows
\[
	Y_n=\int_{R}H_n(x)G_T(x)\,dx=(2\pi n!)^{-1/2}\int_{\R}\Hcal_n(\alpha y+\beta)\e^{-y^2/2}\,dy,
\]
where we define $\alpha=\frac{\sqrt{C_T}}{\sigma_w}$ and $\beta=\frac{M_T-\mu_w}{\sigma_w}$. We recall that
\begin{equation}\label{E:boundsC}
0<(1-\rho^2)v_{min}T\leq C_T\leq v_{max}T<\sigma_w.
\end{equation}
The inequalities in~\eqref{E:boundsC} together with the fact that $V_t$ is a bounded process yield the following uniform bounds for $\alpha,\beta$,
\begin{equation}\label{E:alphabounds}
1-q=\frac{(1-\rho^2)v_{min}T}{\sigma_w^2}\leq \alpha^2\leq v_{max}T/\sigma_w^2<1,\quad|\beta|\leq R,
\end{equation}
with constants $0<q<1$ and $R>0$. Define
	\[x_n=(2\pi)^{-1/2}\int_{\R}\Hcal_n(\alpha y+\beta)\e^{-y^2/2}\,dy,\]
so that
\[	
Y_n=\int_{R}H_n(x)G_T(x)\,dx=(n!)^{-1/2}x_n.
\]
An integration by parts argument using~\eqref{E:recursion} and the identity
	\[\Hcal'_n(x)=n\Hcal_{n-1}(x)\]
shows the following recursion formula
	\[x_n=\beta x_{n-1}-(n-1)(1-\alpha^2)x_{n-2},\]
with $x_0=1$ and $x_1=\beta$. This recursion formula is closely related to the recursion formula of the Hermite polynomials which helps us deduce the following explicit expression
	\begin{equation}\label{eq:lnexplicit3}
	x_n=n!\sum_{m=0}^{\lfloor n/2\rfloor}\frac{(\alpha^2-1)^m}{m!(n-2m)!}\frac{\beta^{n-2m}}{2^m}.
	\end{equation}
Recall that
	\begin{equation}\label{eq:lnexplicit3.5}
	\Hcal_n(x)=n!\sum_{m=0}^{\lfloor n/2\rfloor}\frac{(-1)^m}{m!(n-2m)!}\frac{x^{n-2m}}{2^m}.
	\end{equation}
By~\eqref{eq:lnexplicit3} and~\eqref{eq:lnexplicit3.5} we have
\begin{align*}
x_n & =n!(1-\alpha^2)^{\frac n2}\sum_{m=0}^{\lfloor n/2\rfloor}\frac{(-1)^m}{m!(n-2m)!}\frac{((1-\alpha^2)^{-\frac12}\beta)^{n-2m}}{2^m} \\
& = (1-\alpha^2)^{\frac n2}\Hcal_n\left((1-\alpha^2)^{-\frac12}\beta\right)
\end{align*}
and
\[
\ell_n=\E\left[(1-\alpha^2)^{\frac n2}n!^{-\frac12}\Hcal_n\left((1-\alpha^2)^{-\frac12}\beta\right)\right].
\]
Cauchy-Schwarz inequality and~\eqref{cramer} yield
	\begin{equation}\label{E:estimateelln}
	\begin{split}
	\ell_n^2 & \leq \E\left[\left(n!^{-\frac 12}\Hcal_n\big((1-\alpha^2)^{-\frac12}\beta\big)\right)^2\right]\E\left[(1-\alpha^2)^n\right] \\
	& \leq K^2\E\left[\exp\Big(\beta^2/\big(2(1-\alpha^2)\big)\Big)\right]\E\left[(1-\alpha^2)^n\right].
	\end{split}
	\end{equation}
Inequalities~\eqref{E:alphabounds} and~\eqref{E:estimateelln} imply the existence of constants $C>0$ and $0<q<1$ such that
 $\ell_n^2\leq Cq^n$.

\subsection*{Proof of Theorem~\ref{thmdensitymd}}

In order to shorten the notation we write $\Delta Z_{t_i}=Z_{t_i}-Z_{t_{i-1}}$ for any process $Z_t$. From \eqref{sdeXV} we infer that the log price $X_t=  M_t + \int_0^t \sqrt{V_s -\rho^2 Q(V_s)}\,dW_{2s}$ where $M_t$ is defined in~\eqref{eq:defM_Tnew}. In particular the log returns $Y_{t_i}=\Delta X_{t_i}$ have the form
	\[Y_{t_i}=\Delta M_{t_i}+ \int_{t_{i-1}}^{t_i} \sqrt{V_s -\rho^2 Q(V_s)}\,dW_{2s}.\]
In view of property~\eqref{Vtvstart} we infer that $\Delta C_{t_i}>0$ for $i=1,\ldots,d$. Motivated by~\citet{BroadieKaya2006}, we notice that, conditional on $\{ V_{t},\, t\in [0,T]\}$, the random variable $(Y_{t_1},\ldots,Y_{t_d})$ is Gaussian with mean vector $(\Delta M_{t_1},\ldots,\Delta M_{t_d})$ and covariance matrix ${\rm diag}(\Delta C_{t_1},\ldots,\Delta C_{t_d})$. Its density  $G_{t_1,\ldots,t_d}(y)$ has the form
	\[G_{t_1,\ldots,t_d}(y)  =(2\pi)^{-d/2}\prod_{i=1}^d (\Delta C_{t_i})^{-1/2}\exp\left[-\sum_{i=1}^d \frac{ (y_i-\Delta M_{t_i})^2}{2\Delta C_{t_i}}\right].\]
Fubini's theorem implies that $g_{t_1,\ldots,t_d}(y)  = \E[ G_{t_1,\ldots,t_d}(y)]$ is measurable and satisfies, for any bounded measurable function $f(y)$,
	\[\E\left[ f(Y_{t_1},\ldots,Y_{t_d})\right] =\E\left[ \int_{\R^d} f(y)G_{t_1,\ldots,t_d}(y) \,dy\right] = \int_{\R^d} f(y)g_{t_1,\ldots,t_d}(y) \,dy.\]
Hence the distribution of $(Y_{t_1},\ldots,Y_{t_d})$ admits the density $g_{t_1,\ldots,t_d}(y)$ on $\R^d$. Dominated convergence implies that $g_{t_1,\ldots,t_d}(y)$ is uniformly bounded and $k$--times continuously differentiable on $\R^d$ if \eqref{assintmd} holds. The arguments so far do not depended on $\epsilon_i$ and also apply to the Heston model, which proves Remark~\ref{remHESTON}.

For the rest of the proof we assume, without loss of generality, that $\epsilon_i>0$ for $i=1,\ldots,d$. Observe that the mean vector and covariance matrix of $G_{t_1,\ldots,t_d}(y)$ admit the uniform bounds
	\[ |\Delta M_{t_i}| \le  K,\quad {|\Delta C_{t_i}|}\le v_{max}(t_i-t_{i-1}),\]
for some finite constant $K$. Define $\Delta_i=1-2\epsilon_i {\Delta C_{t_i}}$ and $\delta_i=1-2\epsilon_i v_{max}(t_i-t_{i-1})$. Then $\delta_i\in (0,1)$ and $\Delta_i\geq \delta_i$. Completing the square implies
	\begin{align}
	&\e^{\sum_{i=1}^d \epsilon_i y_i^2 } G_{t_1,\ldots,t_d}(y) =\prod_{i=1}^d  (2\pi\Delta C_{t_i})^{-\frac12}\exp\left[\epsilon_i y_i^2-\frac{ (y_i-\Delta M_{t_i})^2}{2{\Delta C_{t_i}}}\right]\notag \\
 	& \quad =\prod_{i=1}^d  (2\pi\Delta C_{t_i})^{-\frac12} \exp\left[ - \frac{\Delta_i}{2{\Delta C_{t_i}}}\left( y_i-\frac{\Delta M_{t_i}}{\Delta_i}\right)^2 + \frac{\Delta M_{t_i}^2}{2\Delta C_{t_i}}\left(\frac{1}{\Delta_i}-1\right)\right] \notag\\
  	& \quad =\prod_{i=1}^d   (2\pi\Delta C_{t_i})^{-\frac12} \exp\left[ - \frac{\Delta_i}{2\Delta C_{t_i}}\left( y_i-\frac{\Delta M_{t_i}}{\Delta_i}\right)^2 + \frac{\epsilon_i \Delta M_{t_i}^2}{\Delta_i}\right].\label{ex2GTx}
	\end{align}
Integration of \eqref{ex2GTx} then gives
	\[  \int_{\R^d} \e^{\sum_{i=1}^d \epsilon_i y_i^2 } G_{t_1,\ldots,t_d}(y) \,dy = \prod_{i=1}^d  \frac{1}{\sqrt{\Delta_i}} \exp\left[\frac{\epsilon_i \Delta M_{t_i}^2}{\Delta_i}\right]\leq \prod_{i=1}^d \frac{1}{\sqrt{\delta_i }} \exp\left[\frac{\epsilon_i K^2}{\delta_i }\right] .\]
Hence \eqref{gint} follows by Fubini's theorem after taking expectation on both sides. We also derive from \eqref{ex2GTx} that
\begin{align*}
e^{\sum_{i=1}^d \epsilon_i y_i^2 } g_{t_1,\ldots,t_d}(y)&=\E\left[\e^{\sum_{i=1}^d \epsilon_i y_i^2 } G_{t_1,\ldots,t_d}(y)\right]\\
&\leq \E\left[ \prod_{i=1}^d  (2\pi\Delta C_{t_i})^{-\frac12}\right]\prod_{i=1}^d \exp\left[\frac{\epsilon_i K^2}{\delta_i }\right].
\end{align*}
Hence $\e^{\sum_{i=1}^d \epsilon_i y_i^2 } g_{t_1,\ldots,t_d}(y)$ is uniformly bounded and continuous on $\R^d$ if \eqref{assintmd} holds. In fact, for this to hold it is enough suppose that \eqref{assintmd} holds with $k=0$. Moreover, \eqref{assint_2} implies that $\Delta C_{t_i}\ge (t_i-t_{i-1})(1-\rho^2) v_{min}>0$ and~\eqref{assintmd} follows.

\subsection*{Proof of Theorem~\ref{thm:IVEx}}

We assume the Brownian motions $B_t$ and $(W_{1t},W_{2t})$ in \eqref{SBSdN} and \eqref{sdeXV} are independent. We denote by $\pi_{f,t}$ the time-$t$ price of the exotic option in the Jacobi model.

For any $t_{i-1}\le t<t_i$ and given a realization $X_{t_1},\dots,X_{t_{i-1}}$, the time-$t$ Black--Scholes price of the option is a function $\pi^{\sigma_{\rm BS}}_f(t,S_t)$ of $t$ and the spot price $S_t$ defined by
\begin{align*}
   \e^{-rt} \pi^{\sigma_{\rm BS}}_f(t,s)&=\E\big[ f\left(X_{t_1},\dots,X_{t_{i-1}},\log S^{{\rm BS}}_{t_i},\dots,\log S^{{\rm BS}}_{t_d}\right) \bigm| \Fcal_t,\, S^{{\rm BS}}_t=s\big] \\
   &= \E\Big[ f\Big(X_{t_1},\dots,X_{t_{i-1}},\log \left( s R^{\rm BS}_{t,t_i}\right),\dots,\log \left( s R^{\rm BS}_{t,t_d}\right)\Big) \Bigm| \Fcal_t\Big]
  \end{align*}
where we write
\[ R^{\rm BS}_{t,t_i}=\e^{\left(r-\delta-\frac{1}{2}\sigma_{\rm BS}^2\right)(t_i-t)+\sigma_{\rm BS}\left(B_{t_i}-B_t\right)}.\]
By assumption, we infer that $\pi^{\sigma_{\rm BS}}_f(t,s)$ is convex in $s>0$. Moreover, $\pi_{f}^{\sigma_{\rm BS}}(t,s)$ satisfies the following PDE
\begin{equation}\label{BSPDE}
r \pi_{f}^{{\sigma_{\rm BS}}}(t,s) = \frac{\partial \pi_{f}^{{\sigma_{\rm BS}}}(t,s)}{\partial t} + (r-\delta) s \frac{\partial \pi_{f}^{{\sigma_{\rm BS}}}(t,s)}{\partial s} + \frac{1}{2} \sigma_{\rm BS}^2 s^2\frac{\partial^2 \pi_{f}^{{\sigma_{\rm BS}}}(t,s)}{\partial s^2}
\end{equation}
and has terminal value satisfying $\pi_{f}^{{\sigma_{\rm BS}}}(T,S_T)=\pi_{f,T}$. Write
\begin{align*}
& \pi_{f,t}^{\sigma_{\rm BS}} = \pi^{\sigma_{\rm BS}}_f(t,S_t),\quad \Theta^{\sigma_{\rm BS}}_{f,t}= - \frac{\partial \pi_{f}^{{\sigma_{\rm BS}}}(t,S_t)}{\partial t}, \\
& \Delta_{f,t}^{\sigma_{\rm BS}} =  \frac{\partial \pi_{f}^{{\sigma_{\rm BS}}}(t,S_t)}{\partial s},\quad \Gamma^{\sigma_{\rm BS}}_{f,t}=  \frac{\partial^2 \pi_{f}^{{\sigma_{\rm BS}}}(t,S_t)}{\partial s^2} 
\end{align*}
and $dN_t =\rho\, \sqrt{Q(V_t)}\,dW_{1t}  + \sqrt{V_t -\rho^2 \,Q(V_t)}\,dW_{2t} $ for the martingale driving the asset return in \eqref{sdeXV} such that, using \eqref{BSPDE},
\begin{align*}
 d(e^{-rt}\pi_{f,t}^{{\sigma_{\rm BS}}}) &= \e^{-rt} \left( -r \pi_{f,t}^{{\sigma_{\rm BS}}}  -\Theta^{\sigma_{\rm BS}}_{f,t}  +(r-\delta) S_t \Delta^{\sigma_{\rm BS}}_{f,t} +\frac{1}{2} V_t S_t^2 \Gamma^{\sigma_{\rm BS}}_{f,t}\right)dt \\
 & \quad + \e^{-rt} \Delta_{f,t}^{{\sigma_{\rm BS}}} S_t \,dN_t\\
&= \frac{1}{2} \e^{-rt} (V_t-\sigma_{\rm BS}^2) S_t^2 \Gamma^{\sigma_{\rm BS}}_{f,t}\,dt + \e^{-rt}\Delta_{f,t}^{{\sigma_{\rm BS}}} S_t \,dN_t.
\end{align*}

Consider the self-financing portfolio with zero initial value, long one unit of the exotic option, and short $\Delta_{f,t}^{{\sigma_{\rm BS}}} $ units of the underlying asset. Let $\Pi_t$ denote the time-$t$ value of this portfolio. Its discounted price dynamics then satisfies

\begin{align*}
d(e^{-rt}\Pi_t) &= d(e^{-rt}\pi_{f,t}) - \Delta_{f,t}^{{\sigma_{\rm BS}}} \left( d(e^{-rt}S_t) + \e^{-rt}S_t\delta\,dt\right)\\
&= d(e^{-rt}\pi_{f,t}) - \Delta_{f,t}^{{\sigma_{\rm BS}}}e^{-rt}  S_t\,dN_t \\
&= d(e^{-rt}\pi_{f,t}) - d(e^{-rt}\pi_{f,t}^{{\sigma_{\rm BS}}}) + \frac{1}{2}e^{-rt}(V_t-\sigma_{\rm BS}^2)S_t^2\Gamma_{f,t}^{{\sigma_{\rm BS}}}\,dt.
\end{align*}
Integrating in $t$ gives
\begin{equation}\label{E:BShedging}
\e^{-rT} \Pi_T =  -\pi_{f,0} + \pi_{f,0}^{{\sigma_{\rm BS}}} + \frac{1}{2}\int_0^Te^{-rt}(V_t-\sigma_{\rm BS}^2)S_t^2\Gamma_{f,t}^{{\sigma_{\rm BS}}}\,dt
\end{equation}
as $\pi_{f,T} - \pi_{f,T}^{{\sigma_{\rm BS}}}=0$.

We now claim that the time-$0$ option price $\pi_{f,0}=\pi_f$ lies between the Black--Scholes option prices for $\sigma_{\rm BS}=\sqrt{v_{min}}$ and $\sigma_{\rm BS}=\sqrt{v_{max}}$,
\begin{equation}\label{claimD}
\pi^{\sqrt{v_{min}}}_{f,0} \le \pi_{f} \le \pi^{\sqrt{v_{max}}}_{f,0}.
\end{equation}
Indeed, let $\sigma_{\rm BS}=\sqrt{v_{min}}$. Because $\Gamma_{f,t}^{{\rm BS}}\ge 0$ by assumption, it follows from \eqref{E:BShedging} that $\e^{-rT} \Pi_T\ge  - \pi_{f,0}+\pi_{f,0}^{\sqrt{v_{min}}}$. Absence of arbitrage implies that $\Pi_T$ must not be bounded away from zero, hence $ - \pi_{f,0}+\pi_{f,0}^{\sqrt{v_{min}}}\le 0$. This proves the left inequality in \eqref{claimD}. The right inequality follows similarly, whence the claim~\eqref{claimD} is proved.

A similar argument shows that the Black--Scholes price $\pi^{\sigma_{\rm BS}}_{f,0}$ is non-decreasing in $\sigma_{{\rm BS}}$, whence $\sqrt{v_{min}}\le \sigma_{{\rm IV}}\le \sqrt{v_{max}}$, and the theorem is proved.


\begin{table}
\centering
\begin{tabular}{l||rr|rr|rr}
& \multicolumn{2}{c|}{$k=-0.1$} & \multicolumn{2}{c|}{$k=0$} & \multicolumn{2}{c}{$k=0.1$}\\
$N$ & IV & error &  IV & error & IV & error \\
  \hline  \hline
0--2 & 20.13 & 2.62 & 20.09 & 0.86 & 20.08 & 0.83 \\
  3 & 22.12 & 0.63 & 19.96 & 0.73 & 16.60 & 2.65 \\
  4 & 23.02 & 0.27 & 19.27 & 0.04 & 18.88 & 0.37 \\
  5 & 23.03 & 0.28 & 19.27 & 0.04 & 18.88 & 0.37 \\
  6 & 22.93 & 0.18 & 19.33 & 0.10 & 18.72 & 0.53 \\
  7 & 22.76 & 0.01 & 19.32 & 0.09 & 19.11 & 0.14 \\
  8 & 22.83 & 0.08 & 19.22 & 0.01 & 19.18 & 0.07 \\
  9 & 22.82 & 0.07 & 19.22 & 0.01 & 19.19 & 0.06 \\
  10 & 22.83 & 0.08 & 19.25 & 0.02 & 19.22 & 0.03 \\
  15 & 22.74 & 0.01 & 19.23 & 0.00 & 19.32 & 0.07 \\
  20 & 22.75 & 0.00 & 19.23 & 0.00 & 19.28 & 0.03 \\
  30 & 22.75 & 0.00 & 19.23 & 0.00 & 19.25 & 0.00 \\
\end{tabular}
\caption{Implied volatility values and absolute errors in percentage points for European call option price approximations at various truncation orders $N$ and log strikes $k$.
}\label{tab:conv}
\end{table}

\begin{figure}
\centering
\includegraphics{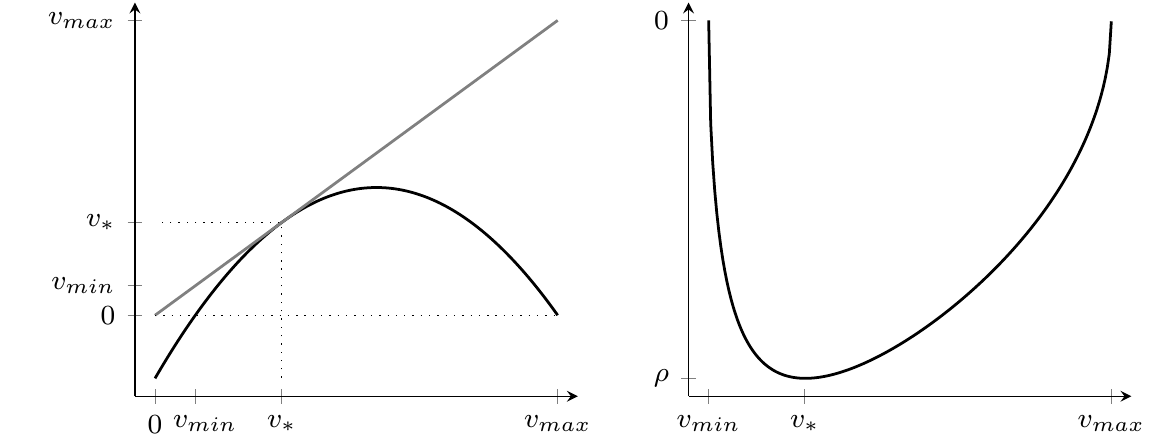}
\caption{Variance and correlation.\\
The quadratic variation of the Jacobi model (black line) and of the Heston model (gray line) are displayed in the left panel as a function of the instantaneous variance.
The right panel displays the instantaneous correlation between the processes $X_t$ and $V_t$ as a function of the instantaneous variance.
We denote $v_* = \sqrt{v_{min}v_{max}}$ and assumed that $\rho<0$.}\label{fig:varcor}
\end{figure}

\begin{figure}
\centering
\includegraphics{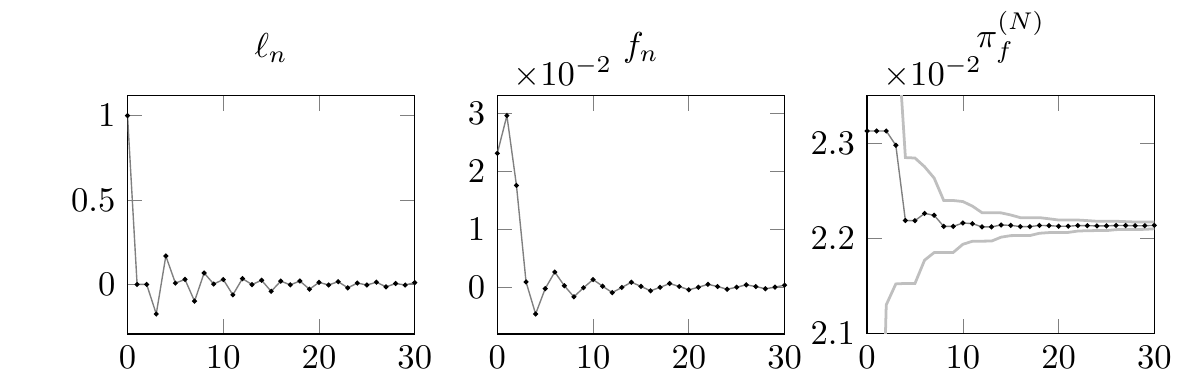}
\caption{European call option.\\
Hermite moments $\ell_n$, Fourier coefficients $f_n$, and approximation prices $\pi_f^{(N)}$ with error bounds as functions of the order $n$ (truncation order $N$).}\label{fig:fnlnpin}
\end{figure}

\begin{figure}
\centering
\includegraphics{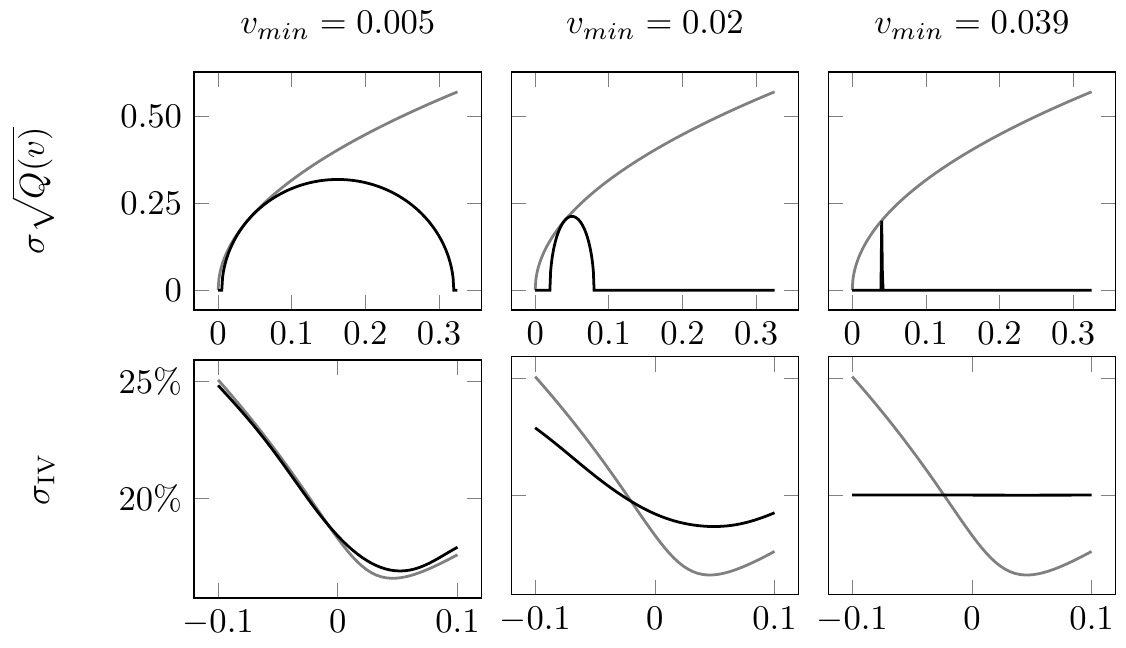}
\caption{Implied volatility smile: from Heston to Black--Scholes.\\
The first row displays the variance process' diffusion function in the Jacobi model (black line) and in the Heston model (gray line).
The second row displays the implied volatility as a function of the log strike $k$ in the Jacobi model (black line) and in the Heston model (gray line).
}\label{fig:smile}
\end{figure}

\begin{figure}
\centering
\includegraphics{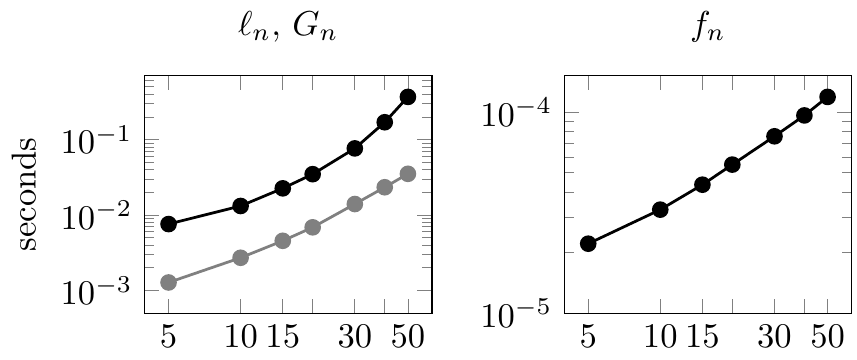}
\caption{Computational performance.\\
The left panel displays the computing time to derive the Hermite moments $\ell_n$ (black line) and the matrix $G$ (gray line) as functions of the order $n$.
The right panel displays the same relation for the Fourier coefficients $f_n$ (black line).
}\label{fig:cputimes}
\end{figure}

\begin{figure}
\centering
\includegraphics{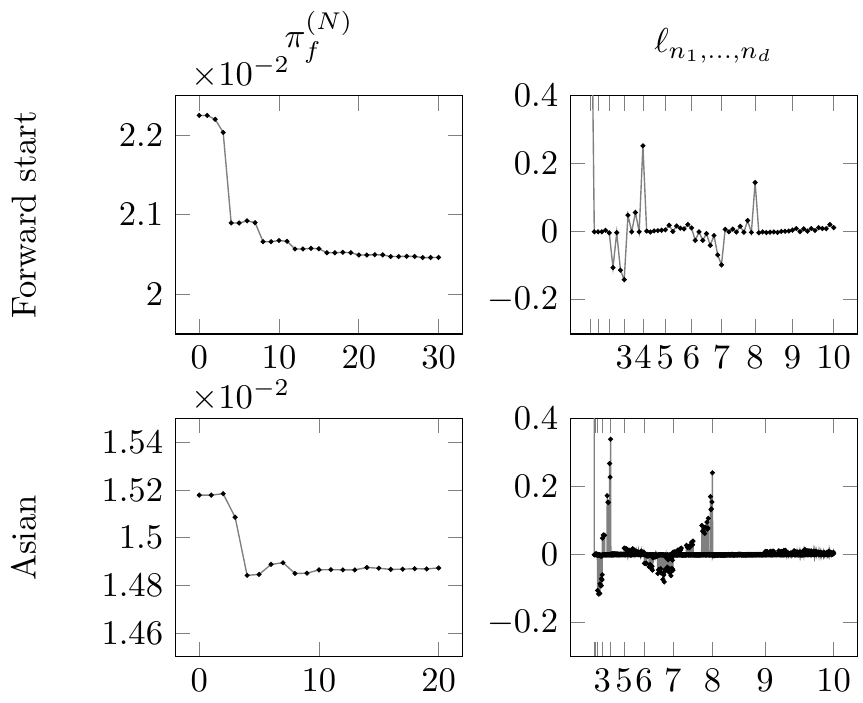}
\caption{Forward start and Asian options.\\
The left panels display the approximation prices as functions of the truncation order $N$. The right panels display the corresponding Hermite moments for multi-orders $n_1+\cdots+n_d=1,\dots,10$.
}\label{fig:fwdasian}
\end{figure}

\processdelayedfloats


\bibliographystyle{plainnat}
\bibliography{JSVM}

\end{document}